\colorlet{Changes@Color}{Black}
\newcommand{\blind}{1}
\setlist[enumerate]{itemsep=0mm}
\crefname{appendix}{Supplement}{Supplements}
\newcommand{\alex}[1]{\textbf{\color{Magenta}{Alex: #1}}}
\newcommand{\fbcom}[1]{\textbf{\color{JungleGreen}{Fan: #1}}}
\newtheorem{thm}{Theorem}[section]
\newtheorem{lem}[thm]{Lemma}
\newtheorem{prop}[thm]{Proposition}
\begin{document}

\setlength{\abovedisplayskip}{0pt}
\setlength{\belowdisplayskip}{0pt}
\setlength{\abovedisplayshortskip}{0pt}
\setlength{\belowdisplayshortskip}{0pt}

% more settings from JASA template
\def\spacingset#1{\renewcommand{\baselinestretch}%
{#1}\small\normalsize} \spacingset{1}

%%%%%%%%%%%%%%%%%%%%%%%%%%%%%%%%%%%%%%%%%%%%%%%%%%%%%%%%%%%%%%%%%%%%%%%%%%%%%%

\if1\blind
{
  \title{\bf Likelihood-based Inference for Partially Observed Epidemics on Dynamic Networks}
%   \author{Fan Bu, Allison E. Aiello, Jason Xu\textsuperscript{*}, and  Alexander Volfovsky\textsuperscript{*}
%   %\thanks{The authors gratefully acknowledge \textit{please remember to list all relevant funding sources in the unblinded version}}
%     \hspace{.2cm}%\\
%     %Department of Statistical Science, Duke University%\\
%     %and \\
%     %Author 2 \\
%     %Department of ZZZ, University of WWW
%     }
  \author[1]{Fan Bu}
  \author[2]{Allison E. Aiello}
  \author[1]{Jason Xu\footnote{Joint last authors. $^\S$Corresponding author: \texttt{jason.q.xu@duke.edu}}$^\S$}
  \newcommand\CoAuthorMark{\footnotemark[\arabic{footnote}]}
  \author[1]{Alexander Volfovsky\protect\CoAuthorMark}
  \affil[1]{Department of Statistical Science, Duke University}
   \affil[2]{Gillings School of Global Public Health, University of North Carolina at Chapel Hill\thanks{The eXFLU data were supported by U01 CK000185, AV and AA were partially supported by R01 EB025021, AV and FB were partially supported by W911NF1810233, JX was partially supported by DMS 1606177.}}
  \date{}
  \maketitle
} \fi

\if0\blind
{
  \bigskip
  \bigskip
  \bigskip
  \begin{center}
    {\LARGE\bf Likelihood-based Inference for Partially Observed Epidemics on Dynamic Networks}
\end{center}
  \medskip
} \fi

\bigskip

%\title{Modeling Dynamic Network Epidemic Processes with Incomplete Observations}
%\title{Likelihood-based Inference for Partially Observed SIR Processes on Adaptive Contact Networks}
%\author{Fan Bu, Alexander Volfovsky, and Jason Xu}
% Author names arranged by alphabetical order
%\affil{\emph{Department of Statistical Science, Duke University}}
%\date{}

%\maketitle

\begin{abstract}
    
We propose a generative model and an inference scheme for epidemic processes on dynamic, adaptive contact networks. Network evolution is formulated as a link-Markovian process, which is then coupled to an individual-level stochastic SIR model, in order to describe the interplay between the dynamics of the disease spread and the contact network underlying the epidemic. A Markov chain Monte Carlo framework is developed for likelihood-based inference from partial epidemic observations, with a novel data augmentation algorithm specifically designed to deal with missing individual recovery times under the dynamic network setting. Through a series of simulation experiments, we demonstrate the validity and flexibility of the model as well as the efficacy and efficiency of the data augmentation inference scheme. The model is also applied to a recent real-world dataset on influenza-like-illness transmission with high-resolution social contact tracking records.

\end{abstract}

% more stuff from JASA template
\noindent%
{\it Keywords:}  stochastic susceptible-infectious-recovered (SIR) model, continuous-time Markov chains, Bayesian data augmentation, conditional simulation, mobile healthcare, network inference.
\color{black}
\vfill

\newpage
\spacingset{1.5} % DON'T change the spacing!

\section{Introduction}\label{sec:intro}

The vast majority of epidemiological models, such as the well-known susceptible-infectious-recovered (SIR) model, rely on compartmentalizing individuals according to their disease status %meaning that individual level information is lost 
\citep{kermack1927contribution}. Classically, such models describe population-level behavior under a ``random mixing'' assumption that an infectious individual can spread the disease homogeneously to any susceptible individual \citep{kermack1927contribution, bailey1975mathematical, anderson1992infectious}. In the last two decades, an alternative assumption---that the disease is transmitted through links in a contact network---has gradually gained popularity. \added{It has been found that the contact network structure can fundamentally impact the behavior of epidemic processes} \citep{wallinga1999perspective,edmunds1997mixes,edmunds2006mixing,mossong2008social,volz2008epidemic,melegaro2011types}; \added{on the other hand, the network structure can in turn be influenced by disease status of individuals as well} \citep{bell2006world,funk2010modelling,eames2010impact,van2013impact}.% \added{, thus leading to increased consideration and analysis of networks in transmittable disease research.}

\added{
This growing interest---alongside technological advances in mobile data---has spurred efforts on collecting high-resolution data that inform the dynamics of the contact network \citep{vanhems2013estimating,Barrat2014Measuring,Voirin2015Combining,kiti2016quantifying,aiello2016design,ozella2018close}. Data of this type has most recently been collected by \citet{israelData1} and \citet{KoreaData}.
However, there is a gap between the demand to analyze such emergent data and available methods: 
% There are many open problems in inference for epidemic models on networks to be addressed; 
a recent review by \citet{BrittonInference20} outlines possible approaches 
to inference for epidemic models on networks
and calls to attention considerable challenges---in particular, key terms such as transition probabilities that appear in central quantities such as likelihood expressions are unavailable. 
Accounting for the relationship between disease spread and the underlying contact network during inference 
% not only 
is crucial to accurately estimating parameters describing the inherent properties of the disease, 
% but 
and moreover has direct practical implications on epidemic control and intervention. Policies such as quarantine or suppression are naturally described as changes to the contact network, and yet modeling approaches are largely restricted to prospective simulations and/or analysis based on static networks in lieu of direct inference from modern data. Recent examples can be seen in analyses for COVID-19 \citep{ImperialCollege2020} and for MERS-CoV transmissions \citep{YangMERS20}. }

\added{To address this methodological gap, we develop a statistical model that describes the mutual interplay between SIR-type epidemics and an underlying dynamic network, together with a tractable inferential framework to fit such models to modern time-resolved datasets. In particular, we propose a stochastic generative model that can be fit to data using likelihood-based methods. We present a Bayesian data augmentation scheme that accommodates partial observations such as missing recovery times that are common in real data while quantifying uncertainty in estimated parameters.
}

\added{The majority of existing work on epidemic processes over networks adopt a deterministic approach based on ordinary differential equations (ODEs) models  %that do \textbf{not} enable likelihood-based inference, nor provide any measure of uncertainty (examples include 
\citep{kiss2012modelling,ogura2017optimal,van2010adaptive,tunc2013epidemics,group2006nonpharmaceutical,volz2007susceptible,shaw2008fluctuating,volz2008sir}. Such approaches do not provide a measure of uncertainty, and do not offer probabilistic interpretations. Our framework builds upon previous stochastic (and likelihood-based) methods that either do not consider network dynamics \citep{britton2002bayesian,dong2012graph,fan2015hierarchical,fan2016unifying}, or do not model the contact network at all \citep{cauchemez2006s,hoti2009outbreaks,britton2010stochastic,fintzi2017efficient,ho2018birth}. Moreover, the proposed inference scheme accommodates partially observed data, with a focus on unknown recovery times in this paper. Handling missing data even \textbf{without} network constraints is already challenging, and often requires simplifying assumptions \citep{cauchemez2008likelihood} or computationally intensive simulation-based inference \citep{he2010plug,andrieu2010particle, pooley2015using}. }

%%%%%%%%%%%%%%%%%%%%%%%%%%%%%%
% commented out 03/26/2020
%\deleted{We choose stochastic models over deterministic models widely used in existing literature to account for the randomness in small-scale epidemic outbreaks and allow for measures of uncertainty in estimation}\citep{britton2010stochastic}.

%Networks can flexibly and informatively represent human interactions, contact patterns, and social structures, which can fundamentally impact the behavior of epidemic processes \citep{wallinga1999perspective,edmunds1997mixes,edmunds2006mixing,mossong2008social,volz2008epidemic,melegaro2011types}. Also, there is evidence that the population network structure can be influenced by infection events as well \citep{bell2006world,funk2010modelling,eames2010impact,van2013impact}. Numerous recent epidemiological methods consider both disease transmission and the contact network as dynamic processes \citep{bansal2010dynamic,pastor2015epidemic,masuda2013predicting,enright2018epidemics, masuda2017temporal,dong2012graph,fan2015hierarchical}, and even as \textbf{coupled} processes: disease spread is impacted by the evolving network structure, which also \textbf{adapts} to contagion progression in that individuals with different health statuses adopt different social behaviors \citep{kiss2012modelling,ogura2017optimal,van2010adaptive,tunc2013epidemics,group2006nonpharmaceutical,volz2007susceptible}.

\ifthenelse{\boolean{tocut}}{
%\alex{Alex: a bunch has been cut.} %\fbcom{Fan: a bit more background is added to the part above, and the paragraph below is slightly edited to make the narrative a bit smoother.}
}{
Static networks where interactions between individuals are assumed fixed and permanent were first introduced to epidemiological models to account for heterogeneous contact patterns and stratified population structures.
% and have since contributed to valuable results that would generalize or deviate from previous ones.
Theoretical works have shown how network structures influence important epidemiological features such as the epidemic threshold, final disease size and equilibrium states \citep{eames2002modeling,PhysRevE.66.047104,PhysRevLett.86.3200,van2009virus,simon2011exact,PhysRevE.86.016116}. Simulation-based studies have investigated potential outcomes of specific transmittable diseases and possible intervention strategies on various networked populations \citep{halloran2002containing,eubank2004modelling, meyers2005network, auranen2004modelling, read2003disease, eames2004monogamous, dube2009review, martinez2009social}. 

%, where interactions between individuals are assumed fixed and permanent, were first introduced to epidemiological models. The consideration of heterogeneous contact patterns and stratified population structures in the analysis of contagion processes has produced numeric theoretical results
%on summary statistics that are of particular importance in classical models, such as the basic reproductive number, the epidemic threshold, the final disease size, and equilibrium states 
%\citep{eames2002modeling,PhysRevE.66.047104,PhysRevLett.86.3200,van2009virus,simon2011exact,PhysRevE.86.016116}, 
%and census-based contact networks or idealized random network models have been used in epidemic simulation, prediction, or intervention
%\citep{halloran2002containing,eubank2004modelling, meyers2005network, auranen2004modelling, read2003disease, eames2004monogamous, dube2009review, martinez2009social}. 

However, static networks
% , where interactions between individuals are assumed fixed and permanent, 
are restrictive and unrealistic. 
Contact networks are dynamic in nature, and thus time-varying interactions between members of a host population cannot be captured by a static framework. This has motivated a considerable amount of works to model the spread of infectious diseases on temporal networks instead of static ones  \citep{bansal2010dynamic,pastor2015epidemic,masuda2013predicting,enright2018epidemics, masuda2017temporal}.
%However, static networks are in fact restrictive and lacking due to the incapability of capturing the time-varying nature of interactions between members of a host population. Temporal networks, in comparison, provide a framework that explicitly describes and even models the changes in contacts over time. A considerable amount of recent works have investigated the effects of temporal networks on the spread of infectious diseases \citep{masuda2017temporal}, and there have been multiple reviews summarizing results and progresses \citep{bansal2010dynamic,pastor2015epidemic,masuda2013predicting,enright2018epidemics}. 
In addition, there has been an awareness that network structures can be influenced by infection events \citep{bell2006world,funk2010modelling,eames2010impact,van2013impact}, thus giving rise to some of the most recent methods that regard networks and epidemics as \textbf{coupled} processes: social interactions may change as a result of contagion progression (for example, illnesses are associated with reduced social contacts, as observed by \cite{van2013impact}), and disease spread is further impacted by the evolving network structure; the contact network in this scenario is not merely dynamic, but \textbf{adaptive}, where links can be established or dissolved based on each individual's health status. %Therefore, studying the interplay of disease dynamics on networks and the dynamics of networks not only informs epidemiologists on epidemic prediction and intervention, but also potentially helps social scientists understand the impact of contagion-like diffusion processes on social behaviors.
Studying the interplay of disease dynamics \textbf{on} networks and the dynamics \textbf{of} networks has produced theoretical guidelines on disease control and social interventions \citep{kiss2012modelling,ogura2017optimal,van2010adaptive,tunc2013epidemics,group2006nonpharmaceutical} and offered great insights into the impact of contagion-like diffusion processes on social networks \citep{Kiss2017}.

Despite the increasing interest in the coupled processes of epidemics and adaptive networks, almost all previous works have focused on theoretically deriving or numerically simulating the forward behavior of the system, rather than conducting inference from real-world observations. This is
% , in practice, 
largely due to the long-standing difficulty of accessing high-quality data on dynamic social contacts. However, thanks to recent technological advances in mobile devices, tracking social interactions in the context of epidemiological research has become possible and even convenient. Modeling disease spread on dynamic and adaptive networks is therefore no longer restricted to prospective extrapolation with presumed parameters and conditions, but is moving toward retrospective inference %and predictive modeling 
based on real data. In the last few years, multiple observational studies have been conducted to estimate transmission pathways or evaluate non-pharmaceutical intervention strategies by collecting high-resolution data on social contacts \citep{vanhems2013estimating,aiello2016design} (\fbcom{Need a few more references on similar studies. - Fan}). Such data provide rich and valuable information on the complex mechanisms of infection progression and network evolution, but they also present several major challenges that suitable methodologies have to address: first, network changes should be modeled along with epidemic events, since they directly determine the contact patterns and are potentially influenced by disease transmission; second, similar observations are often made on a small-scale community (of dozens or hundreds in size, rather than thousands) with relatively few disease cases, so a certain level of randomness is to be expected; third, there is almost always some missingness in the data, particularly in terms of the exact times of individual disease episodes. 

\alex{proposal: move the next two paragraphs to be the second and third paragraph of the paper. then take everything above and make it into the literature review section. it is well written but you need to get to the contribution before page 3. if need to save space, cut everything above (except references for the type of data we will address}
}

The rest of the paper is organized as follows: Section~\ref{sec:background} reviews background on epidemic models (focusing on the stochastic SIR model) and dynamic network processes. Section~\ref{sec:model} formulates the generative model and derives maximum likelihood estimators as well as Bayesian posterior distributions based on the complete data likelihood. Section~\ref{sec:missrecov} describes a Bayesian inference scheme that deals with incomplete observations on individual recovery times. Section~\ref{sec:sim-experiments} and \ref{sec:real-experiments} present experiment results on simulated datasets and a real-world dataset. Finally Section~\ref{sec:discussion} provides further discussions.

\section{Background}
\label{sec:background}
%\subsection{Compartmental Epidemiological Models}
\paragraph{Compartmental Epidemiological Models}

 %compartmentalizing members of a population by their disease status \citep{kermack1927contribution, bailey1975mathematical, anderson1992infectious}.
 %The vast majority of epidemiological models are based on
%compartmentalizing individuals into non-over lapping subsets according to their disease statuses.
Compartmental models divide individuals into non-overlapping subsets according to their disease statuses.
In classical models, the change in these subpopulations over time are described by ordinary differential equations (ODEs) \citep{hethcote2000mathematics}. One widely used model is the susceptible-infectious-recovered (SIR) model, which assumes three disease statuses---susceptible ($S$), \added{infectious} ($I$), and recovered (or removed, $R$). On a closed population of $N$ individuals (with $N$ sufficiently large), the dynamics of the deterministic SIR model can be expressed as
\begin{equation}
\label{eq:SIR-ODE}
    \frac{dS(t)}{dt} = -\beta S(t)I(t), \quad \frac{dI(t)}{dt} = \beta S(t)I(t) - \gamma I(t),
\end{equation}
where $S(t)$ and $I(t)$ refer to the number of susceptible and infectious individuals at time $t$, respectively, and the number of recovered individuals satisfies $R(t) = N -\left[S(t)+I(t)\right]$. \added{The parameters describe the epidemic mechanistically: here $\beta$ is interpreted as the rate of disease transmission \textbf{per contact between} an $S$ individual and an $I$ individual, and $\gamma$ is the rate of recovery for an $I$ individual.}

%The SIR model is most commonly applied to epidemics of diseases that induce long-term immunity after attack, such as mumps and measles (some citation should go here). However, it is also appropriate for infectious diseases that confer a certain level of immunity which spans across the time scope of interest. For example, an individual who contracts a particular strain of influenza develops immunity to it (some other citation), so the SIR model can be used to describe the spread of a type of influenza during a flu season.

By setting the growth rate of infection to be proportional to $S(t)I(t)$, the model in (\ref{eq:SIR-ODE}) implicitly assumes that %the population is well-mixed, such that 
any two members can interact with each other. 
This assumption is easily violated in reality, where an individual only maintains contact with a limited number of others. Moreover, the differential equations can only account for the average, expected behavior of the process, but the transmission of an infectious disease exhibits randomness and uncertainty by nature.

To account for the underlying network structure of a population as well as the random nature of an epidemic process, we adopt an \textbf{individual-level}, \textbf{stochastic} variation of the SIR model, similar to that used in \cite{auranen2000transmission}. An individual of status $S$ (susceptible) at time $t$ ($>0$) changes disease status to $I$ (infectious) at time $t+h$ ($h$ is an infinitesimal quantity) with a probability that is dependent on both the infection rate $\beta$ \textbf{and} his/her contacts at time $t$. An infectious individual at time $t$ becomes a member of the $R$ (recovered) sub-population at time $t+h$ with a probability determined by the recovery rate $\gamma$. Specifically, for any susceptible individual $p_1$ and infectious individual $p_2$ in the population at $t$, conditioned on the current overall state of the process, $\mathcal{Z}_t$, then
\begin{equation}
    \label{eq:infection-background}
        Pr(p_1 \text{ gets infected by } p_2 \text{ at } t+h \mid \mathcal{Z}_t) = \beta h + o(h)
\end{equation}
if $p_1$ and $p_2$ are in contact at $t$, and
\begin{equation}
\label{eq:recovery-background}
        Pr(p_2 \text{ recovers at } t+h \mid  \mathcal{Z}_t) = \gamma h + o(h).
\end{equation}

%\subsection{Basic Network Concepts}
\paragraph{Basic Network Concepts}

A network, or a graph, is a two-component set, $\mathcal{G} = \{\mathcal{V},\mathcal{E}\}$, where $\mathcal{V}$ is the set of $N$ nodes and $\mathcal{E}$ is the set of links. A network can be represented by its ``adjacency matrix'', $\mathbf{A}$, where $\mathbf{A}_{ij}=1$ indicates there is a link from node $i$ to $j$. Since most infectious diseases can be transmitted in both directions through a contact, we assume that the adjacency matrix is \textbf{symmetric}, $\mathbf{A}_{ij}=\mathbf{A}_{ji}$.

A special network structure is the fully connected network (or the complete graph), $\mathcal{K}_N$, and its adjacency matrix $\mathbf{A}$ satisfies $\mathbf{A}_{ij} = 1$ for any $i \neq j$. This network structure corresponds to the widely adopted ``random mixing'' assumption in epidemiological models, %which states that any susceptible individual is exposed to \textbf{all} the infectious ones. Such an assumption, in fact, is unrealistic and restrictive, and therefore 
which, as stated before, may be restrictive and unrealistic. Therefore,
in the rest of the paper, we instead consider \textbf{arbitrary} network structures underlying the population. 

\paragraph{Temporal and Adaptive Networks}
%\subsection{Temporal and Adaptive Networks}

Interactions between individuals are dynamic in nature, and such dynamics is important when modeling epidemic processes (\cite{masuda2017temporal}; also as demonstrated later in Section~\ref{sec:infer-complete}).
We consider a continuous-time link-Markovian model for temporal networks \citep{clementi2010flooding,ogura2016stability}. \added{Following the symmetric network assumption stated above,} for two individuals $i$ and $j$ \added{($i < j$)} who are not in contact at time $t$, they form a link at time $t+h$ ($h \ll 1$) with probability $\alpha h$, where $\alpha$ is the link activation rate. Similarly, if there is an edge between $i$ and $j$ at time $t$, then the edge is deleted at time $t+h$ with probability $\omega h$, where $\omega$ is some link termination rate.

If, instead, individuals establish and dissolve their social links with rates that vary according to their disease statuses, then the evolution of the network is coupled to the epidemic process and thus becomes \textbf{adaptive}.
This mechanism can be described via instantaneous rates of single-link changes. For any two individuals $i$ and $j$, their corresponding entry in the adjacency matrix is modeled as a $\{0,1\}$-valued Markov process, $\mathbf{A}_{ij}(t), t>0$. Suppose that at time $t$, $i$ is of status $A$, $j$ is of status $B$,\footnote{Here $A,B \in \{S,I,R\}$\added{, and we only consider $i<j$ since the network is assumed symmetric}.} then for an infinitesimal quantity $h$,

\begin{align}
    \label{eq:activation-background}
    Pr(\mathbf{A}_{ij}(t+h)\added{=\mathbf{A}_{ji}(t+h)}=1 \mid \mathbf{A}_{ij}(t)=0) &= \alpha_{AB} h + o(h);\\
    \label{eq:termination-background}
    Pr(\mathbf{A}_{ij}(t+h)\added{=\mathbf{A}_{ji}(t+h)}=0 \mid \mathbf{A}_{ij}(t)=1) &= \omega_{AB} h + o(h).
\end{align}

\added{Here $\alpha_{AB}(= \alpha_{BA})$ is the activation rate for an $A$-$B$ type link, and similarly, $\omega_{AB}(= \omega_{BA})$ is the termination rate for an $A$-$B$ type link.}

%\section{Adaptive Network Epidemic Processes and Inference with Complete Data}
\section{Epidemic Processes over Adaptive Networks}
\label{sec:model}
\subsection{The Generative Model}
\label{sec:model-overview}
In this subsection, we lay out a \added{stochastic data generative process (referred to as the ``generative model'')} for the joint evolution of an individualized SIR process on a networked population and the dynamics of the contact network. 
\added{In contrast to the ODE literature and existing network models described in Section \ref{sec:intro},
the key feature of the model is the \textbf{interplay} between epidemic progression and network adaptation}. On one hand, transmission of infection depends on the existence of susceptible-infectious links, which may change through time; on the other hand, network links temporally update in a manner that in turn depends on individual disease status. 

We formulate this complex process as a continuous-time Markov chain that comprises all individual-level events described in Section \ref{sec:background}. The joint evolution of the individual Poisson processes described by (\ref{eq:infection-background})-(\ref{eq:termination-background}) can be described via a competing risks construction. 
%The model can be described via the competing risks construction of competing Poisson processes with exponentially-distributed wait times at the individual level, as described in (\ref{eq:infection-background})-(\ref{eq:termination-background}). 
By the Markov property, the time until each type of event has an exponential waiting time, and thus the time to next event in the joint process remains exponentially distributed \citep{guttorp2018stochastic}. Events occur stochastically and are of one of the four types:
%Throughout the process, the system is updated stochastically with one event (for one individual or a pair) at a time, and each event is of one of the four types: 

\vspace*{-0.1in}
\begin{itemize}
\setlength\itemsep{0em}
    \item \textbf{Infection}: The disease is transmitted through a link between an $S$ (susceptible) and an $I$ (infectious) individual ($S$-$I$ link) with rate $\beta$;
    \item \textbf{Recovery}: Each $I$ individual recovers with rate $\gamma$ independently;
    \item \textbf{Link activation}: A link is formed at rate $\alpha_{AB}$\added{($=\alpha_{BA}$)} between an individual of status $A$ and another of status $B$ who are not connected, where $A,B \in \{S,I,R\}$;
    %\footnote{Under the SIS model, $A,B \in \{S,I\}$.}
    \item \textbf{Link termination}: An existing link is removed at rate $\omega_{AB}$\added{($=\omega_{BA}$)} between an individual of status $A$ and another of status $B$, where $A,B \in \{S,I,R\}$.
\end{itemize}
This formulation will allow for joint inference of both disease spread and network evolution. As illustrated in the next subsection, inference is straightforward when all the events are fully observed. Furthermore, this formulation implies a relatively simple generative process at the population level. 
Conditioned on the current state of the process $\mathcal{Z}_t$ at time $t\,(>0)$, the very next event of the entire process is the 
\textbf{earliest} event that occurs among the four competing processes by the superposition property:
\vspace*{-0.1in}
\begin{itemize}
\setlength\itemsep{0em}
    \item \textbf{Infection}: An infection occurs with rate $\beta S\kern-0.14em I(t)$, where $S\kern-0.14em I(t)$ is the number of $S$-$I$ links at time $t$;
    \item \textbf{Recovery}: A recovery occurs with rate $\gamma I(t)$, where $I(t)$ is the number of infectious individuals at time $t$;
    \item \textbf{Link activation}: An $A$-$B$ link is established with rate $\alpha_{AB}M_{AB}^d(t)$, where $M_{AB}^d(t)$ is the number of disconnected $A$-$B$ pairs at time $t$;
    \item \textbf{Link termination}: An $A$-$B$ link is dissolved with rate $\omega_{AB}M_{AB}(t)$, where $M_{AB}(t)$ is the number of connected $A$-$B$ pairs at time $t$.
\end{itemize}
We may interpret this generative model as a generalization of two simpler models. If we set $\alpha_{AB} \equiv \alpha$ and $\omega_{AB} \equiv \omega$ for any status $A$ and $B$, the coupled process reduces to a \textbf{decoupled} process, where network evolution is \textbf{independent} of individual disease status. Moreover, if we fix $\alpha \equiv \omega \equiv 0$, the process is further reduces to an SIR process over a \textbf{static} network.

Here we assume that the population size $N$ is fixed, and \added{that at $t=0$, the initial network $\mathcal{G}_0$ as well as $I(0)$ initial infection cases are observed}. We summarize a list of  model parameters and notation %used for model formulation and inference 
in Table~\ref{tab:notation}.

\begin{table}[H]
    \centering
    \small
    \caption{Table of parameters and notation.}
    \begin{tabular}{ll}
    \toprule
        \textbf{Parameter} &  \textbf{Description}\\
    %\midrule
    \hline
        $\beta$ &  infection rate \\
        $\gamma$ &  recovery rate\\
        $\alpha$ & link activation rate for a currently disconnected pair \\
        $\omega$ & link termination rate for a currently connected pair \\
        $\alpha_{AB}$ & link activation rate for a currently disconnected A-B pair\\
        $\omega_{AB}$ & link termination rate for a currently connected A-B pair\\
    \toprule
        \textbf{Notation} &  \textbf{Description}\\
    %\midrule
    \hline
        $N$ & total population size (assumed to remain fixed throughout the process) \\
        $T_\text{max}$ & maximum observation time \\
        $\mathcal{Z}_t$ & state of the process at time $t$ (including the epidemic status\\
        &  of every individual and the social network structure at time $t$) \\
        $\mathcal{G}_t$ & social network structure (a graph) at time $t$\\
        $S(t), I(t)$ & numbers of susceptible/infectious individuals in the population at time $t$\\
        $H(t)$ & number of healthy (not infectious) individuals in the population at time $t$ \\
        $I_k(t)$ & number of infectious individuals in person $k$'s neighborhood at time $t$\\
        $S\kern-0.14em I(t)$ & number of $S$-$I$ links  in the network at time $t$\\
        $M(t)$ & total number of edges in the network at time $t$\\
        $M_{AB}(t)$ & number of $A$-$B$ links at time $t$\\
        $M_{AB}^d(t)$ & number of disconnected $A$-$B$ pairs at time $t$ \\
        $n_E, n_R$ & counts of infection events and recovery events in the process\\
        $n_N$ & count of network events in the process\\
        & (each event is the activation or termination of a single link)\\
        $C, D$ & total counts of link activation/termination \\
        $C_{AB}, D_{AB}$ & counts of link activation/termination events for $A$-$B$ pairs \\
    \bottomrule
    \end{tabular}
    \label{tab:notation}
\end{table}

\subsection{Complete Data Likelihood and Parameter Inference}
\label{sec:likelihood-and-inference}
\paragraph{Derivation of complete data likelihood}
\added{Suppose $n_E$ infection events and $n_R$ recovery events are observed in total.
Let $i_k$ be the infection time for individual $k$ ($k = 1,2,\ldots, n_E$), $r_k$ be $k$'s recovery time (if $r_k > T_{\max}$, $k$'s recovery is not observed), and without loss of generality, set $i_1 = 0$. }
Recall that the widely used ``random mixing'' assumption %\footnote{The ``random mixing'' assumption suggests that an infectious person can spread the disease to \textbf{any} other susceptible individual in the population.} 
in classical epidemiological models is equivalent to assuming that the contact network is a complete graph, $\mathcal{K}_N$, and the \added{individual-based} complete data likelihood under this assumption is \footnote{\added{Note that this expression differs from the \textbf{population-level} complete likelihood \citep{becker1999statistical}. When epidemic events are tied to individuals, i.e. a recovery time is associated to a specific infection event, we must use the individual-based likelihood instead (similar to that in \cite{auranen2000transmission}).}}
\begin{equation*}
    \mathcal{L}(\beta, \gamma) = p(\text{epidemic events}|\beta, \gamma) \nonumber\\
    = \gamma^{n_R} \prod_{k=2}^{n_E}\left[\beta I(i_k)\right] \exp\left(-\int_0^{T_{\max}}\left[\beta S(u)I(u) + \gamma I(u)\right]du\right).
\end{equation*}

% \jxadd{\bf I restructured this to condition on the network once rather than walking through the static and dynamic cases. Let me know if this is ok:}
To account for the contact network, let $\mathcal{G}_t$ be an arbitrary network, and begin by assuming that the \textbf{entire network process} $\{\mathcal{G}_t: 0<t<T_{\max}\}$ is fully observed. % (or all the network changes during the contagion process are fully observed),
Explicitly accounting for the number of infectious contacts per individual at the time of infection as well as the total number of $S$-$I$ links in the system, the complete data likelihood becomes:

%Now assume that the network is an arbitrary \textbf{static} network $\mathcal{G}$. 
\begin{equation}
\label{eq:epi-alone-lik}
    \mathcal{L}(\beta, \gamma|\mathcal{G}) = \gamma^{n_R} \prod_{k=2}^{n_E}\left[\beta I_k(i_k)\right] \exp\left(-\int_0^{T_{\max}}\left[\beta S\kern-0.14em I(u) + \gamma I(u)\right]du\right).
\end{equation}
Here $I_k(i_k)$ denotes the number of \textbf{infectious neighbors of} person $k$ at his time of infection $i_k$ and $S\kern-0.14em I(t)$ denotes the number of $S$-$I$ links in the system at time $t$. 
We see that the dynamic nature of the network is implicitly subsumed into the terms $I_k(i_k)$'s and $SI(u)$. To clarify this point, note that the same likelihood holds for a \textit{static} network $\mathcal{G}$. As neighborhoods $I_k(i_k)$ are fixed in the static case, one could further simplify \eqref{eq:epi-alone-lik} using a constant $I_k(i_k) = I_k$ for all times $i_k$.

% This is resolved on 03/31/2020
%\alex{I wonder if we want to write $SI(t)$ as $S\kern-0.14em I(t)$---note the subtle difference in the distance between the $S$ and the $I$.}

%If the network is not static, but the \textbf{entire network process} $\{\mathcal{G}_t: 0<t<T_{\max}\}$ were given (or all the network changes during the contagion process are fully observed), then the form of the  likelihood \eqref{eq:epi-alone-lik} remains unchanged \textbf{conditional} on $\{\mathcal{G}_t\}$. That is, the dynamic nature of the network is implicitly subsumed into the terms $I_k(i_k)$'s and $SI(u)$. Note that the neighborhoods $I_k(i_k)$ are fixed under a static network, i.e. one could simplify the expression using a constant $I_k(i_k) = I_k$ for all times $i_k$.

Equation \eqref{eq:epi-alone-lik} serves as a point of departure toward network dynamics. As a stepping stone, we first consider the simpler \textbf{decoupled} case in which the network and epidemic evolve \textbf{independently}. Here the edge activation rate $\alpha$ and deletion rate $\omega$\added{, as well as total number of activated and terminated edges denoted $C$ and $D$, do not depend on disease status}. Given an initial network $\mathcal{G}_0$, the network process likelihood can be easily written as
\begin{align}
\label{eq:decoupled-net-lik}
    &\mathcal{L}(\alpha, \omega|\mathcal{G}_0) = p(\text{network events}|\alpha, \omega,\mathcal{G}_0) \nonumber\\
    =& \alpha^C \omega^{D} \prod_{\ell=1}^{n_N}\left[\left(\frac{N(N-1)}{2}-M(s_{\ell})\right)^{1-D_\ell}M(s_{\ell})^{D_\ell}\right] \nonumber \\
    &\times \exp\left(-\alpha\frac{N(N-1)}{2}T_{\max}+ (\alpha-\omega)\int_{0}^{T_{\max}}M(u)du\right).
\end{align}
Here $s_{\ell}$ is the time of the $\ell$th network event, and $D_{\ell} = 1$ if this event is a \textbf{link termination} and otherwise $D_{\ell} = 0$.
%Therefore, when the epidemic process and network process are decoupled, 
\added{By independence, the complete data likelihood in this decoupled case is simply a product of Equations (\ref{eq:epi-alone-lik}) and (\ref{eq:decoupled-net-lik})}:
\begin{align}
\label{eq:decoupled-process-lik}
    &\mathcal{L}(\beta, \gamma, \alpha, \omega|\mathcal{G}_0) %= p(\text{epidemic events},\text{network events}|\beta, \gamma, \alpha, \omega,\mathcal{G}_0) \nonumber \\
    = p(\text{epidemic events}|\beta, \gamma,\mathcal{G}_t) \cdot  p(\text{network events}|\alpha, \omega,\mathcal{G}_0) \nonumber \\
    =& \beta^{n_E - 1} \gamma^{n_R} \alpha^C \omega^{D} \prod_{k=2}^{n_E}\left[ I_k(i_k)\right] \prod_{\ell=1}^{n_N}\left[\left(\frac{N(N-1)}{2}-M(s_{\ell})\right)^{1-D_\ell}M(s_{\ell})^{D_\ell}\right]  \\
    & \times \exp\left(-\int_0^{T_{\max}}\left[\beta SI(u) + \gamma I(u) + (\omega - \alpha)M(u)\right]du -\alpha\frac{N(N-1)}{2}T_{\max}\right) \nonumber.
\end{align}
\added{Finally, we allow
% we relax these assumptions to consider the coupled process---
link activation and termination 
% are now
to be
dependent on individual disease status, yielding an \textbf{adaptive} network.} We introduce some notation;
it is natural to assume that the $S$ and $R$ populations behave identically from the perspective of the network process:
% \begin{equation*}
%     \alpha_{R \cdot } \equiv \alpha_{S \cdot},  \text{ and } \omega_{R \cdot } \equiv \omega_{S \cdot}.
% \end{equation*}
\begin{equation*}
\added{
    \alpha_{R A } \equiv \alpha_{S A},  \text{ and } \omega_{R A } \equiv \omega_{S A},\,  \forall A\in \{S,I,R\}.
}
\end{equation*}
%Referring to them as the healthy population ; 
Let $H(t) = R(t)+S(t) = N - I(t)$ denote the number of such ``healthy" individuals at time $t$. Naturally the term ``$H$-$H$ link'' represent an $S$-$S$ link, an $S$-$R$ link, or an $R$-$R$ link, and the term ``$H$-$I$ link'' represents an $S$-$I$ link or $R$-$I$ link. We also define $g(p,t)$ as the indicator function of infectiousness, i.e. $g(p,t) = 1$ if person $p$ is infected at time $t$ and $g(p,t) = 0$ otherwise.

\added{Denote the ordered epidemic and network events together  as $\{e_j = (t_j, p_{j1},p_{j2})\}_{j=1}^n$, with $n = n_E + n_R + n_N$. Here $t_j$ ($j=1,2,\ldots,n$) denote the event times %with $t_1 < t_2 < \ldots < t_n$ 
and $t_1=0$ is the infection time of the first patient.} If $e_j$ is a network event,  $p_{j1}$ and $p_{j2}$ are the two individuals getting connected or disconnected, and if $e_j$ is an epidemic event, let $p_{j1}$ be the person getting infected or recovered and set $p_{j2}=0$. Furthermore let event type indicators $F_j, C_j, D_j$ take the value $1$ only if $e_j$ is an infection, a link activation, and a link deletion, respectively, and $0$ otherwise.

The contribution of all network events to the complete data likelihood is in essence of the same form as (\ref{eq:decoupled-net-lik}), except that for every activation or termination event the link type has to be considered. Then the likelihood component of the adaptive network process is
\begin{equation*}
    \alpha_{SS}^{C_{HH}}\alpha_{SI}^{C_{HI}}\alpha_{II}^{C_{II}}\omega_{SS}^{D_{HH}}\omega_{SI}^{D_{HI}}\omega_{II}^{D_{II}}\prod_{j=2}^n \tilde{M}(t_j) \exp\left(-\int_0^{T_{\max}}\left[\tilde{\alpha}^T\mathbf{M}_{\max}(t) + (\tilde{\omega}-\tilde{\alpha})^T\mathbf{M}(t)\right] dt \right),
\end{equation*}
where
\begin{align}
    \tilde{M}(t_j) =& \left[(\alpha_{SS}M_{HH}^d(t_j))^{C_j}(\omega_{SS}M_{HH}(t_j)^{D_j}\right]^{(1-g(p_{j1},t_j))(1-g(p_{j2},t_j))} \nonumber  \\
    & \times \left[(\alpha_{SI}M_{HI}^d(t_j))^{C_j}(\omega_{SI}M_{HI}(t_j))^{D_j}\right]^{\lvert g(p_{j1},t_j)-g(p_{j2},t_j)\rvert}  \\ 
    & \times \left[(\alpha_{II}M_{II}^d(t_j))^{C_j}(\omega_{II}M_{II}(t_j))^{D_j}\right]^{g(p_{j1},t_j)g(p_{j2},t_j)} \nonumber \\
    \tilde{\alpha} =& (\alpha_{SS}, \alpha_{SI}, \alpha_{II})^T,
    \label{eq:alpha-vector}\\
    \tilde{\omega} =& (\omega_{SS}, \omega_{SI}, \omega_{II})^T,
    \label{eq:omega-vector}\\
    \mathbf{M}_{\max}(t) =& \left(\frac{H(t)(H(t)-1)}{2}, H(t)I(t), \frac{I(t)(I(t)-1)}{2}\right)^T,\label{eq:Mmax-vector}\\
    \mathbf{M}(t) =& (M_{HH}(t_j), M_{HI}(t),
    M_{II}(t))^T. \label{eq:M-vector}
\end{align}
Therefore, given the initial network structure $\mathcal{G}_0$ and one infectious case at time $0$, the complete data likelihood of the coupled process can be expressed as \footnote{\added{The likelihood derived here can be slightly modified to describe an SIS-type epidemic instead; see \cref{app:sis-lik}.}}
\begin{align} 
\label{eq:SIR-comp-lik}
    &\mathcal{L}(\beta, \gamma, \tilde\alpha, \tilde\omega|\mathcal{G}_0) = p(\text{epidemic events},\text{network events}|\beta, \gamma, \tilde\alpha, \tilde\omega,\mathcal{G}_0) \nonumber \\
    =& \gamma^{n_R} \beta^{n_E - 1} \alpha_{SS}^{C_{HH}}\alpha_{SI}^{C_{HI}}\alpha_{II}^{C_{II}}\omega_{SS}^{D_{HH}}\omega_{SI}^{D_{HI}}\omega_{II}^{D_{II}} \prod_{j=2}^n \left[ \tilde{M}(t_j) \left(I_{p_{j1}}(t_j)\right)^{F_j} \right] \nonumber \\
    & \times \exp\left(-\int_0^{T_{\max}}\left[\beta S\kern-0.14em I(t) + \gamma I(t) + \tilde{\alpha}^T\mathbf{M}_{\max}(t) + (\tilde{\omega}-\tilde{\alpha})^T\mathbf{M}(t) \right]dt\right).
\end{align}

\paragraph{Inference Given Complete Event Data}
The likelihood function (\ref{eq:SIR-comp-lik}) will be used toward inference under missing data, but immediately suggests straightforward procedures when the process is fully observed. Given the complete event data $\{e_j\}_{j=1}^n$ and the initial conditions of the process $\mathcal{G}_0$ and $I(0)$, the only unknown quantities in (\ref{eq:SIR-comp-lik}) are the model parameters $\Theta = \{\beta, \gamma, \alpha_{SS},\alpha_{SI},\alpha_{II}, \omega_{SS},\omega_{SI},\omega_{II}\}$. The following Theorems state results on maximum likelihood estimation as well as Bayesian estimation.

\begin{thm}[Maximum likelihood estimation]
\label{thm:MLEs}
Following the likelihood function in (\ref{eq:SIR-comp-lik}), given $\mathcal{G}_0$ and complete event data $\{e_j\}$, the MLEs of the model parameters are given as follows:
\begin{align}
    %\label{eq:MLEs-st}
    \hat{\beta} = \frac{n_E - 1}{\sum_{j=1}^{n}SI(t_j)(t_j - t_{j-1})},\quad & 
    \hat{\gamma} = \frac{n_R}{\sum_{j=1}^{n}I(t_j)(t_j - t_{j-1})}, \nonumber \\
    \hat{\alpha}_{SS} = \frac{C_{HH}}{\sum_{j=1}^{n}\left[\frac{H(t_j)(H(t_j)-1)}{2} - M_{HH}(t_j)\right](t_j - t_{j-1})},\quad &\hat{\omega}_{SS} = \frac{D_{HH}}{\sum_{j=1}^{n}M_{HH}(t_j)(t_j - t_{j-1})},\nonumber \\
    %\label{eq:MLEs}
    \hat{\alpha}_{SI} = \frac{C_{HI}}{\sum_{j=1}^{n}\left[H(t_j)I(t_j) - M_{HI}(t_j)\right](t_j - t_{j-1})}, \quad &\hat{\omega}_{SI} = \frac{D_{HI}}{\sum_{j=1}^{n} M_{HI}(t_j)(t_j - t_{j-1})}, \nonumber \\
    \hat{\alpha}_{II} = \frac{C_{II}}{\sum_{j=1}^{n}\left[\frac{I(t_j)(I(t_j)-1)}{2} - M_{II}(t_j)\right](t_j - t_{j-1})},
     \quad &
    \hat{\omega}_{II} = \frac{D_{II}}{\sum_{j=1}^{n}M_{II}(t_j)(t_j - t_{j-1})}. \nonumber
    %\label{eq:MLEs-en}
\end{align}
\end{thm}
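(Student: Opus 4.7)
My plan is to show that the log-likelihood derived from (\ref{eq:SIR-comp-lik}) decouples into eight independent, strictly concave one-dimensional optimization problems, each admitting a closed-form maximizer. Taking logs of (\ref{eq:SIR-comp-lik}), the parameter-dependent portion reads
\begin{align*}
    \log \mathcal{L} = \;& (n_E-1)\log\beta + n_R\log\gamma + \sum_{AB} C_{AB}\log\alpha_{AB} + \sum_{AB} D_{AB}\log\omega_{AB} \\
    & - \int_0^{T_{\max}}\!\bigl[\beta\, S\kern-0.14em I(t) + \gamma\, I(t) + \tilde\alpha^\top \mathbf{M}_{\max}(t) + (\tilde\omega - \tilde\alpha)^\top \mathbf{M}(t)\bigr]\,dt + \text{const},
\end{align*}
where $AB$ ranges over $HH, HI, II$ and the constant absorbs all terms not involving the parameters (including the products $\prod_j I_{p_{j1}}(t_j)^{F_j}$ and the pair-count factors inside $\tilde M(t_j)$).

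The key observation is that expanding $\tilde\alpha^\top\mathbf{M}_{\max}(t) + (\tilde\omega - \tilde\alpha)^\top\mathbf{M}(t)$ componentwise, using the definitions (\ref{eq:alpha-vector})--(\ref{eq:M-vector}), yields a coefficient of each $\alpha_{AB}$ equal to the number of currently disconnected $A$-$B$ pairs at time $t$, and a coefficient of each $\omega_{AB}$ equal to $M_{AB}(t)$. Consequently every parameter enters $\log\mathcal{L}$ only through a single term of the form $a\log\theta - b\theta$ with known $a>0, b>0$. Setting the derivative to zero and solving gives $\hat\theta = a/b$; for $\beta$ this becomes $\hat\beta = (n_E-1)/\int_0^{T_{\max}} S\kern-0.14em I(t)\,dt$, and the remaining seven estimators follow by identical reasoning. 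Strict concavity in each coordinate (the map $\theta\mapsto a\log\theta - b\theta$ has second derivative $-a/\theta^2<0$) certifies that each stationary point is the unique global maximizer.

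To arrive at the discrete sums stated in the theorem, I would exploit the fact that $S\kern-0.14em I(t)$, $I(t)$, $H(t)$ and every $M_{AB}(t)$ are piecewise constant step functions that change value only at the ordered event times $\{t_j\}_{j=1}^n$. Each integral therefore collapses to a Riemann sum of the form $\sum_{j=1}^n f(t_j)(t_j-t_{j-1})$, which exactly reproduces the denominators listed in the theorem. The main source of friction is bookkeeping rather than mathematical depth: one must carefully verify that the $\alpha_{AB}$-coefficient reduces to the disconnected-pair count $[\mathbf{M}_{\max}(t)]_{AB} - M_{AB}(t)$ after combining the two inner-product terms, and that any parameter powers appearing inside $\tilde M(t_j)$ are already accounted for in the prefactor exponents $C_{AB}, D_{AB}$ so that the total exponent of each parameter matches its associated event count. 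Once that accounting is settled, no further computation is required.
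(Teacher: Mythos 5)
Your proof is correct and follows essentially the same route as the paper's: take the log of (\ref{eq:SIR-comp-lik}), use the piecewise constancy of the counting processes to collapse the integral into the sum $\sum_{j=1}^{n} f(t_j)(t_j - t_{j-1})$, and set each partial derivative of the resulting separable objective of the form $a\log\theta - b\theta$ to zero. Your explicit concavity check and your care that each rate's total exponent equals its event count (so the factors inside $\tilde{M}(t_j)$ are not double-counted against the prefactor exponents $C_{AB}, D_{AB}$) are sound refinements of the same argument rather than a different approach.
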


The above results can be directly obtained by setting all partial derivatives of the log-likelihood to zero. The detailed proof is provided in \cref{app:aux-proofs}.

\begin{thm}[Bayesian inference with conjugate priors] 
\label{thm:Bayesian} Under Gamma priors
%If Gamma priors are adopted for all the parameters:
\begin{equation*}
    \beta \sim Ga(a_{\beta}, b_{\beta}),\quad \gamma \sim Ga(a_{\gamma}, b_{\gamma}),\quad \alpha_{\cdot \cdot} \sim Ga(a_{\alpha}, b_{\alpha}),\quad \omega_{\cdot \cdot} \sim Ga(a_{\omega}, b_{\omega})
\end{equation*}
and given initial network $\mathcal{G}_0$ and complete data $\{e_j\}$, the posterior distributions of model parameters under likelihood (\ref{eq:SIR-comp-lik}) are given by
\begin{equation}
    \label{eq:posteriors}
    \begin{array}{ll}
    \beta | \{e_j\} \sim Ga(a_{\beta} + (n_E - 1), b_{\beta} + (n_E - 1)/\hat{\beta} ),  &   \gamma | \{e_j\} \sim Ga\left(a_{\gamma} + n_R, b_{\gamma}+ n_R/\hat{\gamma}\right),\\
    \alpha_{SS}| \{e_j\} \sim Ga \left( a_{\alpha}+C_{HH}, b_{\alpha}+C_{HH}/\hat{\alpha}_{SS} \right), & \omega_{SS} | \{e_j\} \sim Ga\left(a_{\omega}+D_{HH}, b_{\omega}+D_{HH}/\hat\omega_{SS}\right), \\
    \alpha_{SI}| \{e_j\} \sim Ga\left(a_{\alpha}+C_{HI}, b_{\alpha}+C_{HI}/\hat\alpha_{SI}\right), &
    \omega_{SI} | \{e_j\} \sim Ga\left(a_{\omega}+D_{HI}, b_{\omega}+D_{HI}/\hat\omega_{SI}\right), \\
    \alpha_{II}| \{e_j\} \sim Ga\left(a_{\alpha}+C_{II}, b_{\alpha}+C_{II}/\hat\alpha_{II}\right),
     & 
    \omega_{II} | \{e_j\} \sim Ga\left(a_{\omega}+D_{II}, b_{\omega}+D_{II}/\hat\omega_{II}\right),
    \end{array}
\end{equation}
where $\hat{\beta}, \hat{\gamma}, \hat{\alpha}_{SS}, \hat\alpha_{SI}, \hat\alpha_{II}, \hat\omega_{SS}, \hat\omega_{SI}, \hat\omega_{II}$ are the MLEs defined in Theorem~\ref{thm:MLEs}.
\end{thm}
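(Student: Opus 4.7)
The plan is to exploit the conditional conjugacy of the Gamma family with respect to each rate parameter appearing in the likelihood (\ref{eq:SIR-comp-lik}). Inspecting that expression, each of the eight parameters $\beta, \gamma, \alpha_{SS}, \alpha_{SI}, \alpha_{II}, \omega_{SS}, \omega_{SI}, \omega_{II}$ enters only through (i) a polynomial factor of the form $\theta^{k}$, where $k$ counts events of the corresponding type, and (ii) an exponential factor $\exp(-\theta \cdot E_\theta)$, where $E_\theta$ is the time integral of the appropriate compatibility count (e.g.\ $\int_0^{T_{\max}} S\kern-0.14em I(u)\, du$ for $\beta$, $\int_0^{T_{\max}} M_{HH}(u)\, du$ for $\omega_{SS}$, and so on). Crucially, parameters do not appear inside any other parameter's exponent, so the likelihood factorizes multiplicatively across the eight rates.

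First I would write the likelihood in the separated form $\mathcal{L}(\beta,\gamma,\tilde\alpha,\tilde\omega\mid\mathcal{G}_0) = K\cdot\prod_{\theta}\theta^{k_\theta}\exp(-\theta E_\theta)$, where $K$ collects factors not depending on any parameter (including the $I_{p_{j1}}(t_j)^{F_j}$ terms, the $M_{\cdot\cdot}(t_j)$ and $M_{\cdot\cdot}^d(t_j)$ terms in $\tilde M(t_j)$, and powers of $N$). This step is purely bookkeeping: read off from (\ref{eq:SIR-comp-lik}) the exponents $k_\beta = n_E-1$, $k_\gamma = n_R$, $k_{\alpha_{SS}} = C_{HH}$, etc., and match the exponential term to the coefficient of each parameter in $\tilde\alpha^T\mathbf{M}_{\max}(t) + (\tilde\omega-\tilde\alpha)^T\mathbf{M}(t) + \beta S\kern-0.14em I(t) + \gamma I(t)$.

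Next I would multiply by the independent Gamma priors, $\pi(\theta)\propto\theta^{a_\theta-1}\exp(-b_\theta\theta)$, and observe that the joint posterior factorizes across parameters. Each marginal kernel has the form $\theta^{(a_\theta+k_\theta)-1}\exp\bigl(-(b_\theta+E_\theta)\theta\bigr)$, which is exactly the kernel of a $Ga(a_\theta+k_\theta,\, b_\theta+E_\theta)$ density. The last step is to rewrite the integral exposures $E_\theta$ as the Riemann sums appearing in (\ref{eq:posteriors}): because $S\kern-0.14em I(t), I(t), M_{\cdot\cdot}(t), H(t)$ and $M_{\cdot\cdot}^d(t)$ are all piecewise constant between successive event times $t_{j-1}$ and $t_j$, each integral $\int_0^{T_{\max}} f(u)\, du$ collapses to $\sum_{j=1}^{n} f(t_j)(t_j-t_{j-1})$. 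Substituting the MLE identities from Theorem~\ref{thm:MLEs} (so that, e.g., $\sum_j S\kern-0.14em I(t_j)(t_j-t_{j-1}) = (n_E-1)/\hat\beta$) puts each rate parameter into the claimed form.

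The only substantive point to verify carefully is the piecewise-constancy argument justifying the conversion of integrals to sums; once that is in hand, the rest is conjugate-pair identification. A minor pitfall to watch for is the $H$-$I$ activation/termination exposures, where the indicator $\lvert g(p_{j1},t_j)-g(p_{j2},t_j)\rvert$ in $\tilde M(t_j)$ and the vector $\mathbf{M}_{\max}(t)$ both need to be interpreted consistently (using $H(t)I(t)$ rather than $H(t)(H(t)-1)/2$ for the $SI$-type disconnected pair count); keeping the three link types bookkept correctly is the main thing to get right, and then the Gamma-Gamma conjugacy delivers every line of (\ref{eq:posteriors}) in one stroke.
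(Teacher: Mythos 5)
Your proposal is correct and follows essentially the same route as the paper: the paper's own justification is simply that the complete-data likelihood \eqref{eq:SIR-comp-lik} has the standard form for a continuous-time Markov chain with exponential dwell times, so that the posterior factorizes into Gamma kernels by Gamma--Exponential conjugacy, with the integral exposures collapsing to the Riemann sums in Theorem~\ref{thm:MLEs} by piecewise constancy of the counting processes. Your write-up just makes the bookkeeping of exponents and exposures explicit, which the paper leaves implicit.
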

\added{Equation~(\ref{eq:SIR-comp-lik}) is consistent with the general form of likelihood of a continuous-time Markov chain with Exponentially distributed dwell times.}
%Note that (\ref{eq:SIR-comp-lik}) implies that each parameter is essentially the rate of Exponential wait times between consecutive events of a continuous-time Markov process 
% \alex{looks resolved :)}
% \fbcom{(\ref{eq:SIR-comp-lik}) is the likelihood; I've rewritten this sentence though.}
% \jxadd{ \bf I was slightly confused by this sentence-- I might suggest something like a better-written version of ``(\ref{eq:SIR-comp-lik}) is consistent with the general form for MLEs of a CTMC of count sufficient statistics normalized by dwell times".} 
Applying the Gamma-Exponential conjugacy leads to the posterior distributions in (\ref{eq:posteriors}).

\paragraph{Relaxing the closed population assumption} 
% Throughout this section, 
% added by Alex
\added{Above, }the host population \added{was} %is 
implicitly assumed to be closed by fixing $N$. If in reality the observed population of size $N$ is a subset of a larger unobserved population, then it is possible for an individual to get infected by an external source. % that is not under study. 
\added{This is not reflected in the likelihood above as there is no corresponding $S$-$I$ link  within the observed population}, %link infection rate $\beta$, 
so we introduce an ``external infection'' rate $\xi$ describing the rate for each susceptible individual to contract the disease from an external source. In other words, $\xi$ can be thought of as the constant rate
for any $S$ individual to enter status $I$ independently of interaction with infectious members in the observed population. In this scenario, the complete data likelihood becomes
%but a slight model modification can resolve this issue. Let $\xi$ be the ``external infection'' rate, the rate for any susceptible individual to be infected by any external infectious source, then the complete data likelihood is
\begin{align} 
\label{eq:SIR-lik-external}
    & \mathcal{L}(\beta, \xi, \gamma, \tilde\alpha, \tilde\omega|\mathcal{G}_0) %= p(\text{epidemic events},\text{network events}|\beta, \xi, \gamma, \alpha, \omega,\mathcal{G}_0) \nonumber \\
    = \gamma^{n_R}
    \alpha_{SS}^{C_{HH}}\alpha_{SI}^{C_{HI}}\alpha_{II}^{C_{II}}\omega_{SS}^{D_{HH}}\omega_{SI}^{D_{HI}}\omega_{II}^{D_{II}} \prod_{j=2}^n \left[ \tilde{M}(t_j) \left(\beta I_{p_{j1}}(t_j) + \xi\right)^{F_j} \right] \nonumber  \\
    & \times \exp\left(-\int_0^{T_{\max}}\left[\beta S\kern-0.14em I(t) + \xi S(t) + \gamma I(t) + \tilde{\alpha}^T\mathbf{M}_{\max}(t) + (\tilde{\omega}-\tilde{\alpha})^T\mathbf{M}(t) \right]dt\right). \nonumber
\end{align}
The MLEs for $\{\gamma, \tilde{\alpha}, \tilde{\omega}\}$ remain unchanged, and though there is no longer a closed-form solution to the MLEs for $\beta$ and $\xi$, numerical solutions can be easily obtained \added{as detailed in  \cref{app:subpopulation-inference}}. % by any commonly used nonlinear optimization algorithm
In particular, if we have information on which infection cases are caused by internal sources (described by $\beta$) and which are caused by external sources (described by $\xi$), we can directly obtain the MLEs (and posterior distributions) for all the parameters. In this case, estimation for all parameters except $\beta$ and $\xi$ remains unchanged. When there is missingness in recovery times, the Bayesian inference procedure proposed in the next section can still be carried out with only minor adaptations. 
%More details about inference with observations on an open population are included in \cref{app:subpopulation-inference}.

\section{Inference with Partial Epidemic Observations}
\label{sec:missrecov}
Though likelihood-based inference is straightforward when all events are observed, complete event data are rarely collected in real-world epidemiological studies. %Thanks to recent developments in mobile device technologies, individual social contacts can be accurately and dynamically tracked, thus making any structural change in an evolving social network readily detectable. It is the epidemic histories, more precisely, the infection and recovery times, that are likely to be unavailable. 
Even in epidemiological studies with very comprehensive observations %(e.g. the study by 
\citep{aiello2016design}, % in which individual social contacts were dynamically tracked using mobile devices), 
there still exists some degree of missingness in the exact individual recovery times. 
\added{In these data, infection times are recorded when a study subject reports symptoms, but recoveries are aggregated at a coarse time scale rather than % almost never 
immediately recorded when a subject becomes disease-free}. 
% \footnote{\fbcom{Here, for simplicity, we assume that an infected individual always exhibits symptoms, and that once an individual recovers from the symptoms he is disease-free and thus no longer infectious.}}

Incomplete observations on epidemic paths have long been a major challenge for inference, even assuming a randomly mixing population or a simple, fixed network structure. With the exact times of infections and/or recoveries unknown, it essentially requires integrating over all possible individual disease episodes to obtain the \textbf{marginal} likelihood % posterior distribution 
of the parameters. In most cases, this is intractable. Instead, our strategy is to bypass the direct marginalization through data augmentation. This entails treating the unknown quantities in data as latent variables, iteratively imputing their values, and then estimating parameters given the current setting of latent variables. By \textbf{augmenting} the data via latent variables, the parameter estimation step makes use of the computationally tractable complete-data likelihood. This class of methods \added{has proven successful} for related problems based on individual-level data \citep{auranen2000transmission, hohle2002estimating, cauchemez2006s, hoti2009outbreaks,Tsang19vaccine}, % \fbcom{(added a reference per Allison's recommendation)},
population-wide prevalence counts \citep{fintzi2017efficient}, and observations on a structured but static population \citep{neal2004statistical,o2009bayesian,Tsang19vaccine}, but has not yet been designed for an epidemic process coupled with a \textbf{dynamic} network. The time-varying nature of social interactions imposes complex constraints on the data augmentation. % which already involves difficult conditioned sampling, so previous techniques can be quite ineffective or inefficient. 
\added{Though network dynamics complicate the design of data-augmented samplers, the information they provide on possible infection sources and transmission routes allow us to exploit additional structure, effectiveley reducing the size of the latent space.} %that should be exploited.
%can lead to complex constraints for proposing feasible disease episodes but at the same time provides information on possible infection sources and transmission routes, which renders previous data augmentation techniques ineffective or inefficient. 

%In this section, 
We derive a data augmentation method specifically designed to enable inference under  missing recovery times. The algorithm utilizes the information presented by the dynamic contact structure. In contrast to existing methods (such as \cite{fintzi2017efficient} and \cite{hoti2009outbreaks}), it is able to efficiently impute unobserved event times in parallel instead of updating individual trajectories one by one. \added{We focus on the case when recovery times are missing, directly motivated by the case study data in Section~\ref{sec:real-experiments}, but note that the proposed framework applies to other sources of missing data; see Section~\ref{sec:discussion} for discussion.}

\subsection{Method Overview}
\label{sec:missing-overview}
\paragraph{Problem setting}
% resolved %
%\jxadd{I would again recap what the data setting is for the dataset at hand, before contrasting the complete data and the given data in detailed notation}/

%We would like to resolve the issue of missing recovery times in the observed data. That is, the fact that someone makes recovery is known but the exact time of such an event is unavailable.
\ifthenelse{\boolean{tocut}}{Throughout the observation period $(0, T_{\max}]$, suppose $\{(u_{\ell},v_{\ell}]\}_{\ell=1}^L$ ($u_{\ell} < v_{\ell}$ and $v_{\ell} \leq u_{\ell+1}$) is the collection of \textbf{disjoint} time intervals in which a certain number of recoveries occur, but the exact times of those recoveries are unknown.}{
As introduced in Section~\ref{sec:model}, the complete event data $\{e_j = (t_j, p_{j1},p_{j2}, F_j, C_j, D_j)\}$ consist of the event times ($t_j$'s), identities of the individuals involved ($p_{j1},p_{j2}$'s), and the types of each event (labeled by indicators $F_j, C_j, D_j$'s). Now, suppose that the event times of recoveries are partially (or fully) missing, that is,
\begin{equation*}
    \text{for some $j$ with } F_j = C_j = D_j = 0,\, t_j \text{ is unknown. } 
\end{equation*}

Further assume that we have access to a series of disease status reports at certain time points during the observation period $(0, T_{\max}]$. At time $t$, for example, one report contains the disease status (i.e. indicator of ill or healthy) for every member of the population. Access to such information is usually granted in epidemiological studies where the health status of every study subject is periodically reported through regular surveys (for example, weekly surveys).

Let $\{(u_{\ell},v_{\ell}]\}_{\ell=1}^L$ ($u_{\ell} < v_{\ell}$ and $v_{\ell} \leq u_{\ell+1}$) be the collection of \textbf{disjoint} time intervals in which a certain number of recoveries occur, but the exact times of those recoveries are unknown.} 
That is, for each $\ell = 1,2,\ldots,L$, some individuals are reported as infectious up to time $u_{\ell}$, and they are reported as healthy again starting from time $v_{\ell}$. 
Within one particular interval $(u_{\ell},v_{\ell}]$, let $n_E^{(\ell)}$ be the number of infections, and $n_R^{(\ell)}$ the number of recoveries for which the \textbf{exact times are known}, so the number of \textbf{unknown} recovery times for this interval is $R_{\ell} = I(u_{\ell})-I(v_{\ell})+n_E^{(\ell)}-n_R^{(\ell)}$.
%\footnote{Assume that the interval $(u_{\ell},v_{\ell}]$ is reasonably short such that no one can get infected after time $u_{\ell}$ \textbf{and} recover before time $v_{\ell}$.} 
%Label the $R_{\ell}$ individuals with unknown recovery times \added{(inside the $\ell$th interval)} by $ \{k_{\ell,1},\ldots,k_{\ell,R_{\ell}}\}$, and d
\added{Denote these recovery times by latent variables} \added{$\mathbf{r}_{\ell} =\{r_{\ell,1},\ldots, r_{\ell,R_{\ell}}\}$. }
\added{Our goal is to conduct inference despite the absence of all the exact recovery times $\mathbf{r} = \{\mathbf{r}_{\ell}\}_{\ell=1:L}$ in the observed data. }

Further assume that we have a health status report (indicating ill or healthy) of each individual periodically during $(0, T_{\max}]$. Access to such information is usually granted in epidemiological studies where every study subject gives updates on health statuses through regular surveys (for example, weekly surveys).

\paragraph{Inference scheme}
We propose to address the problem of missing recovery times through data-augmented Markov chain Monte Carlo. Given an initial guess of parameter values $\Theta^{(0)}$ and the observed data $\mathbf{x}=\{e_j\} \cup \{\text{health status reports}\} \cup \mathcal{Z}_0$, \added{at each iteration $s$} the algorithm samples a set of values for the missing recovery times \added{$\mathbf{r}^{(s)} = \{\mathbf{r}^{(s)}_{\ell}\}_{\ell=1:L}$} %$\mathbf{r}^{(s)} = \{r_{\ell,1},\ldots, r_{\ell,R_{\ell}}\}_{\ell=1:L}^{(s)}$ 
from their probability distribution conditioned on $\mathbf{x}$ and the current draw of parameter values. It then samples a new set of parameter values $\Theta^{(s)}$ from their posterior distributions conditioned on the augmented data. In summary, for $s=1:S$ where $S$ is the maximum iteration count:
\vspace*{-0.1in}
\begin{enumerate}
    \item \textbf{Data augmentation}. Draw \added{$\mathbf{r}^{(s)} = \{\mathbf{r}^{(s)}_{\ell}\}_{\ell=1:L}$} from the joint conditional distribution
    % \begin{equation}
    % \label{eq:cond-recov-orig}
    %     p(\added{\mathbf{r}_{\ell}}|\Theta^{(s-1)},\mathbf{x}, \{r_{\ell',i}\}_{i=1:R_{\ell'},\ell'\neq \ell}) \qquad \text{for } \ell = 1:L.
    % \end{equation}
    \begin{equation}
    \label{eq:cond-recov-orig}
        p(\added{\mathbf{r}}|\Theta^{(s-1)},\mathbf{x}, \mathbf{r}^{(s-1)}).
    \end{equation}
    % \alex{hm, so the above is not actually exactly right---there should be a super script s-1 on some of the rl and an s on some of them... that's a bit of a pain to describe though, my attempt is below which is terrible...}
    % \begin{equation}
    % \label{eq:cond-recov-orig}
    %     p(\added{\mathbf{r}_{\ell}}|\Theta^{(s-1)},\mathbf{x}, \mathbf{r}_{-\ell}^{(s-1)}) \qquad \text{for } \ell = 1:L
    % \end{equation}
    % where $\mathbf{r}_{-\ell}^{(s-1)} = \{r^{(s)}_{\ell',i}\}_{i=1:R_{\ell'},\ell'< \ell}\cup\{r^{(s-1)}_{\ell',i}\}_{i=1:R_{\ell'},\ell'> \ell}$. \fbcom{The update doesn't have to happen in such a sequential way, though. Is it okay to write, say, $\mathbf{r}_{-\ell}^{(s-1)} = \{\mathbf{r}_{\ell'}^{(s-1)}\}_{\ell' \neq \ell}$ instead? i.e., Is it okay to update the $\mathbf{r}_{\ell}$'s in iteration $s$ only conditioned the samples acquired in iteration $s-1$?} 
    \item \textbf{Parameter update}. Combine $\mathbf{x}$ and \added{$\mathbf{r}^{(s)}$} %$\{r_{\ell,i}\}_{i=1:R_{\ell},\ell=1:L}^{(s)}$ 
    to form the augmented, complete data. Sample parameters $\Theta^{(s)} \mid \mathbf{x}, \mathbf{r}^{(s)}$ according to %Eq.(\ref{eq:posterior-st})-(\ref{eq:posterior-en}).
    (\ref{eq:posteriors}).
\end{enumerate}

\subsection{Data Augmentation via Endpoint-conditioned Sampling}
In the inference scheme stated above, the data augmentation step (step 1) is challenging because (\ref{eq:cond-recov-orig}) describes the distribution of missing recovery times conditioned on both historical events \textbf{and} future events. Thus drawing from (\ref{eq:cond-recov-orig}) amounts to sampling unobserved event times from a continuous-time Markov process with a series of fixed endpoints \citep{hobolth2009simulation}, a challenging task.  Even though (\ref{eq:recovery-background}) suggests that, in \textbf{forward simulations}, the time it takes for an infectious person to recover only depends on the recovery rate $\gamma$,% and is completely independent of the individual's social links or the epidemic history of any other individual, 
when recovery times need to be inferred \textbf{retrospectively}, there are additional constraints imposed by the observed data. % related to social links and the epidemic history of other individuals. 
First, an individual $q$ cannot recover before a certain time point $t$ if it is observed that at time $t$ the person is still ill. More subtly, if another individual $p$ gets infected during his contact with $q$, then the associated recovery time for $q$ cannot leave $p$ without a possible infection source. 
The first condition is easy to satisfy. The second constraint is much more complicated due to the network dynamics, which a simple forward simulation approach would fail to effectively accommodate.

We tackle the challenge in data augmentation by first simplifying the expression of  (\ref{eq:cond-recov-orig}) and then stating an efficient sampling algorithm.

\begin{lem}
\label{lem:cond-recov}
(\ref{eq:cond-recov-orig}) can be simplified into the following expression:
\begin{equation}
    \label{eq:cond-recov-simp}
       \added{\prod_{\ell = 1:L} p\left(\mathbf{r}_{\ell}|\gamma^{(s-1)},\{e_j\}_{t_j\in(u_{\ell},v_{\ell}]}, \mathcal{Z}_{u_{\ell}}\right),} %\{\text{epidemic status at time } v_{\ell}\}.
\end{equation}
where $\mathcal{Z}_{t}$ is the state of the process at time $t$, including the epidemic status of each individual and the social network structure.
\end{lem}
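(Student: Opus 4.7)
The plan is to exploit the Markov property of the coupled continuous-time Markov chain together with the structural observation that the missing-data windows $(u_\ell, v_\ell]$ are disjoint and separated by stretches in which the full trajectory is recorded. I would first argue that the process state $\mathcal{Z}_t$ is known at every window boundary: at each $u_\ell$ and each $v_\ell$, the infectious/healthy labels come from the periodic status reports, and the current adjacency matrix is recovered by applying the observed sequence of link activations and terminations to $\mathcal{G}_0$. Between consecutive windows, every event time---including recoveries---is observed, so nothing in those stretches depends on $\mathbf{r}$.

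With boundary states pinned down, I would invoke conditional independence to factor $p(\mathbf{r}\mid \Theta^{(s-1)}, \mathbf{x}, \mathbf{r}^{(s-1)})$ as a product over $\ell$. The Markov property applied to the joint epidemic--network chain implies that, given $\mathcal{Z}_{u_\ell}$, the evolution on $(u_\ell, v_\ell]$ is independent of anything that occurred before $u_\ell$; symmetrically, the dynamics after $v_\ell$ are conditionally independent of the interval's interior given $\mathcal{Z}_{v_\ell}$. Moreover, $\mathcal{Z}_{v_\ell}$ depends only on the identities---not the times---of the $R_\ell$ recovered individuals, so every admissible configuration of $\mathbf{r}_\ell$ yields the same right endpoint. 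Consequently the $\mathbf{r}_{\ell'}$ with $\ell' \neq \ell$ add no information about $\mathbf{r}_\ell$ beyond $\mathcal{Z}_{u_\ell}$ and the events in $(u_\ell, v_\ell]$, giving the desired product structure over intervals.

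Within a single interval, Bayes' rule expresses the factor as proportional to the interval's contribution to the complete-data likelihood (\ref{eq:SIR-comp-lik}) viewed as a function of $\mathbf{r}_\ell$, with all other quantities fixed. Factors that do not vary with $\mathbf{r}_\ell$---in particular the $\beta^{n_E^{(\ell)}}$, $\alpha_{\cdot\cdot}^{C_{\cdot\cdot}^{(\ell)}}$, and $\omega_{\cdot\cdot}^{D_{\cdot\cdot}^{(\ell)}}$ counting factors---are absorbed into the normalizing constant, leaving a density whose leading parameter dependence is on $\gamma^{(s-1)}$ through the $\gamma^{R_\ell}$ factor from the unobserved recoveries. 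I expect the main obstacle to be the careful bookkeeping of the $\mathbf{r}_\ell$-dependent pieces of the integrated intensity $\int_{u_\ell}^{v_\ell}[\beta S\kern-0.14em I(t) + \gamma I(t) + \tilde{\alpha}^T \mathbf{M}_{\max}(t) + (\tilde{\omega} - \tilde{\alpha})^T \mathbf{M}(t)]\,dt$, since $S\kern-0.14em I(t)$, $I(t)$, and the various $M$-counts are step functions whose jumps include the unobserved times in $\mathbf{r}_\ell$; verifying that these contributions combine cleanly so that the resulting conditional density takes exactly the form claimed in (\ref{eq:cond-recov-simp}) is the key technical step.
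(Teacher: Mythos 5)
Your proposal is correct and follows essentially the same route as the paper's proof: factor the complete-data density at the window boundaries via the Markov property, note that the only terms involving $\mathbf{r}_{\ell}$ are confined to the interval $(u_{\ell},v_{\ell}]$ together with $\mathcal{Z}_{u_{\ell}}$, and read off the product form of the conditional. You additionally make explicit a point the paper leaves implicit---that $\mathcal{Z}_{v_{\ell}}$ is pinned down by the observed data and does not vary with the imputed times, so the post-$v_{\ell}$ factors drop out---which is a necessary ingredient; the integrated-intensity bookkeeping you flag at the end is only needed for the explicit form of each factor (Proposition~\ref{prop:DARCI-alg}), not for the factorization itself.
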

\begin{proof}
%\jxadd{\bf Can we use the vector $\mathbf{r}$ to shorten/lighten notation? It is factored over $l$ so maybe not fruitful..} \fbcom{Great idea! I've made the changes. } 
Consider the joint density of the complete data given parameter values $\Theta^{(s-1)}$.
\begin{align*}
    &p\left(\mathbf{x}, \added{\{\mathbf{r}_{\ell}\}_{\ell=1:L}}|\Theta^{(s-1)}\right)  \\
    =& \prod_{\ell=1:L} \left[p\left(\{e_j\}_{t_j\in(u_{\ell},u_{\ell+1}]},\added{\mathbf{r}_{\ell}}|\mathcal{Z}_{u_{\ell}} ,\Theta^{(s-1)}\right)\right] \times p\left(\{e_j\}_{t_j \leq u_{1} \text{ or } t_j > v_{L}}|\mathcal{Z}_0,\mathcal{Z}_{v_L},\Theta^{(s-1)}  \right)  \\
    =& \prod_{\ell=1:L} \left[p\left(\{e_j\}_{t_j\in(u_{\ell},v_{\ell}]}|\added{\mathbf{r}_{\ell}},\mathcal{Z}_{u_{\ell}} ,\Theta^{(s-1)}\right)p\left(\added{\mathbf{r}_{\ell}}|\mathcal{Z}_{u_{\ell}}, \gamma^{(s-1)}\right)\right]  \\ 
    & \times \left[\prod_{\ell=1:L} p\left(\{e_j\}_{t_j\in(v_{\ell},u_{\ell+1}]}|\mathcal{Z}_{v_{\ell}},\Theta^{(s-1)}\right)\right] p\left(\{e_j\}_{t_j \leq u_{1} \text{ or } t_j > v_{L}}|\mathcal{Z}_0,\mathcal{Z}_{v_L},\Theta^{(s-1)}  \right).
\end{align*}
Examining all terms concerning $\added{\mathbf{r}_{\ell}}$ for each $\ell$ 
\added{indicates that, conditioned on $\gamma^{(s-1)}$, $\{e_j\}_{t_j\in(u_{\ell},v_{\ell}]}$, and $ \mathcal{Z}_{u_{\ell}}$, the distribution of $\mathbf{r}_{\ell}$ does \textbf{not} depend on $\{\mathbf{r}_{\ell'}\}_{\ell' \neq \ell}$. Thus,}
%leads to \alex{below has the same issue as (17) though I don't know that what I did there is the correct solution}
%For every $\ell = 1,2,\ldots, L$, examine the terms , we can obtain that 
\begin{equation*}
    %p\left(\added{\mathbf{r}_{\ell}}|\Theta^{(s-1)},\mathbf{x}, \added{\{\mathbf{r}_{\ell'}\}_{\ell' \neq \ell}}\right) = p\left(\added{\mathbf{r}_{\ell}}|\gamma^{(s-1)},\{e_j\}_{t_j\in(u_{\ell},v_{\ell}]}, \mathcal{Z}_{u_{\ell}}\right).
    \added{p(\added{\mathbf{r}}|\Theta^{(s-1)},\mathbf{x}, \mathbf{r}^{(s-1)}) = \prod_{\ell = 1:L} p\left(\added{\mathbf{r}_{\ell}}|\gamma^{(s-1)},\{e_j\}_{t_j\in(u_{\ell},v_{\ell}]}, \mathcal{Z}_{u_{\ell}}\right).}
\end{equation*}
\end{proof}

The lemma above suggests that imputation of missing recovery times inside an interval $(u,v]$ only depends on the events that occur in $(u,v]$, the state of the process at the start of the interval, $\mathcal{Z}_u$, and the value of recovery rate $\gamma$. \added{Further, imputation on disjoint intervals can be conducted separately and in parallel.}

Now consider sampling recovery times within any interval $(u,v]$. Let $\mathcal{Q}$ denote the group of individuals who recover at unknown times during $(u,v]$, and for each $q \in \mathcal{Q}$, let $q$'s exact recovery time be $r_q$ ($\in (u,v]$). Similarly, let $\mathcal{P}$ denote the group of individuals who get infected during $(u,v]$; for $p \in \mathcal{P}$, let $p$'s infection time be $i_p$, $\mathcal{N}_p(i_p)$ be the set of $p$'s contacts at time $i_p$, and $\mathcal{I}(i_p)$ be the set of \textbf{known} infectious individuals at time $i_p$ (that is, $\mathcal{I}(i_p)$ excludes any individual who may have recovered before $i_p$). 

\begin{figure}[H]
    \centering
    %\vspace*{-0.1in}
    %\includegraphics[width=\textwidth,trim={0 0 0 0.4in},clip]{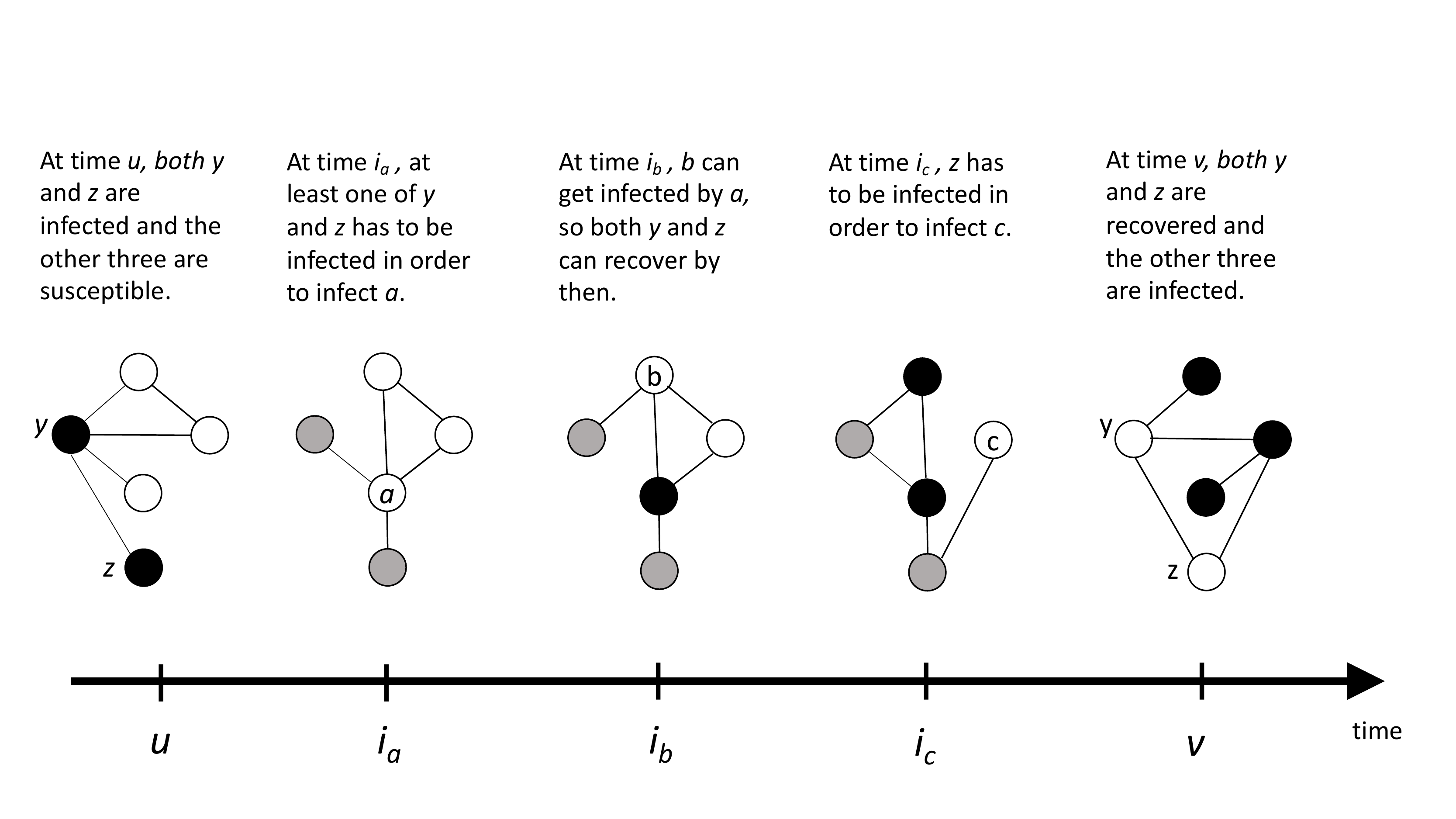}
    \includegraphics[width=\textwidth,trim={0 0 0 0.4in},clip]{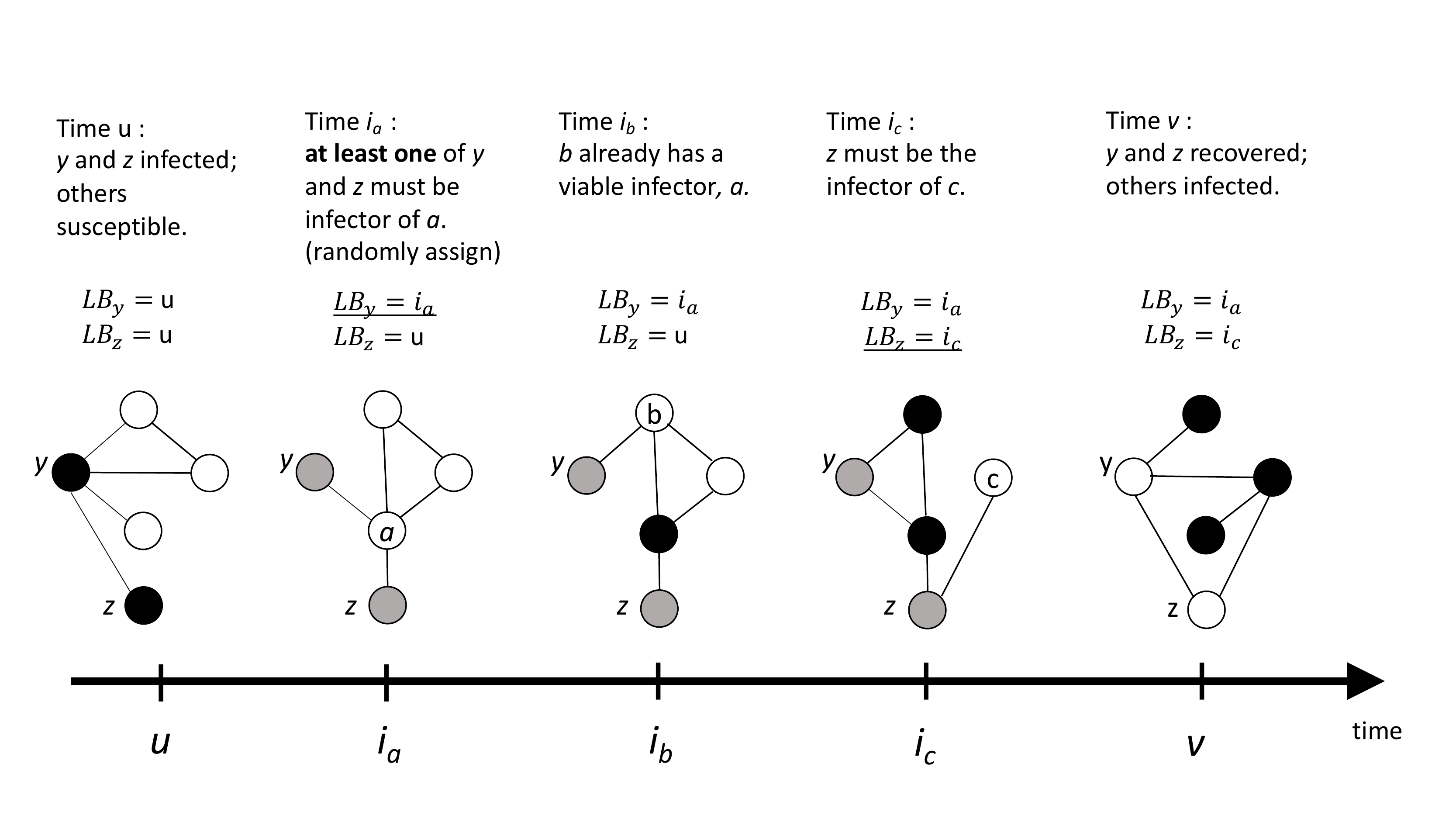}
    \caption{\added{Illustration of DARCI on a $N=5$ population. Each circle represents an individual and each solid line represents a link. Disease status is color-coded: dark $=$ infectious, grey $=$ unknown (possibly infectious or recovered), and white $=$ healthy (susceptible or recovered). Individuals $y$ and $z$ are known to be infectious at time $u$ but are recovered by time $v$, and individuals $a$, $b$ and $c$ are known to get infected at time points $i_a$, $i_b$ and $i_c$, respectively. For each person $p \in \{a,b,c\}$, the DARCI algorithm inspects $p$'s contacts at infection time $i_p$, and updates ``lower bounds'' ($LB$) of $y$ and $z$'s recovery times to ensure that $p$ has an infector. For example, at time $i_a$, one of $y$ and $z$ has to be $a$'s infector, so DARCI randomly selects \textbf{one} of $y$ and $z$ (in this example it's $y$) and postpones his recovery time until after $i_a$.}}
    \label{fig:DARCI-portrait}
\end{figure}

%The following proposition states an algorithm for sampling $\{r_q\}$'s.
\begin{prop}[Data augmentation regulated by contact information (DARCI)] 
\label{prop:DARCI-alg}
%Let $(u,v]$ be a time interval during which a group of individuals recover from the infection but the exact recovery times are unknown. Denote this group by $\mathcal{Q}$, and for each $q \in \mathcal{Q}$, let $q$'s recovery time be $r_q$ ($\in (u,v]$). Furthermore, let $\mathcal{P}$ denote the group of individuals who get infected during $(u,v]$, and for $p \in \mathcal{P}$, let $p$'s infection time be $i_p$. Then, 
Following the notation stated above, given a recovery rate $\gamma$, the state of the process at time $u$, $\mathcal{Z}_u$, and all the observed events in the interval $(u,v]$, $\{e_j\}_{t_j \in (u,v]}$, one can sample $\{r_q\}$ from the conditional distribution $p\left(\{r_q\}|\gamma, \{e_j\}_{t_j \in (u,v]},\mathcal{Z}_u\right)$ in the following steps:
\begin{enumerate}
        \item Initialize a vector $\text{LB}$ of length $|\mathcal{Q}|$ with $\text{LB}_q = u$ for every $q \in \mathcal{Q}$; \added{then for any $p \in \mathcal{P}$ such that $p \in \mathcal{Q}$, further set $\text{LB}_p = u$;}
        \item Arrange the set $\mathcal{P}$ in the order of $\{p_1, p_2, \ldots, p_{|\mathcal{P}|}\}$ such that $i_{p_1} < i_{p_2} < \ldots < i_{p_{|\mathcal{P}|}}$, and for each $p \in \mathcal{P}$ (chosen in the arranged order), examine the \added{``potential infectious neighborhood''}
        \begin{equation*}
            \added{\mathcal{I}_p = \mathcal{N}_p(i_p) \cap \left(\added{\mathcal{I}(i_p)} \cup \mathcal{Q}\right).}
            %\mathcal{I}_p = (\mathcal{N}_p(i_p) \cup \{p\}) \cap \left(\added{\mathcal{I}(i_p)} \cup \mathcal{Q}\right).
        \end{equation*}
        %Here $\mathcal{N}_p(t)$ is the set of $p$'s neighbors at time $t$, and $\text{Inf}(t)$ is the set of \textbf{known} infected individuals at time $t$ (that is, if person $k$ who is infected before $t$ \textbf{may or may not} recover before $t$, then $k$ is not included).\\ 
        If $\mathcal{I}_p \subset \mathcal{Q}$ \added{(i.e., potential infection sources are all members of $\mathcal{Q}$)}, then randomly and uniformly select one $q \in \mathcal{I}_p$, and set $\text{LB}_q = i_p$.
        \item Draw recovery times $r_q  \stackrel{ind}{\sim} \text{TEXP}(\gamma, \text{LB}_q , v)$, where $\text{TEXP}(\gamma, s, t)$ is a truncated Exponential distribution with rate $\gamma$ and truncated on the interval $(s,t)$.
    \end{enumerate}
\end{prop}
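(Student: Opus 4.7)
By Lemma~\ref{lem:cond-recov}, the target conditional distribution factorizes over the disjoint intervals $(u_\ell,v_\ell]$, so it suffices to verify the algorithm on a single interval $(u,v]$. My plan is to derive the conditional density of the unknown recovery times $\{r_q\}_{q\in\mathcal{Q}}$ given $\gamma$, the observed events $\{e_j\}_{t_j\in(u,v]}$, and the state $\mathcal{Z}_u$, then match it to the law produced by Steps~2--3 of the algorithm.

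First, I would start from the complete-data likelihood \eqref{eq:SIR-comp-lik} restricted to $(u,v]$, apply Bayes' rule, and collect the terms that depend on $\{r_q\}$. Three pieces emerge: (i) the recovery rate factors together with $\exp(-\gamma\int_u^v I(t)\,dt)$, which reduce (up to $\{r_q\}$-free constants) to $\prod_{q\in\mathcal{Q}} \gamma \exp(-\gamma(r_q-u))$ by accounting for each $q$'s contribution to the time-in-infectious integral; (ii) the infection factors $\prod_{p\in\mathcal{P}} I_p(i_p)$ counting $p$'s infectious neighbors at $i_p$; and (iii) the hard support constraints $r_q\in(u,v]$ (since $q$ is ill at $u$ and healthy by $v$) plus the requirement that each $p\in\mathcal{P}$ admit at least one infectious neighbor at $i_p$.

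Next, I would introduce latent infector variables $Y_p$ for each $p\in\mathcal{P}$, uniform on $\mathcal{I}_p$, using the identity $I_p(i_p)=\sum_{y\in\mathcal{I}_p}\mathbbm{1}\{Y_p=y\}$ so that marginalizing over $\{Y_p\}$ reproduces the infection factor. Conditional on $\{Y_p\}$, the infection constraint becomes the explicit lower bound $r_q>i_p$ whenever $Y_p=q$, so the binding bound on each $r_q$ is $\mathrm{LB}_q = \max\{u,\max_p\{i_p : Y_p=q\}\}$. The memoryless property of the exponential then yields independent truncated exponentials $r_q\sim\mathrm{TEXP}(\gamma,\mathrm{LB}_q,v)$, recovering Step~3. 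I would then argue that Step~2's chronological sweep correctly realizes the marginalization over $\{Y_p\}$: when $\mathcal{I}_p\not\subset\mathcal{Q}$ a positive fraction of $Y_p$ lies outside $\mathcal{Q}$ so no constraint need propagate to any $r_q$ (matching the algorithm's skip), whereas when $\mathcal{I}_p\subset\mathcal{Q}$ a uniform draw from $\mathcal{I}_p$ necessarily selects some $q\in\mathcal{Q}$ whose lower bound must be lifted to $i_p$, which is exactly the uniform random choice made in the algorithm.

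The main obstacle is bookkeeping the remaining likelihood terms that also depend on $\{r_q\}$ through instantaneous disease statuses: the link-type counts such as $C_{HI}, D_{HI}, C_{II}, D_{II}$ with their $\tilde M(t_j)$ factors, and the integrals $\int_u^v \tilde\alpha^\top \mathbf{M}_{\max}(t)\,dt$ and $\int_u^v \tilde\omega^\top \mathbf{M}(t)\,dt$ in \eqref{eq:SIR-comp-lik}. These are functions of $\{r_q\}$ whenever some observed link toggles a status-dependent type across a candidate $r_q$. The cleanest route will be to exploit that the algorithm conditions only on $\gamma$ (not on $\tilde\alpha,\tilde\omega$), so these factors appear only in the normalizing constant; handling the rare configurations where a network-event type label itself depends on the ordering between $r_q$ and a nearby $t_j$ will likely require a careful argument that the support partition induced by $\mathrm{LB}_q$ and $v$ leaves these type labels invariant, and is the subtle accounting on which the proof hinges.
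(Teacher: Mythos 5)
The paper never writes down a formal proof of this proposition --- it is followed only by the intuitive paragraph about ``lower bounds'' --- so your plan is already more explicit than the paper's own justification. Your skeleton is the right one: reduce to a single interval via Lemma~\ref{lem:cond-recov}, isolate the three $\{r_q\}$-dependent pieces of the restricted likelihood, and introduce latent infector labels $Y_p$ so that summing over $Y_p$ reproduces the factor $I_p(i_p)$. Indeed, in the case $\mathcal{I}_p\subset\mathcal{Q}$ with a single constraining $p$ per $q$, your device actually \emph{proves} the uniform selection in Step~2: every candidate $q\in\mathcal{I}_p$ has the same prior $\mathrm{TEXP}(\gamma,u,v)$ and the same threshold $i_p$, so the masses $\int_{i_p}^{v}\gamma e^{-\gamma(r-u)}\,dr$ coincide and $Y_p$ is marginally uniform, after which the memoryless truncation gives Step~3. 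You also correctly flag the network-event terms ($\tilde M(t_j)$, the type-dependent integrals) as residual $\{r_q\}$-dependence that the algorithm silently ignores; that is a real unresolved issue, not a formality, since a link event involving $q$ at $t_j\in(u,v]$ changes type (e.g.\ $H$-$I$ versus $H$-$H$) according to whether $r_q>t_j$.

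The genuine gap is the sentence you assert rather than prove: that the chronological sweep ``correctly realizes the marginalization over $\{Y_p\}$.'' It does not, for two reasons. First, when $\mathcal{I}_p\not\subset\mathcal{Q}$ but $\mathcal{N}_p(i_p)\cap\mathcal{Q}\neq\emptyset$, the likelihood factor $I_p(i_p)=a_p+\sum_{q\in\mathcal{N}_p(i_p)\cap\mathcal{Q}}\mathbbm{1}\{r_q>i_p\}$ (with $a_p=|\mathcal{N}_p(i_p)\cap\mathcal{I}(i_p)|\ge 1$) still varies with $\{r_q\}$, tilting the conditional toward configurations with $r_q>i_p$; the algorithm's ``skip'' samples from the untilted product of $\mathrm{TEXP}(\gamma,u,v)$ laws instead. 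In your augmented formulation the correct sampler would put $P(Y_p=z)\propto\int_u^v\gamma e^{-\gamma(r-u)}dr$ for a known-infectious $z$ but $P(Y_p=q)\propto\int_{i_p}^v\gamma e^{-\gamma(r-u)}dr$ for $q\in\mathcal{Q}$ --- unequal masses, so neither ``always skip'' nor a uniform draw is exact. Second, when several $p$'s have overlapping potential-infector sets inside $\mathcal{Q}$, the joint law of $(Y_{p_1},Y_{p_2},\dots)$ is not uniform-independent (the normalizing masses depend on whether the labels collide and on which thresholds bind), whereas the sweep draws them independently and uniformly, overwriting $\mathrm{LB}_q$. To close the proof you would need either to restrict the claim to the regime where these configurations do not occur, to modify Step~2 so that infector labels are drawn from their exact (non-uniform, dependent) marginal, or to honestly restate the proposition as sampling from an approximation of \eqref{eq:cond-recov-simp} --- in which case the downstream Gibbs sampler is no longer exact and the paper's claim of ``exact sampling'' needs a Metropolis correction.
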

Intuitively, this procedure enables a draw of recovery times that are ``consistent with'' the observed data. To achieve this goal, an imputed recovery cannot occur in a way that leaves a to-be-infected individual without any infectious neighbor at the time of infection, nor take place before the corresponding individual gets infected. Effectively there is a ``lower bound'' for each missing recovery time conditioned on the observed data, particularly the dynamic contact structure. An illustration of the DARCI algorithm is provided in Figure~\ref{fig:DARCI-portrait}.

Combining Lemma~\ref{lem:cond-recov} and Proposition~\ref{prop:DARCI-alg} enables exact sampling from the conditional distribution (\ref{eq:cond-recov-simp}) in the data augmentation step: for each $\ell=1,2,\ldots,L$, applying the DARCI algorithm to the interval $(u_{\ell},v_{\ell}]$ gives an updated set of missing recovery times, \added{$\mathbf{r}_{\ell}=\{r_{\ell,i}\}_{i=1:R_{\ell}}$}. \added{This allows us to carry out MCMC sampling using a simple Gibbs sampler. }

\section{Simulation Experiments}
\label{sec:sim-experiments}

%\jxadd{Experimental results look great and we can later decide how many plots to include and what points to emphasize. One thing we should try is to compare fitting the regular well-mixed SIR model to the data generated from the dynamic network processes, and show how it breaks down as the connectivity increases while our method rmeains reliable. Additionally, we may see if we can compare fitting our own model in the partially observed case using generic ``plug-and-play" simulation approaches. These are basically particle filtering (for instance R package POMP), but I imagine they will deteriorate and also be slower.}

%\fbcom{Not sure if we still want to try out simulation-based approaches like particle filtering. To be addressed. -Fan}

%\fbcom{Some big-picture-style intro for the simulation experiments, summarizing the main points. To be added.}

In this section we present results of a series of experiments with simulated datasets. \ifthenelse{\boolean{tocut}}{
In all experiments, 
}{
In Section~\ref{sec: simulation-procedure}, we describe the simulation procedure we employ for generating events of adaptive network epidemic processes. In Section~\ref{sec:infer-complete}, we evaluate and validate the model and the likelihood via experiments on complete event data.
%demonstrate the necessity of considering network structures as well as network dynamics when estimating epidemic-related parameters through a small-scale example, then assess the validity of our claims on likelihood-based estimation, and show that the model is flexible enough to handle large-scale populations and arbitrary network structures. 
Finally in Section~\ref{sec:infer-incomplete}, we verify the efficacy and efficiency of the data-augmentation-based inference scheme with partial epidemic observations.

\subsection{Simulation procedure}
\label{sec: simulation-procedure}
In all the simulation experiments, }we employ a forward simulation procedure that can be seen as a variation of Gillespie's algorithm \citep{gillespie1976general} to sample realizations \added{of the network epidemic from our generative model}. %of the dynamic network epidemic process. 
The input consists of the parameter values $\Theta = \{\beta, \gamma, \tilde{\alpha}, \tilde{\omega}\}$ \footnote{Here $\tilde{\alpha} = (\alpha_{SS},\alpha_{SI},\alpha_{II})^T$ and $\tilde{\omega} = (\omega_{SS},\omega_{SI},\omega_{II})^T$, as defined in ~(\ref{eq:alpha-vector}) and (\ref{eq:omega-vector}).}, an arbitrary initial network $\mathcal{G}_0$, the number of infectious cases at onset $I(0)$, and the observation time length $T_{\max}$. The output is the complete collection of all events $\{e_j = (t_j, p_{j1},p_{j2}, F_j, C_j, D_j)\}$ that occur within the time interval $(0, T_{\max}]$. Associated with each event $e_j$ is a timestamp ($t_j$), labels of the individuals involved ($p_{j1},p_{j2}$), and the event-type indicator $F_j, C_j$, or $D_j$.

The steps of the simulation procedure are detailed as follows:
\vspace*{-0.08in}
\begin{enumerate}
    \item \textbf{Initialization.} Randomly select $I(0)$ individuals to be the infectious (then the rest of the population are all susceptible). Set $t_\text{cur}=0$.
    \item \textbf{Iterative update.} While $t_\text{cur} < T_\text{max}$, do:
    \begin{enumerate}
        \item \textbf{Bookkeeping.}
        Summarize the following statistics at $t_\text{cur}$:\ifthenelse{\boolean{tocut}}{
        1) $SI(t_\text{cur})$, the number of S-I links in the population; 2) $\mathbf{M}_{\max} (t_\text{cur})$, the possible number of links of each type defined in (\ref{eq:Mmax-vector}); 3) $\mathbf{M} (t_\text{cur})$, the number of existing links of each type defined in (\ref{eq:M-vector}).}{
        \begin{itemize}
            \item $SI(t_\text{cur})$, the number of S-I links in the population;
            \item $\mathbf{M}_{\max} (t_\text{cur})$, the possible number of links of each type defined in (\ref{eq:Mmax-vector});
            \item $\mathbf{M} (t_\text{cur})$, the number of existing links of each type defined in (\ref{eq:M-vector}).
        \end{itemize}
        }
        Then set $\mathbf{M}^{d}(t_\text{cur})=\mathbf{M}_{\max} (t_\text{cur}) - \mathbf{M} (t_\text{cur})$.
        \item \textbf{Next event time.} Compute the instantaneous rate of the occurrence of any event, $\Lambda(t_\text{cur}) = 
            \beta SI(t_\text{cur}) + \gamma I(t_\text{cur}) +
            \tilde{\alpha}^T \mathbf{M}^{d}(t_\text{cur}) +
            \tilde{\omega}^T \mathbf{M} (t_\text{cur})$,
        % \begin{equation*}
        %     \Lambda(t_\text{cur}) = 
        %     \beta SI(t_\text{cur}) + \gamma I(t_\text{cur}) +
        %     \tilde{\alpha}^T \mathbf{M}^{d}(t_\text{cur}) +
        %     \tilde{\omega}^T \mathbf{M} (t_\text{cur}),
        % \end{equation*}
        and draw $\Delta t \sim \text{Exponential}(\Lambda(t_\text{cur}))$.
        \item \textbf{Next event type.} Sample $Z \sim \text{Multinomial}(\tilde{\lambda}(t_\text{cur}))$, where
        \begin{equation*}
             \tilde{\lambda}(t_\text{cur}) = \left(\frac{\beta SI(t_\text{cur})}{\Lambda(t_\text{cur})},
            \frac{\gamma I(t_\text{cur}) }{\Lambda(t_\text{cur})},
            \frac{\tilde{\alpha}^T \mathbf{M}^{d}(t_\text{cur})}{\Lambda(t_\text{cur})},
            \frac{\tilde{\omega}^T \mathbf{M} (t_\text{cur})}{\Lambda(t_\text{cur})}\right)^T.
        \end{equation*}
        Then do one of the following based on the value of $Z$:
        
        If $Z=1$ (infection), uniformly pick one $S$-$I$ link and infect the $S$ individual in this link.
        
        If $Z=2$ (recovery), uniformly pick one $I$ individual to recover.
        
        If $Z=3$ (link activation), randomly select $Y \in \{H\text{-}H, H\text{-}I,I\text{-}I\}$ with probabilities proportional to $\tilde{\alpha} \circ \mathbf{M}^{d}(t_\text{cur})$, and
        uniformly pick one de-activated ``$Y$ link'' to activate.
        
        If $Z=4$ (link termination), randomly select $Y \in \{H\text{-}H, H\text{-}I,I\text{-}I\}$ with probabilities proportional to $\tilde{\omega} \circ \mathbf{M}(t_\text{cur})$, and
        uniformly pick one existing ``$Y$ link'' to terminate.
        \item Replace $t_\text{cur}$ by  $t_\text{cur} + \Delta t$, record relevant information about the sampled event, and repeat from (a).
    \end{enumerate}
\end{enumerate}
In Step 2 (c), ``$\circ$'' refers to the Hadamard product (entrywise product) for two vectors.

\subsection{Experiments with Complete Observations}
\label{sec:infer-complete}

In this subsection, we first demonstrate the insufficiency of analyzing disease spread without considering the network structure or its dynamics. Then we validate our claims on maximum likelihood estimation and Bayesian inference given complete event data (Theorems \ref{thm:MLEs} and \ref{thm:Bayesian}). Finally, we show that the model estimators can detect simpler models such as the decoupled process and the static network process. 
Unless otherwise stated, \added{throughout this section} we set the initial network $\mathcal{G}_0$ as a random Erdős–Rényi graph\footnote{\added{We note that the form of the initial network does not necessarily predict the behavior of the epidemic. Specifically, asymptotic qualities, such as the Poisson degree distribution of Erdős–Rényi graphs, do not hold when the network dynamically reacts to an epidemic, detailed empirically  in Section~\cref{app:network-behavior}.}} (undirected) with edge probability $p=0.1$, let $I(0)=1$ individual to get infected at onset, and choose the ground-truth parameters as
\begin{equation}
\label{eq:settings}
    \beta = 0.03, \gamma = 0.12; \tilde{\alpha}^T = %(\alpha_{SS},\alpha_{SI},\alpha_{II}) =
    (0.005, 0.001, 0.005), \tilde{\omega}^T = %(\omega_{SS},\omega_{SI},\omega_{II}) =
    (0.05, 0.1, 0.05).
\end{equation}
\added{These settings are chosen to produce simulated data sets with a population size and event counts that are comparable to our real data example.}

For Bayesian inference, we adopt the following Gamma priors for the parameters: %\footnote{
\begin{equation}
\label{eq:priors}
    \beta \sim Ga(1, 1/0.02), \gamma \sim Ga(1, 1/0.1); \alpha_{\cdot \cdot} \sim Ga(1, 1/0.004), \omega_{\cdot \cdot} \sim Ga(1, 1/0.06).
\end{equation}
We intentionally choose prior means different from the true parameter values; experiments show that inference is insensitive to prior specifications as long as a reasonable amount of data is available. %\alex{was this in here before? and did we do robustness checks for this?}
For each parameter, 1000 posterior samples are drawn after a 200-iteration burn-in period.

\paragraph{The danger of neglecting networks or network dynamics} 

Adopting the ``random mixing'' assumption about an actually networked population can lead to severe under-estimation of the infection rate.
Erroneous estimation can also happen if contacts are in fact dynamic but are mistaken as static during inference. Table~\ref{tab:toy-example-simulation} displays the MLEs of the infection rate $\beta$ (\textbf{under true value  0.05}\added{, chosen to generate non-trivial epidemics to illustrate inference}) obtained by methods under three different assumptions regarding the network structure (assuming a dynamic network, assuming a static network, and assuming random mixing without any network). The population size is $N=50$, and results are summarized over 50 different simulated datasets.

These results make clear that neglecting the effects of the contact network, even when the quantity of interest is the disease transmission rate, is dangerously misleading. Resulting estimates that are far from the truth \added{with significantly underestimated uncertainty measures}. Incorporating the initial network structure statically throughout the process helps---the 95\% confidence interval now includes the truth---but disregarding the time-evolution of the network remains a noticeable model misspecification leading to \added{biased inference}. 

\begin{table}[H]
\centering
\caption{%The danger of neglecting network structures and/or network dynamics, showcased in simulations on a $N=50$ population. The table presents 
Maximum likelihood estimates of $\beta$, the per link infection rate (real value $0.05$), using dynamic network information, the initial static network, and no network structure (random mixing), respectively. The standard deviations as well as the 2.5\% and 97.5\% quantiles of the estimates are obtained from outcomes across 50 different simulated datasets on a $N=50$ population.}
\begin{tabular}{llll}
  \toprule
Method & dynamic network & static network & no network \\ 
  \midrule
Estimate & 0.0540 & 0.0278 & 0.00219 \\ 
Standard deviation & 0.0158 & 0.0081 & 0.000821 \\ 
  2.5\% quantile & 0.0230 & 0.00825 & 0.000614 \\ 
  97.5\% quantile & 0.0817 & 0.0553 & 0.00425 \\
   \bottomrule
\end{tabular}
\label{tab:toy-example-simulation}
\end{table}

\paragraph{Validity and efficacy of parameter estimation} 

Complete event data are generated using the simulation procedure stated in above, and maximum likelihood estimates (MLEs) as well as Bayesian estimates are obtained for parameters $\Theta = \{\beta, \gamma, \tilde{\alpha}, \tilde{\omega}\}$. Here we set the population size as $N=100$ and the infection rate as $\beta = 0.03$\added{ while keeping the other parameter values the same as stated in (\ref{eq:settings})}. 

Figure~\ref{fig:MLEs-complete-CI} shows the results of maximum likelihood estimation in one simulated dataset. The MLEs for the each parameter (dark solid line) are computed using various numbers of events, and are compared with the true parameter value (red horizontal lines). The lower and upper bounds for 95\% confidence intervals are also calculated (dashed gray lines).  Only the MLEs for parameters $\beta, \gamma, \alpha_{SS}$ and $\alpha_{SI}$ are shown, but results for all parameters are included in \cref{app:sim-complete}. Estimation is relatively accurate even when observation ends earlier than the actual process (thus leaving later events unobserved). When more events are available for inference, accuracy is improved and the  uncertainty is reduced.

Figure~\ref{fig:Bayes-complete-multi} presents the posterior sample means (solid lines) and 95\% credible bands (shades) for each parameter inferred using various numbers of events, with the true parameter values marked by bold, dark horizontal lines. The results are shown for 4 different simulated datasets (each dataset represented by a distinct color) and for parameters $\beta, \gamma, \omega_{SS}$ and $\omega_{SI}$ (complete results are in \cref{app:sim-complete}). When more events are utilized in inference, the posterior means tend to be closer to the true parameter values, while the credible bands gradually narrow down.

% added by Alex
\added{It is worth noting that the proposed inferential framework is capable of handling large-scale networks as well as arbitrary network structures.}
%It is worth noting that\deleted{ parameter estimation is unaffected by either the population size $N$ or the initial network structure $\mathcal{G}_0$. In particular,} the model is capable of \replaced{conducting inference with}{handling} large-scale networks as well as arbitrary network structures. 
Additional results with larger values of $N$ and different configurations of $\mathcal{G}_0$ are provided in \cref{app:sim-complete}.

%% MLE, G0 = ER(N=100) %%
\begin{figure}[H]
    \centering
    \includegraphics[width=0.95\textwidth]{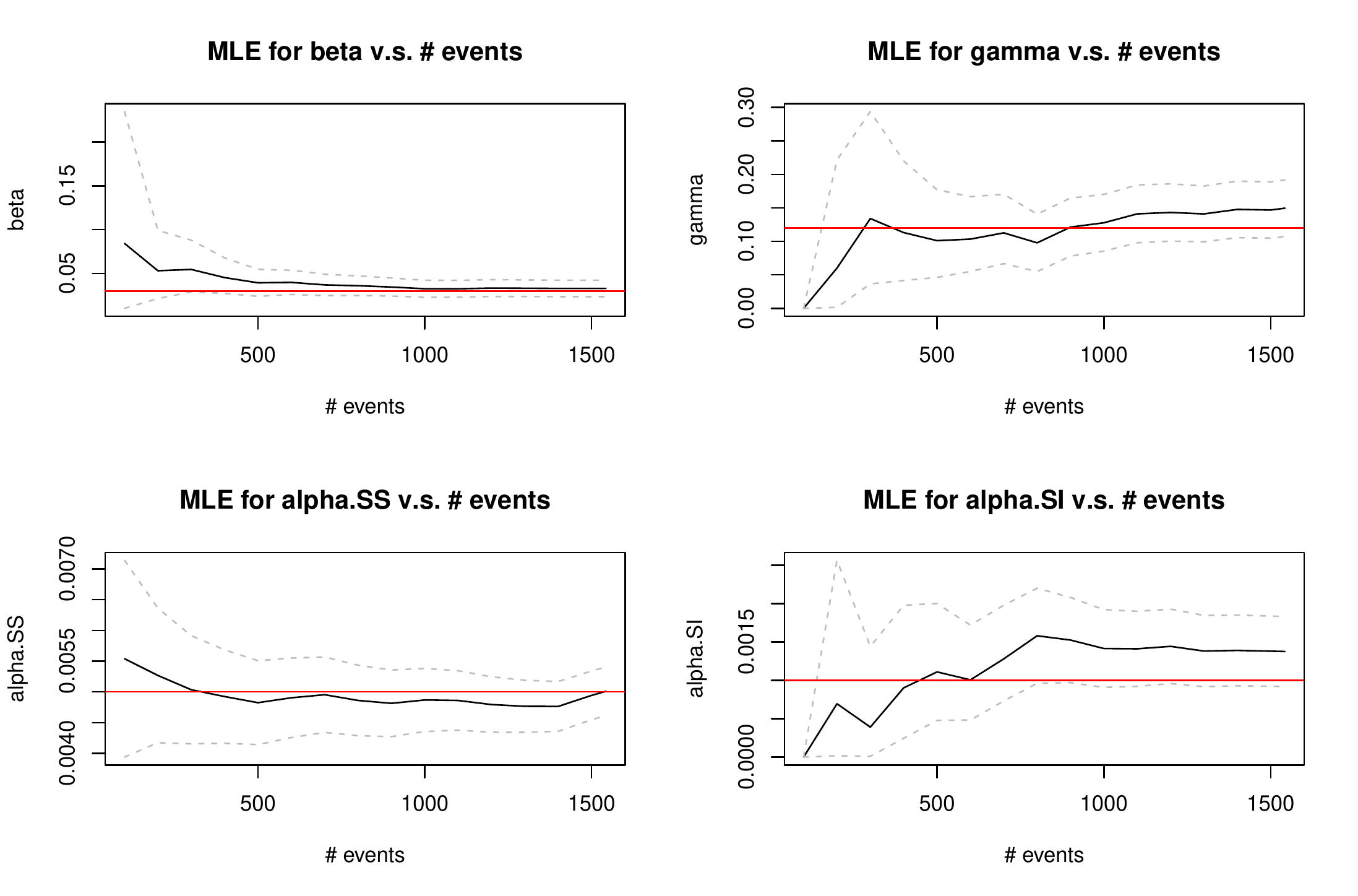}
    \caption{MLEs versus number of events used for inference. Dashed gray lines show the lower and upper bounds for 95\% frequentist confidence intervals, and red lines mark the true parameter values. Results are presented for $\beta, \gamma, \alpha_{SS}$ and $\alpha_{SI}$. \added{In this realization $n_E = n_R = 48, C_{HH}=621, C_{HI}=35, C_{II}=13, D_{HH}=573, D_{HI}=189, D_{II}=17$.}
    }
    \label{fig:MLEs-complete-CI}
\end{figure}

%% Bayesian, G0 = ER(N=100) %%
\begin{figure}[H]
    \centering
    \includegraphics[width=0.47\textwidth,page=1]{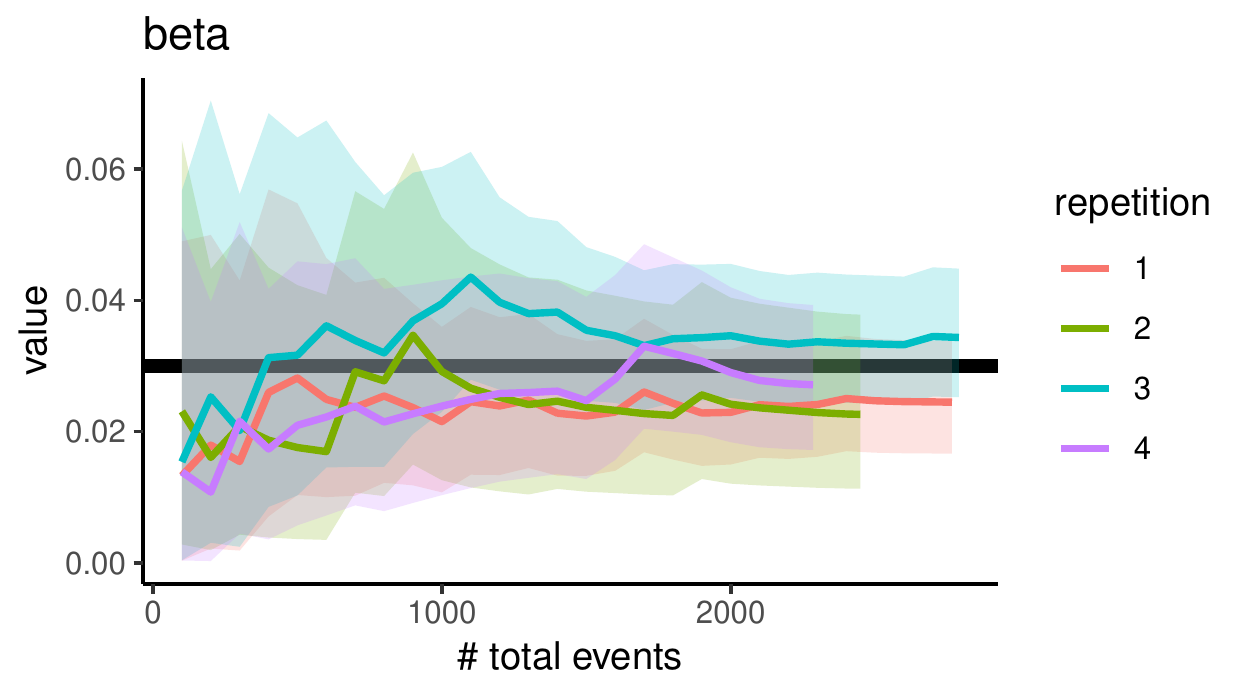}
    \includegraphics[width=0.47\textwidth,page=2]{figures/ex1_all.pdf}
    \includegraphics[width=0.47\textwidth,page=6]{figures/ex1_all.pdf}
    \includegraphics[width=0.47\textwidth,page=7]{figures/ex1_all.pdf}
    \caption{Posterior sample means versus number of total events used for inference. True parameter values are marked by \textbf{bold dark} horizontal lines, along with 95\% credible bands. Results are presented for 4 different complete datasets and for parameters $\beta, \gamma, \omega_{SS}$ and $\omega_{SI}$.}
    \label{fig:Bayes-complete-multi}
\end{figure}

\paragraph{Assessing model flexibility}
Our proposed framework is a generalization of epidemics over networks that evolve independently (the ``decoupled'' process), which in turn generalize epidemic processes over fixed networks (the ``static network'' process). \added{Thus our model class contains these simpler models:} if events are generated from the decoupled process,  we would expect all the link activation and termination rates to be estimated as the same. Likewise, if the true network process is static, then we expect all  link rates to be estimated as zero. 

To confirm this, experiments are conducted on complete event datasets generated from the two simpler models. Here we only show select results of Bayesian inference on datasets generated from static network epidemic processes (Figure~\ref{fig:Bayes-static-multi}) and relegate other results to \cref{app:sim-complete}. We see that information from a moderate number of events is sufficient to accurately estimate the epidemic parameters $\beta$ and $\gamma$ and learn the static nature of the network---note how quickly the posterior credible bands for $\alpha_{SI}$ and $\omega_{SI}$ shrink toward zero.

%% Bayesian, static network
\begin{figure}[H]
    \centering
    \includegraphics[width=0.47\textwidth,page=1]{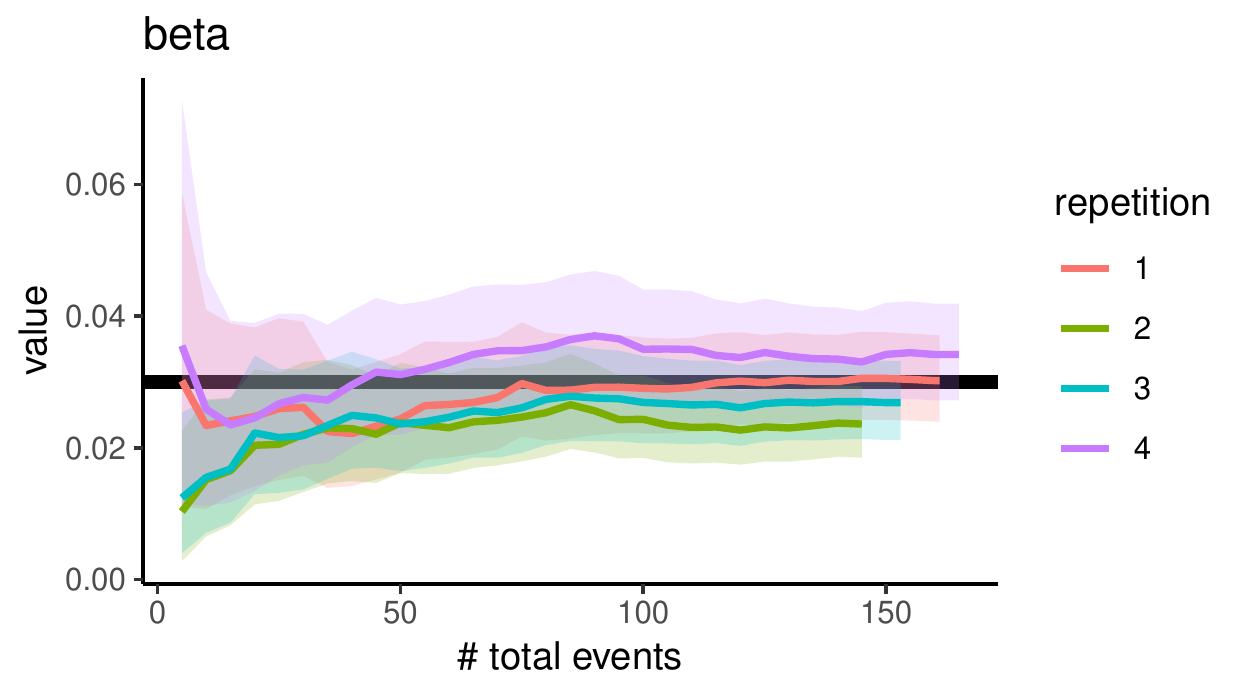}
    \includegraphics[width=0.47\textwidth,page=2]{figures/static_Bayes_1.pdf}
    \includegraphics[width=0.47\textwidth,page=4]{figures/static_Bayes_1.pdf}
    \includegraphics[width=0.47\textwidth,page=7]{figures/static_Bayes_1.pdf}
    \caption{Posterior sample means versus number of total events used for inference, with data generated from \emph{static} network epidemic processes. True parameter values are marked by \textbf{bold dark} horizontal lines, along with 95\% credible bands. Results are presented for 4 different complete datasets and for parameters $\beta, \gamma, \alpha_{SI}$ and $\omega_{SI}$.  With a moderate number of events, the epidemic-related parameters, $\beta$ and $\gamma$, are accurately estimated, and the posteriors for the edge rates quickly shrink toward zero (the truth).}
    \label{fig:Bayes-static-multi}
\end{figure}

\subsection{Experiments with Incomplete Observations}
\label{sec:infer-incomplete}
Upon validating the model and inference framework, we now  assess the performance of our proposed inference scheme in the more realistic setting where epidemic observations are incomplete.  
In this subsection, we first verify that the MCMC sampling scheme in Section~\ref{sec:missrecov} is able to retrieve the parameter values despite missing recovery times in the observed data. Then we compare our DARCI algorithm (Prop.~\ref{prop:DARCI-alg}) with two baselines and show that it produces posterior samples of higher quality and with higher efficiency.

\paragraph{Simulating partially observed data}
We first generate complete event data using the simulation procedure stated earlier in this section, and then randomly discard $\eta \times 100\%$ of the \added{\textbf{exact}} recovery times and treat them as unknown. Meanwhile, a \added{periodic} status report (as described in Section~\ref{sec:missing-overview}) is produced every $7$ time units throughout the entire process \added{so that individual disease statuses are informed at a coarse resolution}. %Note that the scale of time is insignificant here---the physical length $\tau$ of $1$ time unit can be thought of as a second, an hour, or a day, since $\tau$ only determines how we \emph{interpret} the parameters (which govern how fast the system evolves per unit time) but not the realization of the process. 
If one regards $1$ time unit as \emph{a day}, the periodical disease statuses correspond to \emph{weekly} reports.

\paragraph{Efficacy of the inference scheme} We validate the method outlined in Section~\ref{sec:missing-overview} through experiments on an example dataset, where the settings and parameters are the same as those in (\ref{eq:settings}) and the population size is fixed at $N=100$. %\footnote{These settings are chosen to generate epidemic and network events similar in scale to the real-world data to be used in the real data analysis later in the paper.}
In this particular realization, there are $26$ infection cases spanning over approximately $37$ days (less than 6 weeks), and there are $767$ and $893$ instances of social link activation and termination, respectively. \footnote{The event time scales in the example dataset are chosen to be comparable to, though not exactly the same as, those in the real-world data used in Section~\ref{sec:real-experiments}.}

First set $\eta=50$, that is, randomly select $50\%$ of \added{exact} recovery times to be taken as missing.  Figure~\ref{fig:Bayes-miss6-50} plots %\footnote{The posterior samples for the other parameters are recorded but not presented here due to limited space. Comprehensive results are included in the supplement.} 
$1000$ consecutive MCMC samples (after a $200$-iteration burn-in) \added{for each parameter  $\{\beta,\gamma,\alpha_{SS},\omega_{SS}\}$}, as well as the $2.5\%$ and $97.5\%$ quantiles of the posterior samples (gray, dashed lines) compared with the true parameter value (red horizontal line). We can see that for every parameter, the 95\% sample credible interval covers the true parameter value, suggesting that the proposed inference scheme is able to estimate parameters from incomplete data reasonably well. 

We then set $\eta = 100$, \added{discarding all exact recovery times}. Figure~\ref{fig:Bayes-miss6-100} presents outcomes of the inference algorithm in this case. Understandably, parameter estimation is affected by the total unavailability of exact recovery times, but the drop in accuracy is marginal. Moreover, the credible bands are slightly wider, reflecting increased uncertainty with more missingness.

\begin{figure}[H]
    \centering
    \includegraphics[page=1, width=.47\textwidth]{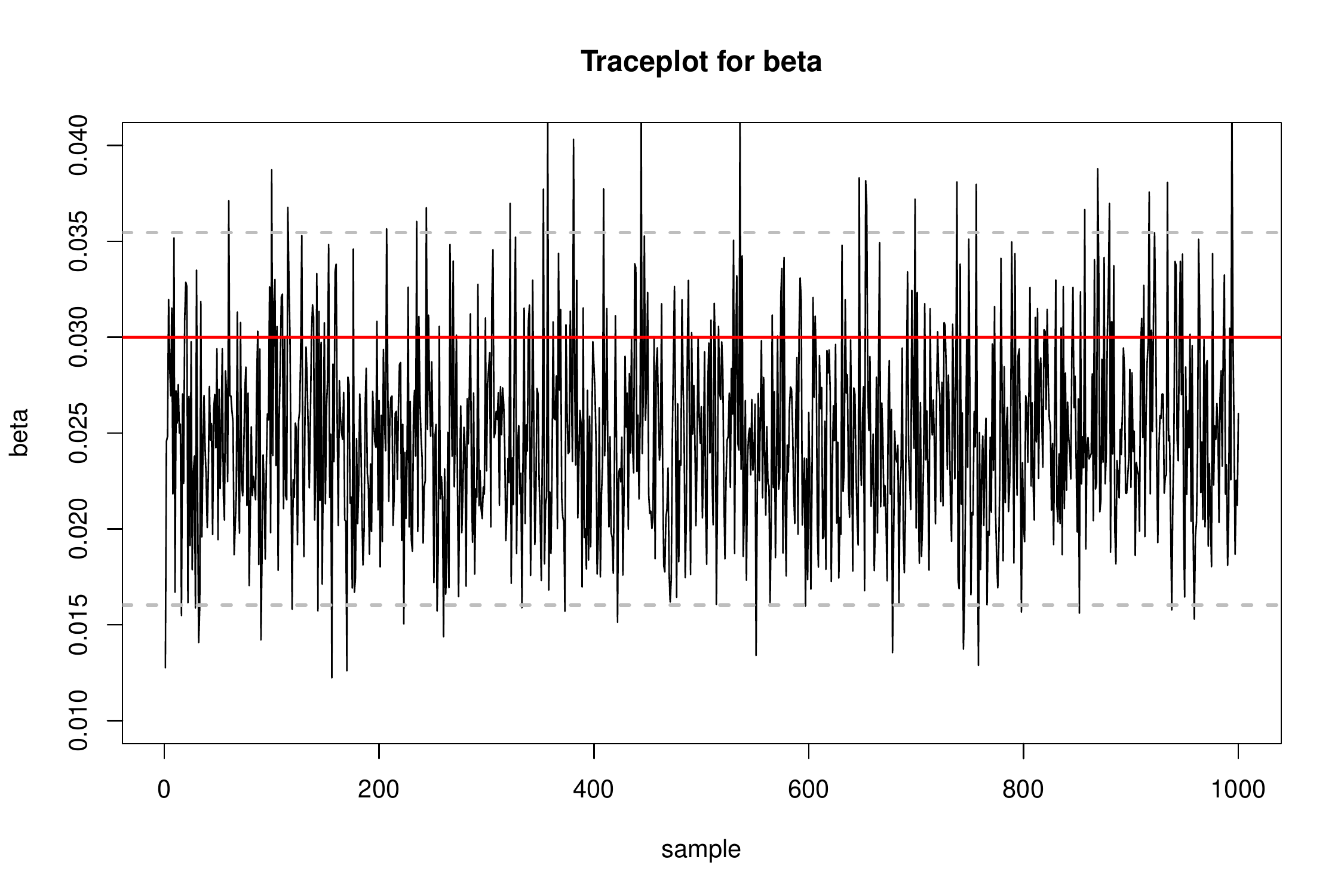}
    \includegraphics[page=2, width=.47\textwidth]{figures/coupled_6_50_try3.pdf}
    \includegraphics[page=3, width=.47\textwidth]{figures/coupled_6_50_try3.pdf}
    \includegraphics[page=6, width=.47\textwidth]{figures/coupled_6_50_try3.pdf}
    \caption{Inference results for parameters $\beta,\gamma,\alpha_{SS},\omega_{SS}$ with 50\% recovery time missingness. The uncertainty in exact recovery times does affect the estimation of the type-dependent edge rates, but not detrimentally (all the true parameter values fall into the 95\% credible intervals of the posteriors).}
    \label{fig:Bayes-miss6-50}
\end{figure}

% changed to a unified y-scale version
\begin{figure}[H]
    \centering
   \includegraphics[page=1, width=.47\textwidth]{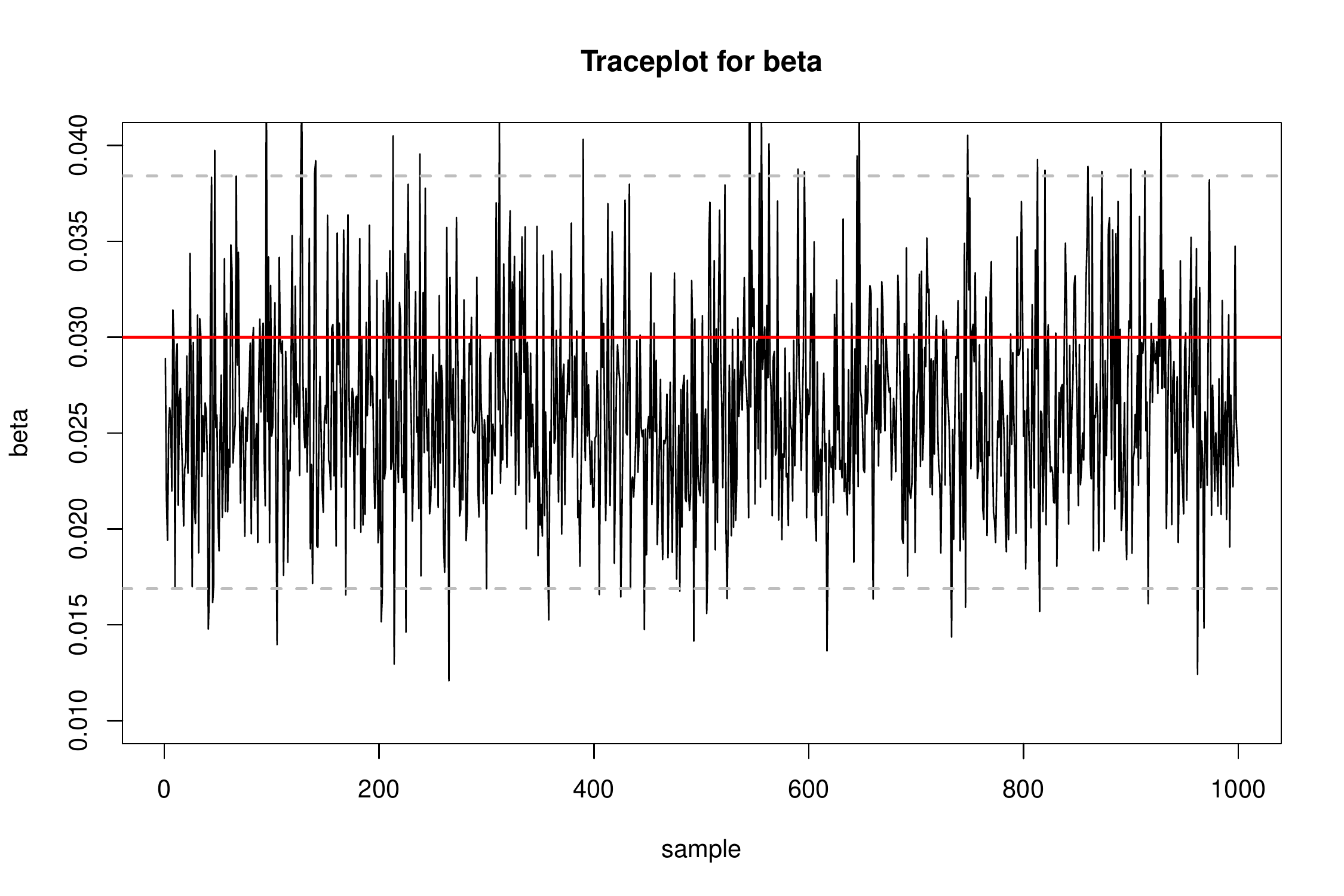}
    \includegraphics[page=2, width=.47\textwidth]{figures/coupled_6_100_try3.pdf}
    \includegraphics[page=3, width=.47\textwidth]{figures/coupled_6_100_try3.pdf}
    \includegraphics[page=6, width=.47\textwidth]{figures/coupled_6_100_try3.pdf}
    \caption{Inference results for parameters $\beta,\gamma,\alpha_{SS},\omega_{SS}$ with complete (100\%) missingness in \added{exact} recovery times. \added{The y-axis scale for each plot is the same as that in Figure~\ref{fig:Bayes-miss6-50} for easier comparison.} %For each parameter, 1000 consecutive MCMC samples are kept (dark solid line) after a 200-iteration burn-in period. 95\% sample credible intervals are bounded by the gray dashed lines, and the true parameter values are marked by red horizontal lines. 
    \added{Wider credible intervals are seen for $\beta$ and $\gamma$, but still, }all the true parameter values fall into the 95\% credible intervals of the posteriors, suggesting the capability of the inference algorithm to estimate parameters even when there is uncertainty in individual epidemic histories.}
    \label{fig:Bayes-miss6-100}
\end{figure}

\paragraph{Efficiency of the DARCI algorithm} 
We compare performance of the data augmentation algorithm stated in Proposition \ref{prop:DARCI-alg} with two  conventional sampling methods:

%\vspace*{-0.1in}
\begin{enumerate}
    \item \textbf{Rejection sampling}: Carry out Step 1 of the inference scheme via rejection sampling. For $\ell = 1:L$, keep proposing recovery times $\added{\mathbf{r}_{\ell}^*=}\{r_{\ell,i}^*\}_{i = 1:R_{\ell}} \stackrel{iid}{\sim} \text{TEXP}(\gamma^{(s-1)}, u_{\ell} , v_{\ell})$ until the proposed \added{$\mathbf{r}_{\ell}^*$} are compatible with the observed event data in $(u_{\ell} , v_{\ell}]$. We label this method by ``\textbf{Reject}''.
    \item \textbf{Metropolis-Hastings}: Modify Step 1 of the inference scheme into a Metropolis-Hastings step. For $\ell = 1:L$, propose %recovery times
    $\added{\mathbf{r}_{\ell}^*}=\{r_{\ell,i}^*\}_{i = 1:R_{\ell}} \stackrel{iid}{\sim} \text{TEXP}(\gamma^{(s-1)}, u_{\ell} , v_{\ell})$, and accept them as \added{$\mathbf{r}_{\ell}^{(s)}$} with probability
    \begin{equation*}
        \min\left(1, \frac{p\left(\mathbf{x}, \added{\mathbf{r}_{\ell}^*, \{\mathbf{r}_{\ell'}^{(s-1)}\}_{\ell' \neq \ell}} |\Theta^{(s-1)}\right) \text{pTEXP}\left(\added{\mathbf{r}_{\ell}^{(s-1)}};\gamma^{(s-1)}, u_{\ell} , v_{\ell}\right)}{p\left(\mathbf{x}, \added{\{\mathbf{r}_{\ell'}^{(s-1)}\}_{\ell'=1:L}}|\Theta^{(s-1)}\right) \text{pTEXP}\left(\added{\mathbf{r}_{\ell}^{*}};\gamma^{(s-1)}, u_{\ell} , v_{\ell}\right)} \right),
    \end{equation*}
    which equals to $1$ when the proposed \added{$\mathbf{r}_{\ell}^*$} are consistent with the observed event data in $(u_{\ell} , v_{\ell}]$ and $0$ otherwise. If the proposal is not accepted, then set \added{$\mathbf{r}_{\ell}^{(s)} = \mathbf{r}_{\ell}^{(s-1)}$}. We label this method by ``\textbf{MH}''. 
\end{enumerate}
The ``\textbf{MH}'' method employs the same principle as existing agent-based data augmentation methods \citep{cauchemez2006s,hoti2009outbreaks, fintzi2017efficient} that propose candidates of individual disease histories and accept them with probabilities computed through evaluating the likelihood (or an approximation of it) and the proposal density. In our case the implementation of ``\textbf{MH}'' is actually simpler and computationally lighter because the proposal is conditioned on known infection times and the current posterior draw of $\gamma$. The acceptance step is reduced to inspecting compatibility with known data, thus avoiding the intensive computation of likelihood evaluation. 

Although the three methods give the same inference results since they all sample from the same posterior distributions, our data augmentation algorithm (labeled by ``\textbf{DARCI}'') is more efficient than the others in two aspects. First, by drawing a new sample of recovery times from the conditional distribution in (\ref{eq:cond-recov-orig}) in each iteration, the resulting Markov chain exhibits lower autocorrelation, which leads to better mixing and fewer iterations needed to achieve a certain effective sample size. This is especially so when compared with ``\textbf{MH}''. Second, the DARCI algorithm %is generally much less time consuming when drawing from the conditional distribution 
\added{typically draws from the conditional distribution (\ref{eq:cond-recov-orig}) much more efficiently than ``\textbf{Reject}''}, because it parses out a configuration of lower bounds for imputing missing recovery times while accounting for the constraints imposed by contagion spreading and the dynamics of social links.

Three MCMC samplers are run using the three methods on the dataset showcased above. Again 1000 consecutive samples are retained for each parameter after a 200-iteration burn-in period in each case. For each parameter, we calculate the effective sample size (ESS), the Geweke Z-score \citep{geweke1991evaluating}, and the two-sided p-value for the Z-scores \added{of the resulting chains}. These results are presented in Table~\ref{tab:MCMC-diag}. Among the three methods, ``\textbf{MH}'' suffers the most from the correlation between two successive samples, while ``\textbf{DARCI}'' seems to produce high-quality MCMC samples. 

\begin{table}[ht]
\centering
%\footnotesize
\scriptsize
\caption{MCMC diagnostics for three data augmentation sampling methods, labelled as ``\textbf{DARCI}'', ``\textbf{Reject}'', and ``\textbf{MH}''. ``ESS'' stands for ``effective sample size''. The ``Z-score'' is the test statistic for MCMC convergence proposed by \cite{geweke1991evaluating}, and the two-sided p-value for each standard Z-score is also computed. Samples acquired by \textbf{MH} tend to have higher auto-correlations and thus smaller effective sample sizes.}
\begin{tabular}{lrrrrrrrrc}
  \toprule
Statistic & $\beta$ & $\gamma$ & $\alpha_{SS}$ & $\alpha_{SI}$ & $\alpha_{II}$ & $\omega_{SS}$ & $\omega_{SI}$ & $\omega_{II}$ & Method \\ 
  \midrule
ESS & 1000.00 & 1000.00 & 1000.00 & 1000.00 & 1000.00 & 1000.00 & 1000.00 & 1000.00 &  \\ 
  Z-score & -0.90 & -0.20 & -0.56 & -1.32 & -0.22 & 0.84 & -0.02 & -1.24 & \textbf{DARCI} \\ 
 $Pr(>|Z|)$ & 0.37 & 0.84 & 0.58 & 0.19 & 0.82 & 0.40 & 0.99 & 0.22 &  \\ 
  \midrule
  ESS & 1000.00 & 1160.17 & 1000.00 & 955.29 & 1000.00 & 1000.00 & 926.63 & 1000.00 &  \\ 
  Z-score & 0.48 & -1.01 & 0.44 & 0.28 & 1.08 & -2.18 & -0.16 & 0.31 & \textbf{Reject} \\ 
  $Pr(>|Z|)$ & 0.63 & 0.31 & 0.66 & 0.78 & 0.28 & 0.03 & 0.87 & 0.76 &  \\ 
  \midrule
  ESS & 566.43 & 1000.00 & 1000.00 & 1000.00 & 538.12 & 907.14 & 729.33 & 1000.00 &  \\ 
  Z-score & -1.25 & -1.83 & -0.48 & -0.59 & -2.09 & -0.24 & -1.52 & -0.57 & \textbf{MH} \\ 
  $Pr(>|Z|)$ & 0.21 & 0.07 & 0.63 & 0.55 & 0.04 & 0.81 & 0.13 & 0.57 &  \\ 
   \bottomrule
\end{tabular}
\label{tab:MCMC-diag}
\end{table}

We then compare ``\textbf{DARCI}'' and ``\textbf{Reject}'' in their running times (see Table~\ref{tab:bench-mark}). A dataset is simulated where there are different numbers of recoveries with unknown times within 5 time intervals. The two sampling methods are applied to draw a set of recovery times for each of those 5 intervals, and over multiple runs, the minimum and median times they take are recorded.  
Although the two methods draw samples from the same conditional distribution, ``\textbf{DARCI}'' tends to take less time than ``\textbf{Reject}'' in one iteration. \added{Further results suggesting scalability to larger outbreaks are available in \cref{app:sim-complete}.} 
% \jxadd{\textbf{let me know if this is convincing to mention here:} Finally, while our focus here is on efficiency, the investigation of alternate MCMC schemes illustrates how our framework can apply to analyze data with other types of missingness. When other quantities such as infection times are missing, a conditional sampler such as DARCI may not be available, but in principle one can propose latent variables analogously within an MH scheme as demonstrated above}.\alex{I'm apprehensive about having this anywhere but the discussion...}

\begin{table}[H]
    \centering
    \small
    \caption{Comparison between the two sampling methods (``\textbf{DARCI}'' and ``\textbf{Reject}'') for imputing missing recovery times. Overall, the DARCI algorithm is more efficient, especially when the number of missing recovery times is relatively large (e.g. Interval 3), or there are special constraints on viable recovery times (e.g. Interval 1, where the observed events suggest that the recovery cannot occur until half way through the time interval).}
    \begin{tabular}{cccccc}
    \toprule
    \multirow{2}{*}{Interval} & \multirow{2}{*}{\#(To recover)} & \multicolumn{2}{c}{Min Time} & \multicolumn{2}{c}{Median Time} \\
     & & \textbf{Reject} & \textbf{DARCI} & \textbf{Reject} & \textbf{DARCI}\\
    \midrule
    1 & 1 & 227$\mu$s & 224$\mu$s & 484$\mu$s & 245$\mu$s \\
    2 & 8 & 285$\mu$s & 287$\mu$s &  563$\mu$s  & 319$\mu$s \\
    3 & 15 & 163$\mu$s & 161$\mu$s  & 279$\mu$s  & 181$\mu$s \\
    4 & 2 & 138$\mu$s & 138$\mu$s & 153$\mu$s & 156$\mu$s \\
    5 & 1 & 133$\mu$s & 133$\mu$s &  146$\mu$s  & 147$\mu$s \\
    \bottomrule
    \end{tabular}
    \label{tab:bench-mark}
\end{table}

\section{Influenza-like-illnesses on a University Campus}
\label{sec:real-experiments}
In this section, we apply the proposed model and inference scheme to a real-world dataset on the transmission of influenza-like illnesses among students on a university campus.

\subsection{Data Overview}
The data we analyze in this section were collected in a 10-week network-based epidemiological study, eX-FLU \citep{aiello2016design}. The study was originally designed to investigate the effect of social intervention on respiratory infection transmission. 590 university students enrolled in the study and were asked to respond to weekly surveys on influenza-like illness (ILI) symptoms and social interactions. 103 individuals further participated in a sub-study in which each study subject was provided a smartphone equipped with an application, iEpi. The application pairs smartphones with other nearby study devices via Bluetooth, recording individual-level social interactions at five-minute intervals.

The sub-study using iEpi was carried out from January 28, 2013 to April 15, 2013 (from week 2 until after week 10). Between weeks 6 and 7, there was a one-week spring break (March 1 to March 7), during which the volume of recorded social contacts dropped noticeably. In our experiments, we use data collected on the $N=103$ sub-study participants from January 28 to April 4 (week 2 to week 10), and treat the two periods before and after the spring break as two separate and independent observation periods ($T_{\text{max}} = 31$ days for period 1 and $T_{\text{max}} = 28$ days for period 2). 

Summary statistics of the data are provided in Table~\ref{tab:realdata-overview}. Overall, infection instance counts peaked in the middle of each observation period and dropped at the end, and the dynamic social network was quite sparse; more activity (in both the epidemic process and network process) was observed in the weeks before the spring break.
Further details on data cleaning and pre-processing are provided in \cref{app: real-data}.

\begin{table}[ht]
    \centering
    \caption{Summary statistics of the real data (processed) by week: number of new infection cases (top row), maximum network density (middle row), and minimum network density (bottom row). No new infection cases took place in week 2, but two participants were already ill at the beginning of the week. The dynamic network remained sparse throughout the duration of the sub-study, except for one instance in week 3---the unusually high network density only occurred on the night of February 4, possibly due to a large-scale on-campus social event.}
    \small
    \begin{tabular}{lccccc}
    \toprule
    Week  & Wk 2 & Wk 3 & Wk 4 & Wk 5 & Wk 6  \\
    \hline
    \#(Infections) &  0  & 3 & 5 & 4 & 2 \\
    Max. Density & 0.0053 & 0.2048 & 0.0040 & 0.0038 & 0.0044 \\
    Min. Density & 0.0000 & 0.0000 & 0.0000 & 0.0002 & 0.0000\\
    %\bottomrule
    \end{tabular}
    \begin{tabular}{lccccc}
    \toprule
    Week  & (break) & Wk 7 & Wk 8 & Wk 9 & Wk 10  \\
    \hline
    \#(Infections) &  N.A.  & 1 & 3 & 5 & 1 \\
    Max. Density & N.A. & 0.0032 & 0.0032 & 0.0032 & 0.0023  \\
    Min. Density & N.A. & 0.0000 & 0.0000 & 0.0000 & 0.0000\\
    \bottomrule
    \end{tabular}
    \label{tab:realdata-overview}
\end{table}

\subsection{Analysis}
\label{sec:realdata-res}
Since the 103 individuals are sub-sampled from the 590 study participants, which are also sub-sampled from the university campus population, we treat the real data as observed on an open population. Following the parametrization introduced at the end of Section~\ref{sec:likelihood-and-inference}, we include the parameter $\xi$ to denote the rate of infection from an external source for each susceptible individual. Every infectious individual that came into contact with any infectives within 3 days prior to the onset of symptoms is regarded as an internal case (governed by parameter $\beta$); otherwise the infection is labeled as an external case (governed by parameter $\xi$).
This enables the inference procedure stated at the end of Section~\ref{sec:missrecov}.%, while making minor adaptations to accommodate the existence of external infection cases. Details on the modified inference scheme are included in Appendix \ref{app:subpopulation-inference}

%To apply our model to the data, slight modifications have to be made. Since the 103 individuals are sub-sampled from the 590 study participants, which are also sub-sampled from the entire population on the university campus, the ``closed population'' assumption implied by the generative model in Sec.~\ref{sec:model-overview} is no longer valid. This directly affects the mechanism of the infection process, where an individual can contract the disease either from another member of the sub-population (an ``internal'' infection) or from an outsider (an ``external'' infection). Therefore, we introduce an additional parameter, $\xi$, to denote the rate of getting infected from an external source for each susceptible individual, and retain the parameter $\beta$ as the rate of disease transmission per S-I link inside the sub-population.

%For every infection event in the data, if the individual came into contact with any infectives within 3 days prior to the onset of symptoms, then we regard it as an internal case, otherwise it is labeled as an external case. Then the inference procedure stated in Sec.~\ref{sec:missrecov} is employed, with minor adaptations to accommodate the existence of external infection cases. Details on the modified inference scheme are included in Appendix \ref{app:subpopulation-inference}.

The data collected during the two observation periods are considered as  independent realizations of the same adaptive network epidemic processes.
%We run the inference algorithm separately on the data collected in the two observation periods. 
For each parameter, 
%in $\Theta = \{\beta, \gamma, \alpha_{SS},\alpha_{SI},\alpha_{II}, \omega_{SS},\omega_{SI},\omega_{II}\}$ and for parameter $\xi$, 
samples drawn in the first 500 iterations are discarded and then every other sample is retained in the next 2000 iterations, resulting in 1000 posterior samples. 
%Table~\ref{tab:realdata-res-Bayes} lists the means and standard deviations of the posterior samples for a selection of parameters estimated from each of the observation periods. 
Table~\ref{tab:realdata-res-Bayes-joint} summarizes the posterior sample means and the lower and upper bounds of 95\% sample credible intervals for a selection of parameters. The output from one chain is presented here; repeated runs (with different initial conditions and random seeds) yield similar results.

% \paragraph{\added{Sensitivity Analysis}}

\added{
The data provide symptom onset times for flu-like illnesses, which can serve as proxies for the actual infection times, but the former are on average 2 days later than the latter} \citep{CDC-flu-facts}.
\added{To address this issue, we assume that the real infection times may be somewhere between 0 and 3 days prior to symptom onset (see Supplement S5.1) and thus randomly sample the latent infection times to generate multiple data versions for inference.  
% Specifically, for each symptomatic individual $i$ with recorded symptom onset time $t_i$, we randomly draw an infection time uniformly on the interval $[t_i - 0, t_i - 3]$, and use that in inference. 
% We assess sensitivity to this procedure by repeating the imputation $10$ times. The ``Multi-SD'' column of Table~\ref{tab:realdata-res-Bayes-joint} summarizes the differences in inference among imputed datasets.
% To assess sensitivity, this imputation procedure is repeated $10$ times to create multiple versions of the data; the standard deviation of parameters posterior sample means across multiple adjustments are also included in Table~\ref{tab:realdata-res-Bayes-joint} (``Multi-SD'' column). 
Results are similar across different versions of data, suggesting that inference is robust to this choice of handling possible latency periods (this is summarized by the ``Multi-SD'' column of Table~\ref{tab:realdata-res-Bayes-joint}).
}

\begin{table}[ht]
    \centering
    \caption{\added{Posterior sample means and 95\% credible intervals of select parameters (first 3 columns) obtained by the Bayesian inference scheme modified from that in Section~\ref{sec:missrecov}. Inference is carried out jointly on the two periods before and after the spring break. The final column summarizes the standard deviations of posterior sample means across $10$ versions of data, in which
    infection times are sampled randomly (and randomly) between $0$ to $3$ days prior to symptom onset.}}
    \begin{tabular}{lrrr|r}
    \toprule
    \multirow{2}{*}{Parameter} & Posterior & 2.5\% & 97.5\% & \multirow{2}{*}{Multi-SD} \\
     &  Mean & Quantile & Quantile &  \\
    \hline %\hline
    $\beta$ (internal infection) &  $0.0695$ & $0.0247$ & $0.1500$ & $0.0074$\\
    $\xi$ (external infection) & $0.00331$ & $0.00208$ & $0.00494$ & $1.797 \times 10^{-5}$\\
    $\gamma$ (recovery) & $0.294$ & $0.186$ & $0.428$ &  $0.0108$\\
    $\alpha_{SS}$ ($S$-$S$ link activation) & $0.0514$ & $0.0499$ & $0.0529$ & $0.0002$ \\
    $\omega_{SS}$ ($S$-$S$ link termination) & $38.26$ & $33.55$ & $40.62$ & $0.2522$ \\
    $\alpha_{SI}$ ($S$-$I$ link activation) & $0.130$ & $0.0785$ & $0.194$ & $0.0097$ \\
    $\omega_{SI}$ ($S$-$I$ link termination) & $53.5$ & $22.5$ & $231.7$ & $31.4092$\\
    \bottomrule
    \end{tabular}
    \label{tab:realdata-res-Bayes-joint}
\end{table}

Our findings suggest that flu-like symptoms spread quite slowly but recoveries are made rather quickly.
\added{For instance, if a susceptible person maintains \textbf{one} infectious contact, then he has a probability of approximately $6.71\%$ to contract infection through such contact \textbf{within one day}, and yet it takes (on average) a little more than 3 days for someone to no longer feel ill after infection. }
% {On average, it takes about 14 days (2 weeks) of contact with one infectious person inside the population for a susceptible individual to start showing symptoms, yet it takes a little more than 3 days for someone to no longer feel ill. }
The external infection force is non-negligible: given the number of susceptibles in the population (typically about 100), the population-wide external infection rate is approximately $100 \times 0.0033 = 0.33$, implying that an external ILI case is expected to occur every other three days. This is a reasonable estimate consistent with having observed 9 external infection cases within $28$ days during the second period.

The inferred link rates reflect an interesting pattern in social interactions in this particular population: individuals are reluctant to establish contact and active contacts are broken off quickly---an average pair of healthy people initiate/restart their interaction after waiting 20 days and then end it after spending less than 40 minutes together. Moreover, it seems that on average a healthy-ill link is activated more frequently than a healthy-healthy link, but the former is also terminated faster---this might be because those students who fell ill in the duration of the study happen to be more socially proactive, but once their healthy social contacts realize they are sick and thus potentially infectious, the contact is cut short to avoid disease contraction. 

It is also notable that the sample 95\% credible intervals for $\beta$, $\alpha_{SI}$ and $\omega_{SI}$ are relatively wide, indicating a high level of uncertainty in the estimation for these parameters. It is challenging to estimate the internal infection rate $\beta$  because dataset contains only 6 cases of internal infection in total (5 in period 1, 1 in period 2), providing limited information on the rate of transmission. Similar issues are present for the estimation of $\alpha_{SI}$ and $\omega_{SI}$; since there were no more than 5 infectious individuals at any given time, network events related to them were few and far between. Moreover, since their exact recovery times are unknown, there is additional uncertainty associated with their exact disease statuses when they activated or terminated social links. Such measure of uncertainty, readily available through stochastic modeling and Bayesian inference, provides valuable insights into the amount of information the data contain and the level of confidence we possess when making conclusions and interpretations. The inference outcomes imply that, for example, the real data sufficiently inform the contact patterns among healthy individuals in this population but are limited toward understanding how long a healthy person and a symptomatic person typically maintain their contact.

\section{Discussion}
\label{sec:discussion}

%In this paper we formulated a stochastic model to describe the interplay of SIR-type epidemic processes on temporal networks and the adaptive dynamics of networks. Upon deriving the complete data likelihood, we developed an inference method equipped with a data augmentation algorithm that efficiently deals with partial epidemic observations. The model and the inference framework are assessed in experiments with simulated data and a real-world dataset. 

This paper has focused on enabling inference for partially observed epidemic processes on dynamic and adaptive networks. We formulated a continuous-time Markov process model to describe the epidemic-network interplay and derived its complete data likelihood. This leads to the design of conditional sampling techniques that enable data augmented inference methods to accommodate missingness in individual recovery times. 

%\subsection{Extensions}
There are several limitations and natural extensions of our model. First, we address the issue of a latency period here by using a sensitivity analysis of the symptom onset time. 
% One issue that we address but do not model \textbf{explicitly} is the latency period: there is always a time lag between the contraction of disease (and beginning of infectiousness) and symptom onset. In our paper, we account for uncertainty in a possible latency period by introducing uniform random variables in addition to reported symptom onset times. 
As infectiousness is the focus of inference, we prefer this approach over, for instance, modeling an additional compartment (i.e. an SEIR model) in favor of model parsimony. We note that the latter  is possible by extending our proposed likelihood framework, but introduces additional parameters that are often hard to identify without additional data directly informative of latency or modeling assumptions regarding the latency period (e.g., non-infectious or less infectious when latent).

A second extension of the model pertains to the handling of other missing data types. Motivated by our case study in which infection-related events (symptom onset) are updated daily, whereas recoveries are only provided at a much coarser resolution (in weekly summaries), our current method focuses on imputing missing recovery times. %but not missing infection times.}

% Another possible limitation of our approach is handling missing recovery times but not missing infection times. This choice is directly motivated by the nature of our dataset: infection-related events (symptom onset) are updated \textbf{daily}, whereas recoveries are only provided in \textbf{weekly} summaries (at a much coarser resolution); 
%Similar settings are abundant in real-world epidemic data (infections are often reported daily but not recoveries in many modern ``instances data''
While it is common in real-world data to focus on new cases  \citep{WHO2003SARS, WHO2020COVID19}, one may be provided with such incidence data at coarser time resolution so that infection times must also be imputed. %However, if both infections and recoveries are observed at the same resolution (say, only weekly) or if infections are observed at a coarser scale than recoveries, 
Our framework applies in principle to such settings where missing infection times should be accounted for explicitly. One may derive analogous conditional samplers to DARCI, or at worst incur a computational tradeoff. Even without access to sampling from the exact conditional distributions, we can replace the Gibbs step to impute infection times by a Metropolis proposal within each iteration of the MCMC scheme \citep{britton2002bayesian}. 

Our contributions leverage network information to avoid a common model misspecification, but the current methods are limited to scenarios where such information is completely informed. %, due to the nature of our example dataset (and that similar datasets are burgeoning with the increased use of mobile devices in recent studies). 
If network dynamics are only partially observed---for instance link events are missing or exist only on a weekly survey basis---our proposed methods do not immediately apply, but can be extended via further data augmentation over unobserved network event times. Doing so falls under the same likelihood-based framework, yet practical challenges related to mixing of the Markov chain may arise due to the increased latent space.  Another viable strategy is to adopt a discrete time model for network evolution that can be seen either as an alternative or an approximation to the link-Markovian jump process we propose. These directions remain open for future work. % we propose by using a discrete Markov chain to describe weekly changes to the ego networks. 

%\subsection{Reinterpreting $R_0$}
We have demonstrated that accounting for changes in the contact structure is critical to accurately estimate disease parameters
such as infection and recovery rates. %$\beta$ and $\gamma$. 
Because our model is 
% can be seen as 
a generalization of existing compartmental models, such rates are consistent with their definitions and interpretation in existing literature. Other quantities such as the basic reproductive number $R_0$, defined as the average number of infections caused by an infectious individual, do not translate as readily \citep{tunc2013epidemics,van2010adaptive}. This is the case when disease and network properties are conflated: the basic reproductive number depends on the product of infection rate $\beta$ and number of contacts. The effect of interventions such as quarantine are often incorporated similarly, for instance by way of a change-point in $\beta$ \citep{ho2018direct}, yet such policies should naturally translate to changes in the contact network rather than the inherent infectivity of the disease. Because our model mechanistically describes the joint dynamics of the network and the disease spread, such phenomena can now be modeled explicitly in terms of network parameters rather than indirectly through disease parameters, leading to more accurate and  interpretable inference.
The proposed framework thus serves as a point of departure to further explore these promising extensions.

\bigskip
\begin{center}
{\large\bf SUPPLEMENTARY MATERIAL}
\end{center}

\begin{description}

\item[Supplementary information:] Supplementary proofs and derivations, inference details on open population epidemics, and more results from simulation experiments and real data experiments. (.pdf file: \texttt{supplement.pdf})

\item[Codes and examples:] R codes for all simulation experiments, accompanied by example synthetic datasets. (Anonymized repository: \url{https://anonymous.4open.science/r/2231b6ae-00aa-414c-9d6f-37c69084e5a0/}) 

\end{description}

%\newpage
%\bibliography{ref}

% comment out to put everything back
%\processdelayedfloats 
%\clearpage

\newpage
\appendix
%\appendixpage

%\begin{appendices}
\begin{center}
    {\Large\bf SUPPLEMENTARY INFORMATION}
\end{center}
\renewcommand{\thesection}{S\arabic{section}}

% comment out the "post..." lines to put everything back in their place
\renewcommand{\thetable}{S\arabic{table}}%
\renewcommand{\thefigure}{S\arabic{figure}}%

\setcounter{table}{0}
\setcounter{figure}{0}
%\setcounter{postfigure}{0}

%\crefalias{section}{Supplement}
\pdfoutput=1

\section{Complete Data Likelihood for SIS-type Contagions}
\label{app:sis-lik}
Here we consider an SIS-type infectious disease to illustrate how our methodology can be applied to similar model classes.  The complete data likelihood can be derived following the same steps in Section~\ref{sec:likelihood-and-inference}. Alternatively, one can slightly modify (\ref{eq:SIR-comp-lik}) to arrive at the complete likelihood for an SIS-type contagion. Since an individual doesn't acquire immunity upon recovery, it is equivalent to setting $H(t)\equiv S(t)$ at any time $t$. Thus the complete data likelihood is
\begin{align} \label{eq:SIS-comp-lik}
    &\mathcal{L}(\beta, \gamma, \tilde\alpha, \tilde\omega|\mathcal{G}_0) \nonumber \\
    =& \gamma^{n_R} \beta^{n_E - 1} \alpha_{SS}^{C_{SS}}\alpha_{SI}^{C_{SI}}\alpha_{II}^{C_{II}}\omega_{SS}^{D_{SS}}\omega_{SI}^{D_{SI}}\omega_{II}^{D_{II}} \prod_{j=2}^n \left[ \tilde{M}(t_j) \left(I_{p_{j1}}(t_j)\right)^{F_j} \right] \nonumber \\
    & \times \exp\left(-\int_0^{T_{\max}}\left[\beta S\kern-0.14em I(t) + \gamma I(t) + \tilde{\alpha}^T \mathbf{M}_{\max}(t) + (\tilde{\omega}-\tilde{\alpha})^T \mathbf{M}(t) \right]dt\right).
\end{align}

\section{Auxiliary Proofs and Derivations}
\label{app:aux-proofs}
\paragraph{Proof for Theorem \ref{thm:MLEs}} 
From (\ref{eq:SIR-comp-lik}), we can obtain the log-likelihood:
\begin{align} \label{eq:SIR-log-lik}
    &\ell(\beta, \gamma, \tilde\alpha, \tilde\omega|\mathcal{G}_0) =\log \mathcal{L}(\beta, \gamma, \tilde\alpha, \tilde\omega|\mathcal{G}_0)  \nonumber \\
    =&  \sum_{j=2}^n \left[\log \tilde{M}(t_j) + F_j \log \left(I_{p_{j1}}(t_j)\right)\right] + n_R \log \gamma + (n_E - 1) \log\beta  \\ 
    & + C_{HH} \log \alpha_{SS} + C_{HI}\log  \alpha_{SI} + C_{II} \log \alpha_{II} + D_{HH} \log\omega_{SS} + D_{HI}\log\omega_{SI} +D_{II}\log \omega_{II} \nonumber \\
    & -\sum_{j=1}^{n} \left[\beta SI(t_j) + \gamma I(t_j) + \tilde{\alpha}^T(\mathbf{M}_{\max}(t_j)-\mathbf{M}(t_j)) + \tilde{\omega}^T\mathbf{M}(t_j) \right] (t_j - t_{j-1}). \nonumber 
\end{align}
Taking partial derivatives of the right hand side of (\ref{eq:SIR-log-lik}) with respect to the parameters and setting them to zero yield the results above.

\section{Relaxing the Closed Population Assumption}
\label{app:subpopulation-inference}
Suppose the observed population is not fully closed, but is a subset of a larger yet unobserved population. Then it is possible for an individual to get infected by an outsider. Let $\xi$ be the ``external infection'' rate, the rate for any susceptible individual to be infected by any external infectious source, then the complete data likelihood is
\begin{align} 
\label{eq:SIR-lik-external-1}
    &\mathcal{L}(\beta, \xi, \gamma, \tilde\alpha, \tilde\omega|\mathcal{G}_0) = p(\text{epidemic events},\text{network events}|\beta, \xi, \gamma, \alpha, \omega,\mathcal{G}_0) \nonumber \\
    =& \gamma^{n_R} %\beta^{n_E - 1}
    \alpha_{SS}^{C_{HH}}\alpha_{SI}^{C_{HI}}\alpha_{II}^{C_{II}}\omega_{SS}^{D_{HH}}\omega_{SI}^{D_{HI}}\omega_{II}^{D_{II}} \prod_{j=2}^n \left[ \tilde{M}(t_j) \left(\beta I_{p_{j1}}(t_j) + \xi\right)^{F_j} \right] \nonumber \\
    & \times \exp\left(-\int_0^{T_{\max}}\left[\beta S\kern-0.14em I(t) + \xi S(t) + \gamma I(t) + \tilde{\alpha}^T\mathbf{M}_{\max}(t) + (\tilde{\omega}-\tilde{\alpha})^T\mathbf{M}(t) \right]dt\right).
\end{align}

MLEs for $\{\gamma, \tilde{\alpha}, \tilde{\omega}\}$ remain unchanged, but estimating $\beta$ and $\xi$ is less straightforward. Let $\ell(\beta, \xi, \gamma, \tilde\alpha, \tilde\omega|\mathcal{G}_0)$ be the log-likelihood, then the partial derivatives of the log-likelihood w.r.t. $\beta$ and $\xi$ are
\begin{align*}
    \frac{\partial \ell}{\partial \beta} &= \sum_{j=2}^n \frac{F_j I_{p_{j1}}(t_j)}{\beta I_{p_{j1}}(t_j) + \xi} - \sum_{j=1}^n SI(t_j)(t_j - t_{j-1}),\\
    \frac{\partial \ell}{\partial \xi} &= \sum_{j=2}^n \frac{F_j}{\beta I_{p_{j1}}(t_j) + \xi} - \sum_{j=1}^n S(t_j)(t_j - t_{j-1}),
\end{align*}
which do not directly lead to closed-form solutions.

Reparameterizing by $\xi = \kappa \beta$ leads to the following partially derivatives
\begin{align}
    \frac{\partial \ell}{\partial \beta} &= \frac{n_E - 1}{\beta} -  \sum_{j=1}^n [SI(t_j)+\kappa S(t_j)](t_j - t_{j-1}), \label{eq:beta-partial-2}\\
    \frac{\partial \ell}{\partial \kappa} &= \sum_{j=2}^n \frac{F_j}{ I_{p_{j1}}(t_j) + \kappa} - \beta \sum_{j=1}^n S(t_j)(t_j - t_{j-1}), \label{eq:kappa-partial}
\end{align}
which are slightly more straightforward in form, and can be solved numerically to obtain the MLEs.

% If $\xi = \kappa \beta$ with the ratio $\kappa$ known (or somehow estimated beforehand), then it is simple to obtain the MLE for $\beta$:
% \begin{equation*}
%     \hat \beta = \frac{n_E - 1}{\sum_{j=1}^n[SI(t_j) + \kappa S(t_j)](t_j - t_{j-1})}.
% \end{equation*}

If, somehow, we have information on which infection cases are caused by internal sources and which are caused by external sources, then we can directly obtain the MLEs and Bayesian posterior distributions for all the parameters. For an infection event $e_j$ (with $F_j=1$), let $\text{Int}_j = 1$ if it is ``internal'' and let $\text{Int}_j = 0$ otherwise. Then the complete data likelihood can be re-written as
\begin{align} 
\label{eq:SIR-lik-external-2}
    &\mathcal{L}(\beta, \xi, \gamma, \tilde\alpha, \tilde\omega|\mathcal{G}_0)\nonumber \\
    =& \beta^{\left(n_E^{\text{int}}-\text{Int}_1\right)} \xi^{\left(n_E^{\text{ext}}-1+\text{Int}_1)\right)} \gamma^{n_R}
    \alpha_{SS}^{C_{HH}}\alpha_{SI}^{C_{HI}}\alpha_{II}^{C_{II}}\omega_{SS}^{D_{HH}}\omega_{SI}^{D_{HI}}\omega_{II}^{D_{II}} \prod_{j=2}^n \left[ \tilde{M}(t_j) I_{p_{j1}}(t_j)^{F_j\text{Int}_j} \right] \nonumber \\
    & \times \exp\left(-\int_0^{T_{\max}}\left[\beta S\kern-0.14em I(t) + \xi S(t) + \gamma I(t) + \tilde{\alpha}^T\mathbf{M}_{\max}(t) + (\tilde{\omega}-\tilde{\alpha})^T\mathbf{M}(t) \right]dt\right),
\end{align}
where $n_E^{\text{int}}$ and $n_E^{\text{ext}}$ are the total numbers of internal and external infection events, respectively. 

Estimation for all parameters remains unchanged except for $\beta$ and $\xi$. Their MLEs are
\begin{equation}
    \hat{\beta} = \frac{n_E^{\text{int}}-\text{Int}_1}{\sum_{j=1}^{n}SI(t_j)(t_j - t_{j-1})},\quad
    \hat{\xi} = \frac{n_E^{\text{ext}}-1+\text{Int}_1}{\sum_{j=1}^{n}S(t_j)(t_j - t_{j-1})},
\end{equation}
and with Gamma priors $\beta \sim Ga(a_{\beta},b_{\beta})$ and $\xi \sim Ga(a_{\xi},b_{\xi})$, their posterior distributions are
\begin{align}
    \label{eq:beta-int-posterior}
    \beta | \{e_j\} &\sim Ga\left(a_{\beta} + (n_E^{\text{int}}-\text{Int}_1), b_{\beta} + \sum_{j=1}^{n}SI(t_j)(t_j - t_{j-1})\right),\\
    \label{eq:xi-ext-posterior}
    \xi | \{e_j\} &\sim Ga\left(a_{\xi} + (n_E^{\text{ext}}-1+\text{Int}_1), b_{\xi}+ \sum_{j=1}^{n}S(t_j)(t_j - t_{j-1})\right).
\end{align}

When there is missingness in recovery times, the Bayesian inference procedure described in Section~\ref{sec:missrecov} can still be carried out, with two slight modifications. First, in the data augmentation step, when drawing missing recovery times in an interval $(u, v]$, the DARCI algorithm ( Prop.~\ref{prop:DARCI-alg}) inspects $\mathcal{I}_p$ only for each $p \in \mathcal{P}^{\text{int}}$, where $\mathcal{P}^{\text{int}}$ is the group of individuals who get \emph{internally} infected during $(u, v]$. Second, in each iteration, parameter values are drawn from the posterior distributions specified in (\ref{eq:posteriors}) except for $\beta$ and $\xi$, for which the posteriors are stated in (\ref{eq:beta-int-posterior}) and (\ref{eq:xi-ext-posterior}), respectively.

\section{Flexible Network Dynamics under the Generative Model}
\label{app:network-behavior}

As stated in the main text, the proposed generative model can accommodate \textbf{arbitrary} initial network structures. Moreover, due to the \textbf{coupled} nature of the epidemic process and the network process, different choices of parameters can lead to a wide variety of network behaviors. 

Here we demonstrate via simulations that, even if the initial network follows a simple, idealized model, it can still evolve and adapt to exhibit characteristics that are no longer simple and idealistic throughout the process. 

Setting the population size to $N=500$, we assume two different models for the initial network: 1) a random Erdős–Rényi graph (``ER''), and 2) a random \textbf{scale-free} network (sampled using the Barabasi-Albert model (``BA'') \citep{barabasi1999emergence}). The ``ER'' model leads to a degree distribution close to Poisson, whereas the ``BA'' model leads to power-law degree distribution.

Using the following parameter settings (same as those in most of our simulations),
\begin{align*}
\label{eq:settings}
    &\beta = 0.03, \gamma = 0.12;\\ 
    &\tilde{\alpha}^T = (\alpha_{SS},\alpha_{SI},\alpha_{II}) =
    (0.005, 0.001, 0.005),\\ 
    &\tilde{\omega}^T = (\omega_{SS},\omega_{SI},\omega_{II}) =
    (0.05, 0.1, 0.05).
\end{align*}
we adopt a dynamic that 1) discourages new $S-I$ connections and thus effectively ``isolates'' super-spreaders probabilistically, and 2) mostly sustains the usual contact activities between $S-S$, $R-R$ and $I-I$ pairs. This, during the process, can lead to a \textbf{bi-modal} degree distribution that obviously deviates from the initial degree distribution. 

Below in Figure~\ref{fig:deg-dist-ER}, we include the empirical degree distributions at the start and the end of a typical simulation, with a random Erdős–Rényi graph as the initial network. Here the red curve represents the actual degree distribution in the simulated population, and (in contrast) the lightblue curve represents the density of a degree sequence sampled from a Poisson with the same mean degree.

For comparison, in Figure~\ref{fig:deg-dist-BA} we include similar plots with the initial network as a \textbf{scale-free} network (sampled using the Barabasi-Albert model). Note that the initial network setting here is similar to that used in \cite{volz2007susceptible}, but our dynamic network model is fundamentally different from (and more flexible) than the neighbor-exchange framework proposed in that work.

We can observe that, once the network dynamics kicks in, the initial network structure doesn't matter that much, as changes in the network links are driven jointly by the parameters and the epidemic dynamics in time.

\begin{figure}[H]
    \centering
    \includegraphics[width=0.47\textwidth,page=1]{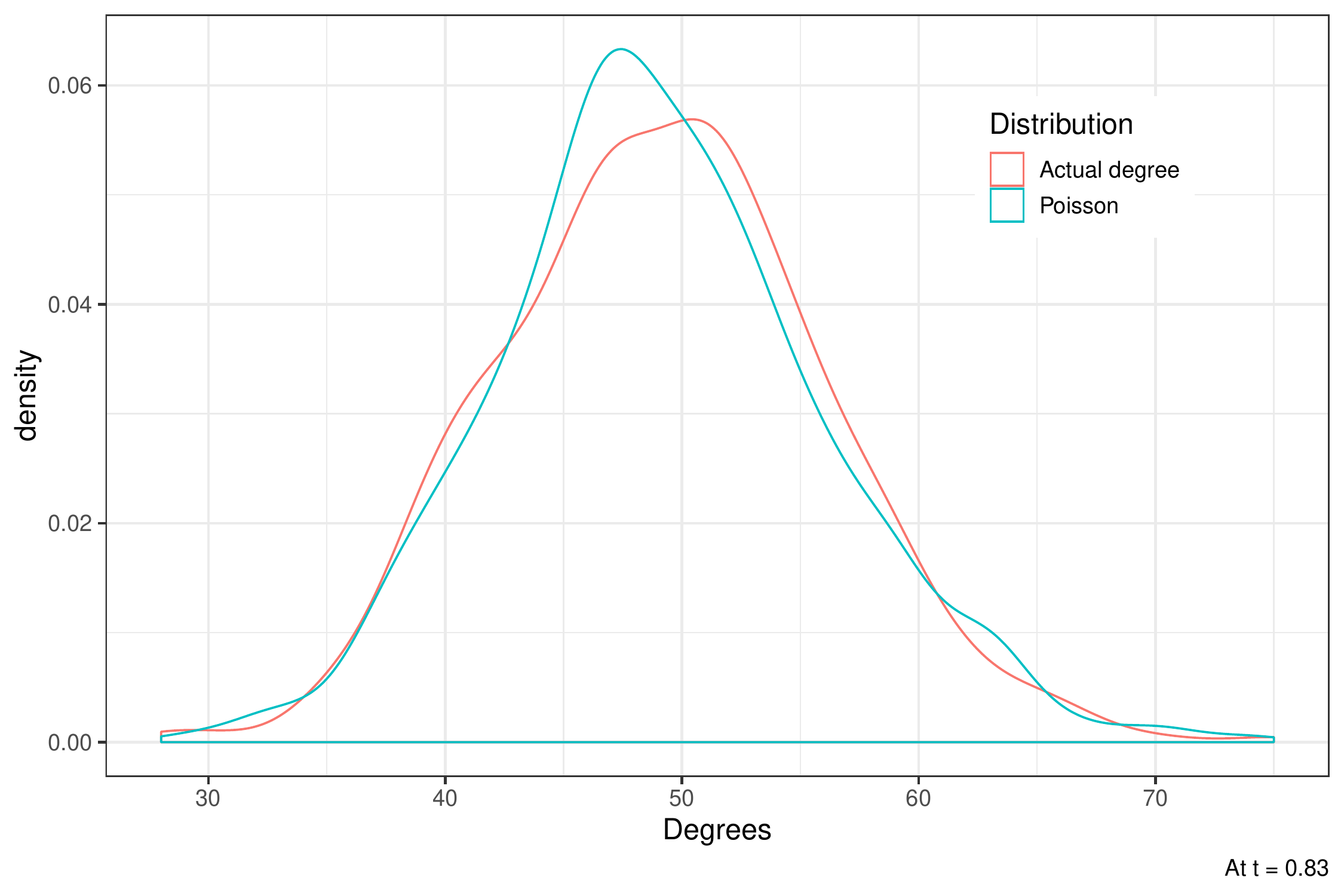}
    \includegraphics[width=0.47\textwidth,page=57]{figures/3_1.pdf}
    \caption{Actual empirical degree distribution in simulation (red) versus empirical degree distribution drawn from Poisson (blue/green). Here the initial network is a random Erdős–Rényi graph.  \textbf{Left: beginning of process; right: end of process.}}
    \label{fig:deg-dist-ER}
\end{figure}

\begin{figure}[h]
    \centering
    \includegraphics[width=0.47\textwidth,page=1]{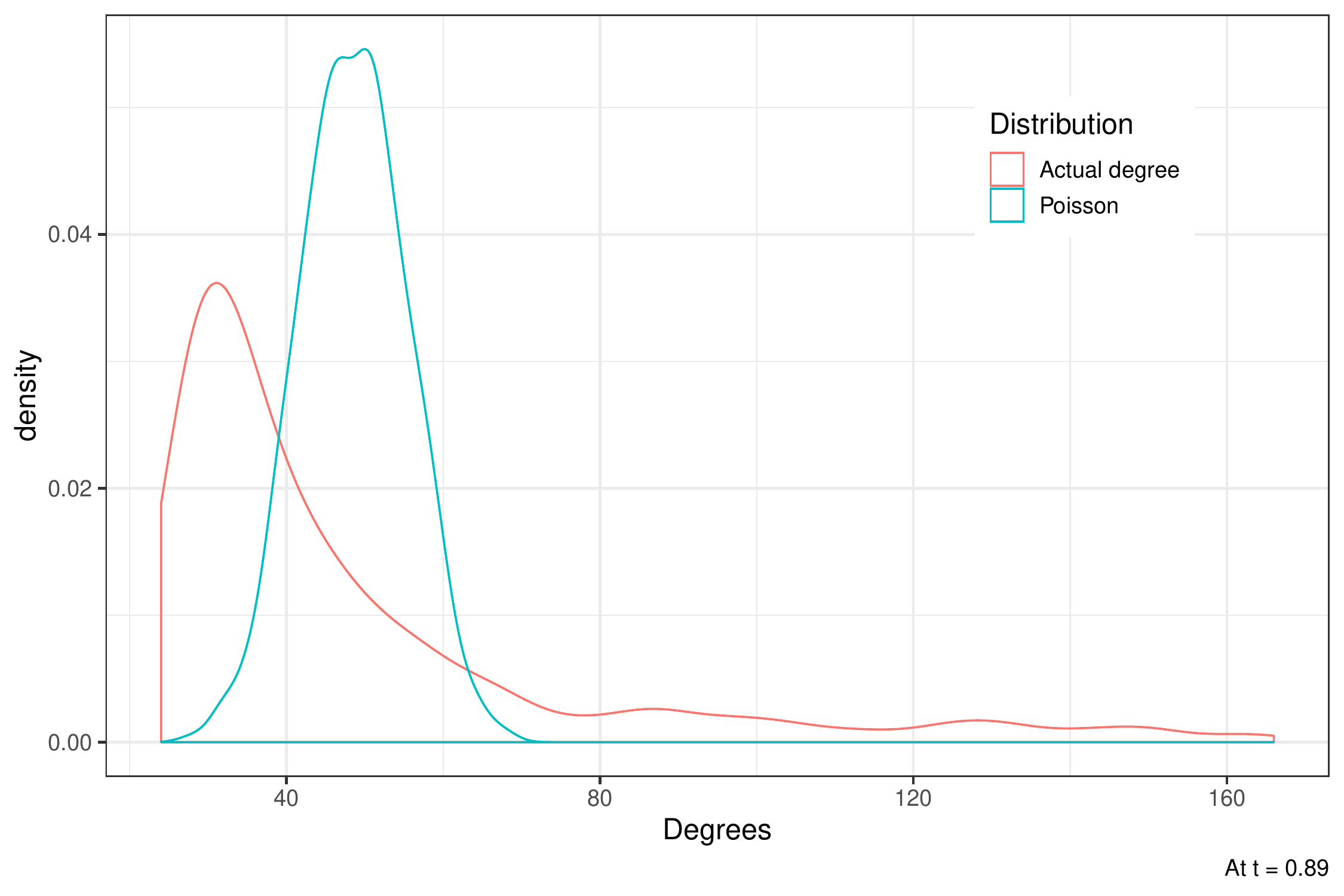}
    \includegraphics[width=0.47\textwidth,page=53]{figures/2_1.pdf}
    \caption{Actual empirical degree distribution in simulation (red) versus empirical degree distribution drawn from Poisson (blue/green). Here the initial network is a random scale-free network (sampled using the Barabasi-Albert model \citep{barabasi1999emergence}). \textbf{Left: beginning of process; right: end of process.}}
    \label{fig:deg-dist-BA}
\end{figure}

%\section{More Experimental Results}
\section{More Results on Simulation Experiments}
\label{app:sim-complete}

\paragraph{Supplement for ``inference from complete event data''}
Figure~\ref{fig:MLEs-complete-CI-app} and \ref{fig:Bayes-complete-multi-app} complement Figure~\ref{fig:MLEs-complete-CI} and \ref{fig:Bayes-complete-multi} in the main text, showing inference results for all the parameters in the corresponding experiments.

\paragraph{Experiments on larger networks}
Figure~\ref{fig:MLEs-coupled-big} shows MLEs and 95\% confidence bands for parameters with complete data generated on a network with $N=500$ individuals. Other experimental settings are the same as those in Section~\ref{sec:infer-complete}. With a larger population, there tends to be more events available for inference, so the accuracy is in fact improved.

\paragraph{Experiments on different initial network configurations} 
Still set population size $N=100$, but instead of a random Erdős–Rényi graph as $\mathcal{G}_0$, the initial network is a ``hubnet'': one individual (the ``hub'') is connected to everyone else in the population while the others form an $ER(N-1, p)$ random graph, with edge probability $p=0.1$. Figure~\ref{fig:Bayes-multi-hubnet} summarizes results of Bayesian inference carried out on complete event data generated in this setting. 

\paragraph{Supplement for ``Assessing model flexibility''} 
Estimate parameters $\Theta$ of the full model on datasets generated from 1) the decoupled temporal network epidemic process with type-independent edge rates, and 2) the static network epidemic process where the network remains unchanged. For both simpler models, fix $\beta = 0.03$ and $\gamma = 0.12$, and for the former model, let link activation rate $\alpha = 0.005$ and termination rate $\omega = 0.05$. Still, set population size $N=100$ and let the initial network be a random Erdős–Rényi graph with edge probability $p=0.1$.

We present, in Figure~\ref{fig:Bayes-multi-decoup}, the results of Bayesian inference on datasets generated from the decoupled process model. Across four different realizations, it can be observed that, the posterior samples of link activation rates ($\alpha_{SS},\alpha_{SI},\alpha_{II}$) concentrate around the same mean, and uncertainty is reduced with more events available for inference. Same can be said about the link termination rates, $\omega_{SS},\omega_{SI},\omega_{II}$. This verifies that the proposed model is indeed a generalization of the aforementioned two simpler processes, and the inference method is able to recover the truth under mild model misspecification.

%% MLE, G0 = ER(N=100) %%
\begin{figure}[H]
    \centering
    \includegraphics[width=0.95\textwidth]{figures/MLE_plots_ex3dat5_1.pdf}
    \includegraphics[width=\textwidth]{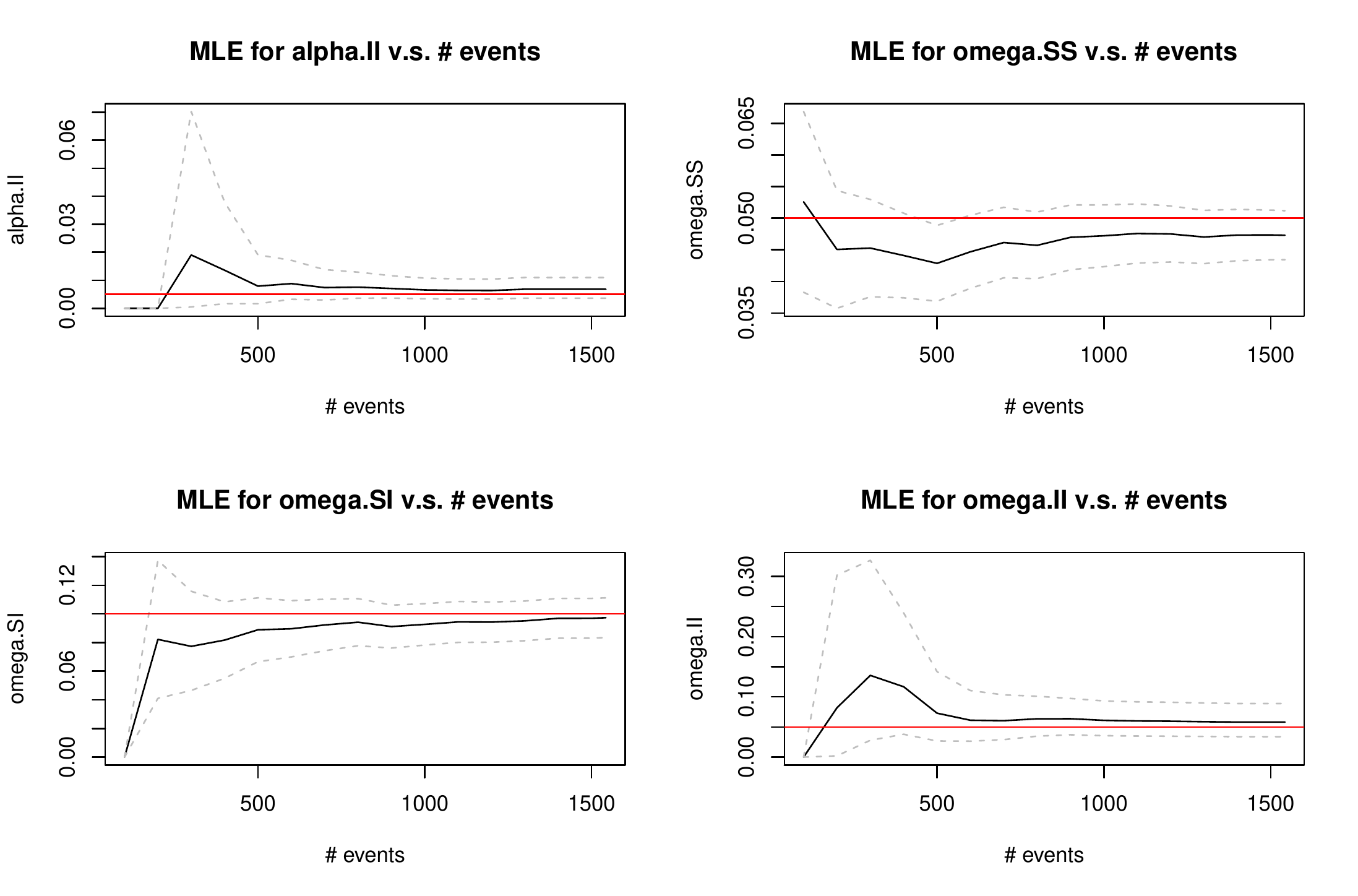}
    \caption{MLEs versus number of events used for inference. Dashed gray lines show the lower and upper bounds for 95\% frequentist confidence intervals, and red lines mark the true parameter values. %In this realization $n_E = = n_R = 48, C_{SS}=621, C_{SI}=35, C_{II}=13, D_{SS}=573, D_{SI}=189, D_{II}=17$.}
    }
    \label{fig:MLEs-complete-CI-app}
\end{figure}

%% Bayesian, G0 = ER(N=100) %%
\begin{figure}[H]
    \centering
    \includegraphics[width=0.49\textwidth,page=1]{figures/ex1_all.pdf}
    \includegraphics[width=0.49\textwidth,page=2]{figures/ex1_all.pdf}
    \includegraphics[width=0.49\textwidth,page=3]{figures/ex1_all.pdf}
    \includegraphics[width=0.49\textwidth,page=4]{figures/ex1_all.pdf}
    \includegraphics[width=0.49\textwidth,page=5]{figures/ex1_all.pdf}
    \includegraphics[width=0.49\textwidth,page=6]{figures/ex1_all.pdf}
    \includegraphics[width=0.49\textwidth,page=7]{figures/ex1_all.pdf}
    \includegraphics[width=0.49\textwidth,page=8]{figures/ex1_all.pdf}
    \caption{Posterior sample means v.s. number of total events used for inference. True parameter values are marked by \textbf{bold dark} horizontal lines, along with 95\% credible bands. Results are presented for 4 different complete datasets.}
    \label{fig:Bayes-complete-multi-app}
\end{figure}

\begin{figure}[H]
    \centering
    \includegraphics[width=0.95\textwidth,page=1]{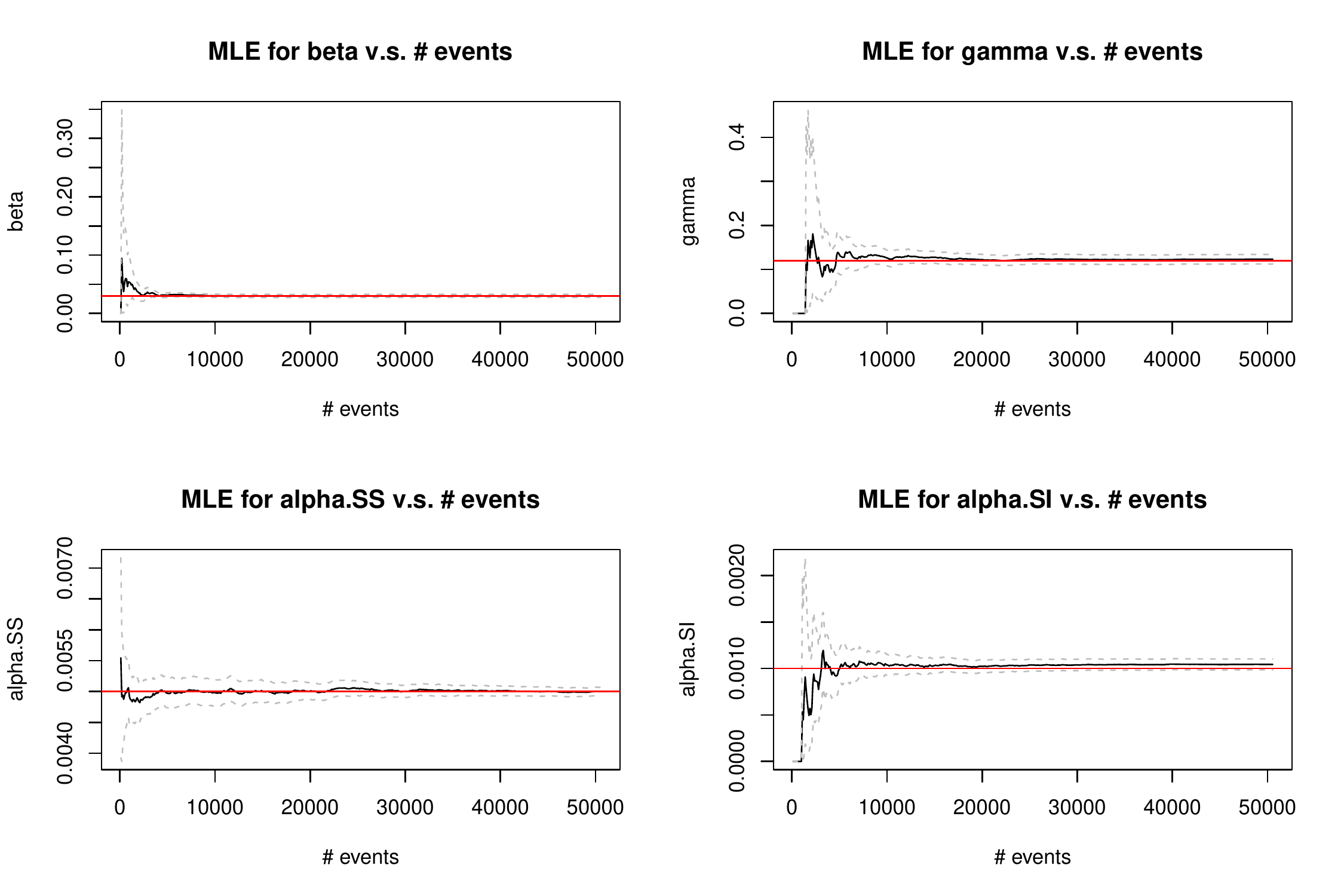}
    \includegraphics[width=0.95\textwidth,page=2]{figures/N500_MLEs.pdf}
    \caption{MLEs versus number of total events, on a larger population with $N=500$. Dashed gray lines show the lower and upper bounds for 95\% confidence intervals, and red lines mark the true parameter values. %In this realization $n_E = 497, n_R = 498, C_{SS}=20770, C_{SI}=1210, C_{II}=1745, D_{SS}=17227, D_{SI}=6855, D_{II}=1747$. 
    With a larger population size, there tends to be more events, which in fact facilitates estimation.}
    \label{fig:MLEs-coupled-big}
\end{figure}

\begin{figure}[H]
    \centering
    \includegraphics[width=0.49\textwidth,page=1]{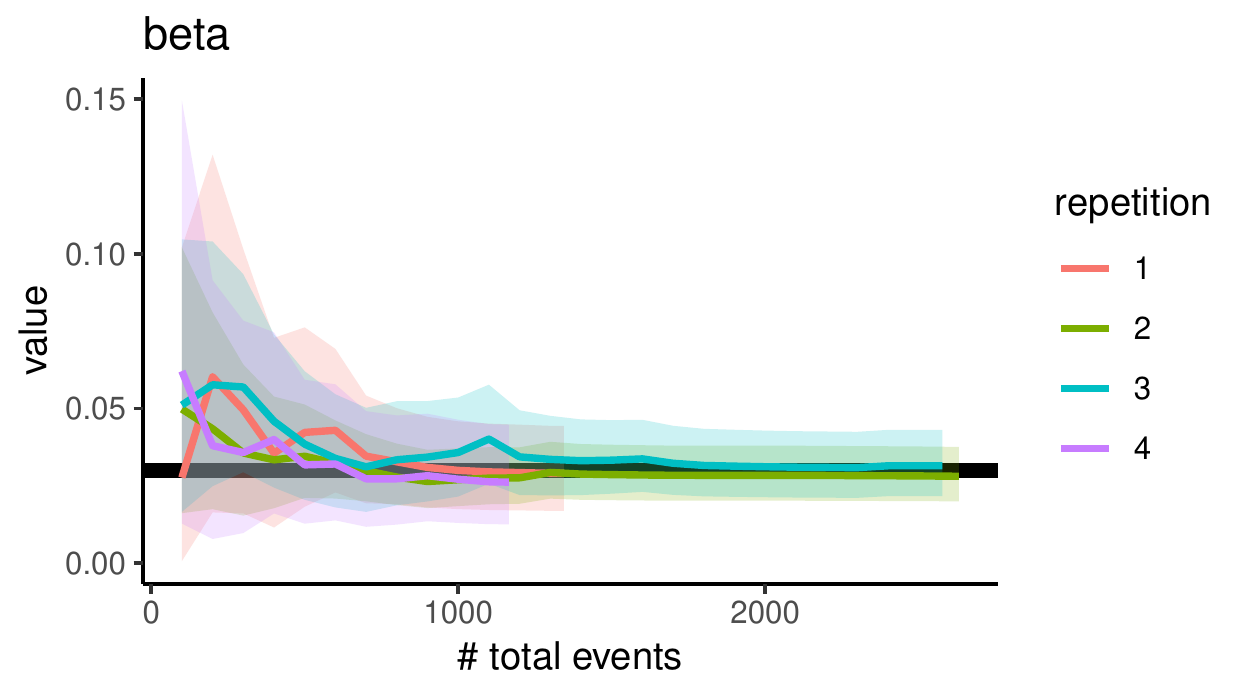}
    \includegraphics[width=0.49\textwidth,page=2]{figures/hubnet_Bayes2.pdf}
    \includegraphics[width=0.49\textwidth,page=3]{figures/hubnet_Bayes2.pdf}
    \includegraphics[width=0.49\textwidth,page=4]{figures/hubnet_Bayes2.pdf}
    \includegraphics[width=0.49\textwidth,page=5]{figures/hubnet_Bayes2.pdf}
    \includegraphics[width=0.49\textwidth,page=6]{figures/hubnet_Bayes2.pdf}
    \includegraphics[width=0.49\textwidth,page=7]{figures/hubnet_Bayes2.pdf}
    \includegraphics[width=0.49\textwidth,page=8]{figures/hubnet_Bayes2.pdf}
    \caption{Posterior sample means v.s. number of total events, with $\mathcal{G}_0$ as a $N=100$-node ``hubnet''. True parameter values are marked by \textbf{bold dark} horizontal lines, along with 95\% credible bands. Results are presented for 4 different complete datasets.}
    \label{fig:Bayes-multi-hubnet}
\end{figure}

\begin{figure}[H]
    \centering
    \includegraphics[width=0.49\textwidth,page=1]{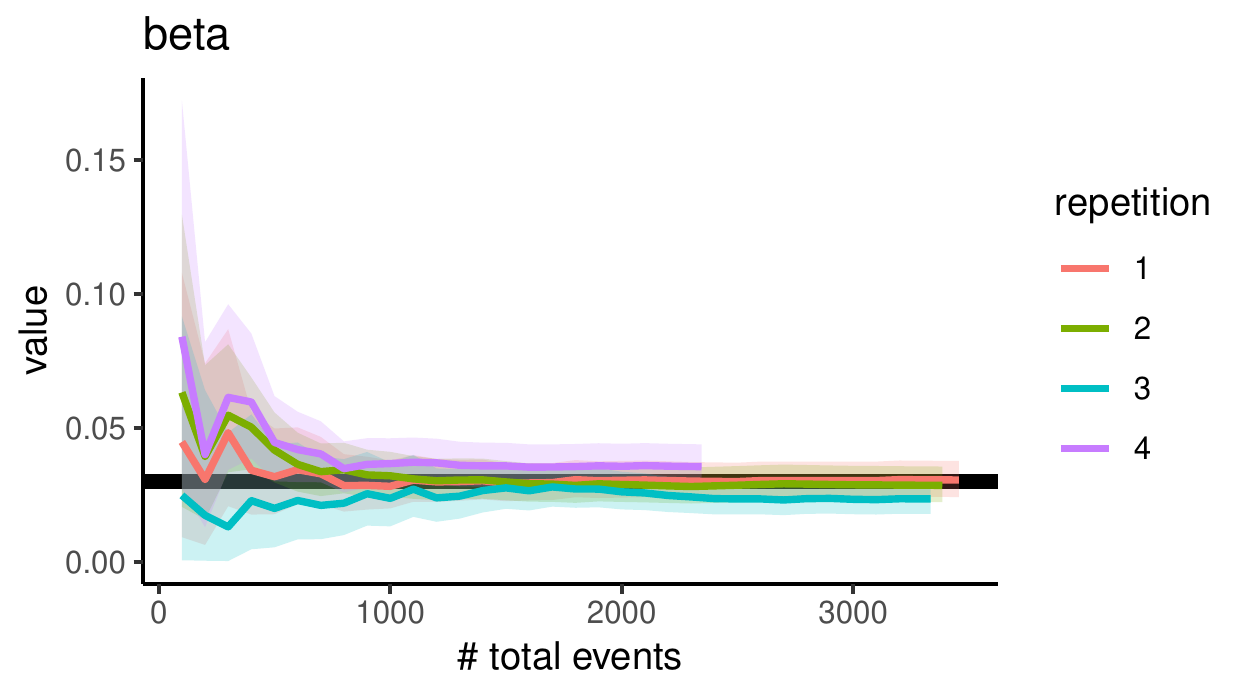}
    \includegraphics[width=0.49\textwidth,page=2]{figures/decoup_Bayes_1.pdf}
    \includegraphics[width=0.49\textwidth,page=3]{figures/decoup_Bayes_1.pdf}
    \includegraphics[width=0.49\textwidth,page=4]{figures/decoup_Bayes_1.pdf}
    \includegraphics[width=0.49\textwidth,page=5]{figures/decoup_Bayes_1.pdf}
    \includegraphics[width=0.49\textwidth,page=6]{figures/decoup_Bayes_1.pdf}
    \includegraphics[width=0.49\textwidth,page=7]{figures/decoup_Bayes_1.pdf}
    \includegraphics[width=0.49\textwidth,page=8]{figures/decoup_Bayes_1.pdf}
    \caption{Posterior sample means versus number of total events, estimated using datasets generated by the decoupled process model. True parameter values are marked by \textbf{bold dark} horizontal lines, along with 95\% credible bands. Results are presented for 4 different complete datasets.}
    \label{fig:Bayes-multi-decoup}
\end{figure}

\paragraph{Scalability of DARCI and the data-augmented inference scheme}
\added{In the main text, most simulations are conducted on a population of size $N=100$, in order to mimic the population size of the real data, but experiments have been carried out on larger networks (for example, with $N=200$ and $N=500$). Here we include some inference results for $N=500$ (see Figure~\ref{fig:post-hist-N500}). }

\begin{figure}[H]
    \centering
    \includegraphics[width=\textwidth]{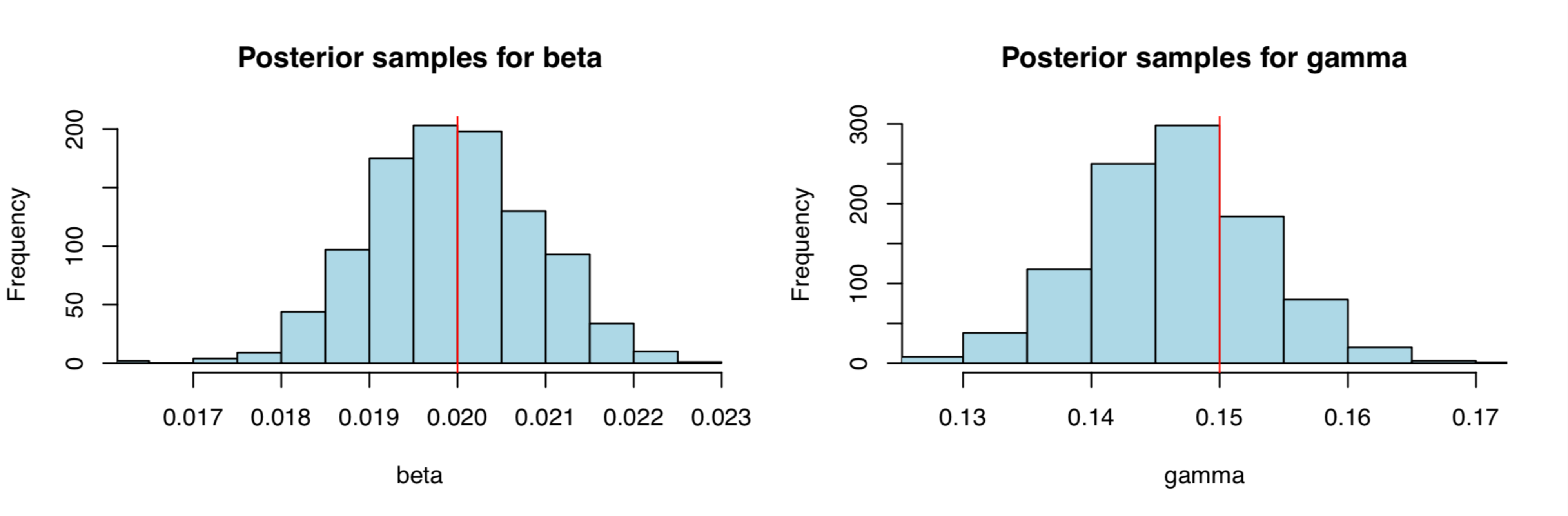}
    \caption{\added{Inference performance with population size $N=500$. Here we show histograms of posterior samples for parameter $\beta$ (infection rate, left panel) and $\gamma$ (recovery rate, right panel); ground truth is represented by red vertical lines. The proposed inference scheme can certainly handle a relatively large population with a lot of events, and can recover parameter values accurately.}}
    \label{fig:post-hist-N500}
\end{figure}

\added{It seems that the inference scheme is effective in recovering parameters and is capable of handling a large population and many events. Our finding is that computing time scales linearly with the total number of events observed. Figure~\ref{fig:run_time} shows the computing time on a single processor for each iteration with population size $N=500$, using a naive implementation in \texttt{R}. Since DARCI is parallelizable across time intervals, a more efficient implementation can further reduce computing time, and the algorithm bottleneck will be the maximum number of recovery events that occur in a time interval; moreover, the computation in one iteration only involves vector operations and random number sampling, which can be significantly sped up in other programming languages if necessary.  }

\begin{figure}[H]
    \centering
    \includegraphics[width=0.8\textwidth]{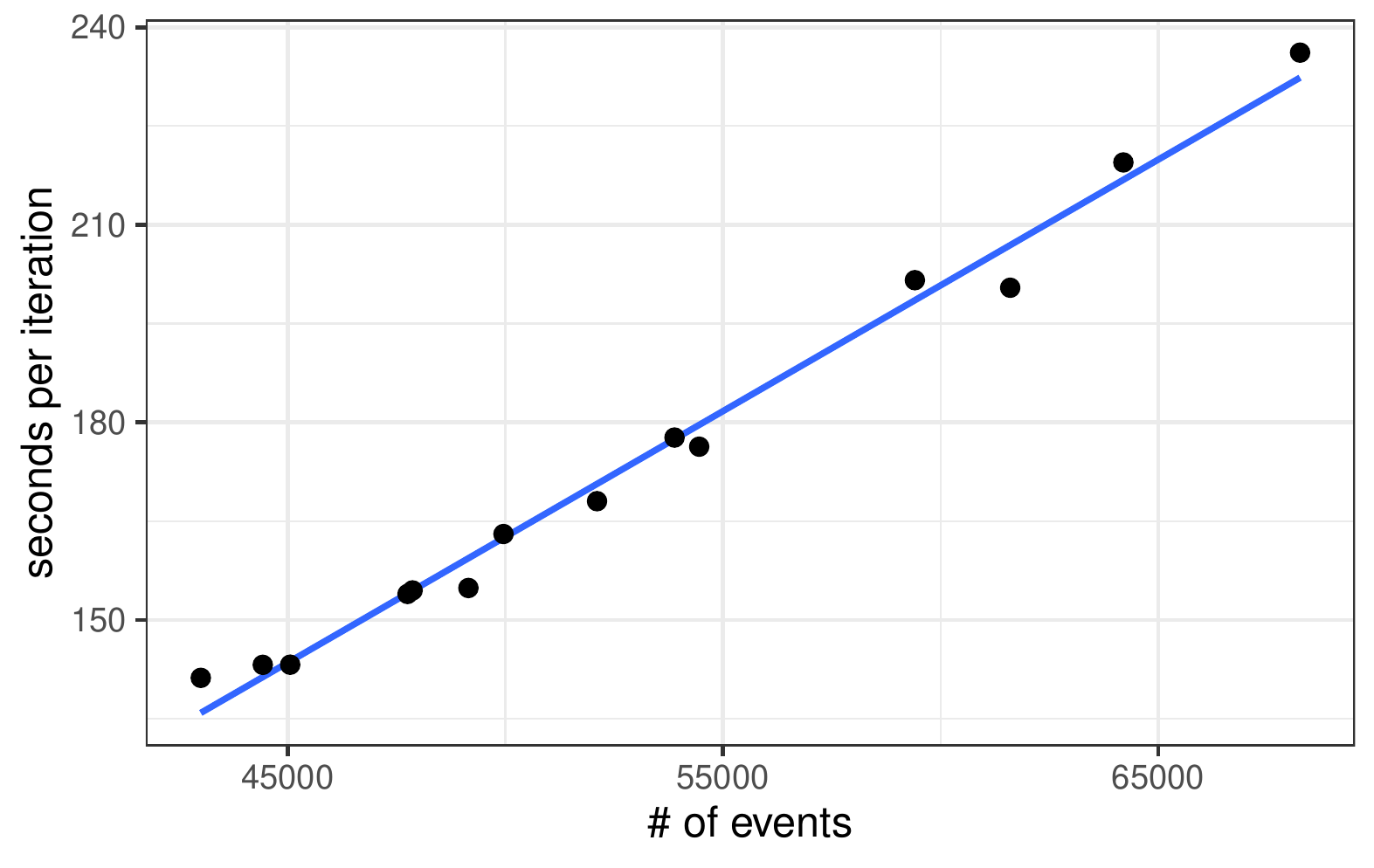}
    \caption{\added{Run time per iteration (in seconds) on \textbf{one processor} versus number of observed events in simulations, for population size $N=500$. Computing time scales linearly with the number of events, and the algorithm can handle data at least at the scale of $10^{4}$ events. }}
    \label{fig:run_time}
\end{figure}

\section{Real Data Experiments}
\label{app: real-data}
\ifthenelse{\boolean{tocut}}{

\subsection{Data Pre-processing}
All infection events and weekly health statuses of all $N=103$ individuals are extracted from the weekly surveys. In every survey, study participants were asked if they ever felt ill at all in the past week, if they ever experienced certain symptoms, and, if there were symptoms, when the approximate illness onset time was. We take an ``infection'' as a positive ILI (influenza-like illness) case, which, following the protocol in \cite{aiello2016design}, is defined as a cough plus at least one of the following symptoms: fever or feverishness, chills, or body aches. We further examine each ILI case and only accept one as a positive infection if the individual also indicated that they ``felt ill'' in the past week, thus eliminating a small number of reoccurring ILI cases for the same participants \footnote{One particular individual had positive ILI cases and felt ill in week 2, 3, and 5, but not in week 4. We therefore treat his/her illness as an extended one, starting in week 2 and lasting till week 5.}. Moreover, since an individual may start exhibiting symptoms at most 3 days \emph{after} getting infected and becoming infectious, for each infection event, we set the ``real'' infection time as the reported onset time minus a random ``delay time'' uniformly sampled between $0$ and $3$ days.

Social link activation and termination events are obtained from the iEpi Bluetooth contact records. Each time two study devices were paired, the iEpi application recorded the unique identifiers of the devices, a timestamp, and a received signal strength indicator (RSSI). Since Bluetooth detection can be activated whenever two devices are within a few meters of each other while the two users may not actually be in contact, we only keep those Bluetooth records with relatively strong signals (high values of RSSIs) \footnote{The RSSIs range from -109 to 6, and we set the threshold as -90, so only those records with RSSIs larger than -90 are kept.}. If two consecutive Bluetooth records for one pair of devices are no more than 7.5 minutes apart in time \footnote{We choose 7.5 minutes as a threshold instead of 5 minutes to accommodate potential lapses in Bluetooth detection.}, then the two records are considered to belong to one single continuous contact; a social link between two individuals is activated at the time of the first Bluetooth detection record in a series of consecutive records that belong to a single contact, and the link is terminated at a random time point between 1 and 6 minutes after the last Bluetooth detection of a continuous contact. 

The resulting processed data contain 24 infection events in total, with 14 before the spring break week and 10 after, as well as 45,760 social link activation and termination events. The weekly disease status (healthy or ill) of every participant can be acquired from the weekly surveys, so we know, for example, if an individual recovered sometime after day $7$ and before day $14$, but the exact times of all recoveries are unknown.

\subsection{Maximum Likelihood Estimation}
}{}
Instead of assuming the knowledge of which infection cases are internal and which are external, we directly estimate all the parameters based on the likelihood function in (\ref{eq:SIR-lik-external-1}), solving (\ref{eq:beta-partial-2}) and (\ref{eq:kappa-partial}) for the MLEs of $\beta$ and $\xi$. 

However, the real data are incomplete, with the exact times of all the recoveries unobserved. We resolve this issue using a naive imputation method---for each recovery, an event time is randomly sampled from a uniform distribution between the time of infection and the earliest time point the individual no longer felt ill (in response to the weekly surveys). Such imputation, of course, is subject to a considerable level of uncertainty, so we randomly generate 10 differently imputed datasets %for each observation period
, obtain the MLEs from every dataset, and then report the averages over the 10 runs (see Table~\ref{tab:realdata-res-MLE}).

We can see that the MLEs acquired in this manner generally agree with the Bayesian estimates in Section~\ref{sec:realdata-res}.

% \begin{table}[ht]
%     %\centering
%     \begin{tabular}{l||c|c|c}
%     \toprule
%     Data used &  $\beta$ & $\xi$ & $\gamma$ \\
%     \hline\hline
%     Joint &  $0.0676\pm0.0092$ & $0.00320\pm 1.11\times 10^{-6}$ & $0.236\pm0.012$\\
%     Period 1 &  $0.0774\pm0.0245$ & $0.00311\pm 3.39\times 10^{-6}$ & $0.220\pm0.018$\\
%     Period 2 & $0.0424\pm0.0065$ & $0.00328\pm7.34\times 10^{-7}$ & $0.259\pm 0.023$ \\
%     %\bottomrule
%     \end{tabular}
%     \begin{tabular}{l||c|c|c|c}
%     \toprule
%     Data used & $\alpha_{SS}$ & $\omega_{SS}$ & $\alpha_{SI}$ & $\omega_{SI}$ \\
%     \hline\hline
%     Joint & $0.0530\pm0.0001$ & $42.15\pm 0.105$ & $0.0704\pm0.0028$ & $52.21\pm 3.83$ \\
%     Period 1 & $0.0707\pm0.0005$ & $47.07\pm 0.31$ & $0.0840\pm0.0362$ & $58.30\pm 10.94$ \\
%     Period 2 & $0.0340\pm 0.0002$ & $34.31\pm0.16$ & $0.0387\pm0.0076$ & $32.83\pm5.08$ \\
%     \bottomrule
%     \end{tabular}
%     \caption{MLEs for select parameters using imputed data with all recovery times randomly sampled. The table presents average estimates as well as the standard deviations of estimates over 10 different, randomly imputed data. Inference is first carried out using data on both observation periods jointly (``Joint''), and then separately on the observation period before the spring break (``Period 1'') and that after (``Period 2'').
%     Results generally agree with those acquired using the proposed Bayesian data augmentation inference method.}
%     \label{tab:realdata-res-MLE}
% \end{table}

\begin{table}[ht]
    \centering
    \caption{MLEs for model parameters using imputed data with all recovery times randomly sampled. The table presents average estimates as well as the standard deviations of estimates over 10 different, randomly imputed datasets. 
    Results generally agree with those acquired using the proposed Bayesian data augmentation inference method.}
    \begin{tabular}{lrr}
    \toprule
    Parameter & Avg. estimate & Std. deviation \\
    \hline %\hline
    $\beta$ (internal infection) &  $0.0676$ & $0.0092$ \\
    $\xi$ (external infection) & $0.00320$ & $1.11\times 10^{-6}$ \\
    $\gamma$ (recovery) & $0.236$ & $0.012$ \\
    $\alpha_{SS}$ ($S$-$S$ link activation) & $0.0530$ & $0.0001$\\
    $\omega_{SS}$ ($S$-$S$ link termination) & $42.15$ & $0.105$\\
    $\alpha_{SI}$ ($S$-$I$ link activation) & $0.0704$ & $0.0028$ \\
    $\omega_{SI}$ ($S$-$I$ link termination) & $52.21$ & $3.83$\\
    \bottomrule
    \end{tabular}
    \label{tab:realdata-res-MLE}
\end{table}
%\end{appendices}

% comment out to put everything back
% \let\cleardoublepage\clearpage
% \processdelayedfloats 

\end{document}